\newtheorem{theorem}{Theorem}
\newtheorem{example}[theorem]{Example}
\newtheorem{lemma}[theorem]{Lemma}
\newtheorem{corollary}[theorem]{Corollary}
\newcommand{\avec}[1]{\boldsymbol{#1}}
\newcommand{\LTL}{\textsl{LTL}}
\newcommand{\D}{\mathcal{D}}
\newcommand{\nxt}{{\ensuremath\raisebox{0.25ex}{\text{\scriptsize$\bigcirc$}}}}
\newcommand{\op}{{\boldsymbol{o}}}
\newcommand{\Abox}{\mathcal D}
\newcommand{\q}{\ensuremath{\boldsymbol{q}}\xspace}
\newcommand{\s}{\boldsymbol{q}}
\newcommand{\sig}{\mathsf{sig}}
\newcommand{\dep}{\mathit{tdp}}
 \let\mathscr\relax% just so we can load this and rsfs
\tikzset{
  basic box/.style = {
    shape = rectangle,
    align = center,
    draw  = #1,
    %fill  = white,
    rounded corners},
  header node/.style = {
    font          = \strut\Large\ttfamily,
    text depth    = +0pt,
    fill          = white,
    draw},
  header/.style = {%
    inner ysep = +1.5em,
    append after command = {
      \pgfextra{\let\TikZlastnode\tikzlastnode}
      node [header node] (header-\TikZlastnode) at (\TikZlastnode.north) {#1}
      %node [span = (\TikZlastnode)(header-\TikZlastnode)]
        %at (fit bounding box) (h-\TikZlastnode) {}
    }
  },
  hv/.style = {to path = {-|(\tikztotarget)\tikztonodes}},
  vh/.style = {to path = {|-(\tikztotarget)\tikztonodes}},
  fat blue line/.style = {ultra thick, blue}
}
\newcommand{\slin}[3]
{
{
\scriptsize
\draw[-, very thick] ($({#1}) +(0,0.09)$) -- node[above=0.1] {#2}  node[below=0.1] {#3} ($({#1}) +(0,-0.09)$);
}
}
\newcommand{\NL}{\textsc{NL}}
\newcommand{\NP}{\textsc{NP}}
\newcommand{\coNP}{\textsc{coNP}}
\newcommand{\PTime}{\textsc{P}}
\newcommand{\NExpTime}{\textsc{NExpTime}}
\newcommand{\Dmc}{\ensuremath{\mathcal{D}}\xspace}
\newcommand{\Q}{\mathcal{Q}}
\newcommand{\Qd}{\mathcal{Q}[\Diamond]}
\newcommand{\Qnd}{\mathcal{Q}[\nxt\Diamond]}
\newcommand{\Qpd}{\mathcal{Q}_p[\Diamond]}
\newcommand{\Qpnd}{\mathcal{Q}_p[\nxt\Diamond]}
\newcommand{\Qint}{\ensuremath{\mathcal{Q}_{\textit{in}}}\xspace}
\newcommand{\sQpnd}{\mathcal{Q}^\sigma_p[\nxt\Diamond]}
\newcommand{\sQd}{\mathcal{Q}^\sigma[\Diamond]}
\newcommand{\sQpd}{\mathcal{Q}^\sigma_p[\Diamond]}
\newcommand{\fr}{\mathcal{F}}
\newcommand{\sep}{\mathsf{s}}
\newcommand{\sepms}{\mathsf{mss}}
\newcommand{\sepmg}{\mathsf{mgs}}
\title{Extremal Separation Problems for Temporal Instance Queries}
 \author{
 Jean Christoph Jung$^1$\and
 Vladislav Ryzhikov$^2$\and
 Frank Wolter$^3$\And
 Michael Zakharyaschev$^{2}$
 \affiliations
 $^1$Department of Computer Science, TU Dortmund University, Germany\\
 $^2$School of Computing and Mathematical Sciences, Birkbeck, University of London, UK\\
 $^3$Department of Computer Science, University of Liverpool, UK\\
 \emails
 jean.jung@tu-dortmund.de, \{v.ryzhikov,m.zakharyaschev\}@bbk.ac.uk, wolter@liverpool.ac.uk
 }
\begin{document}

%\date{}

\maketitle

%%%%%%%%%%%%%%%%%%%%%%%%%%%%%%%%%%%%%%%%%%%%%%%%%%%%%%%%%%%%%

\begin{abstract}
%The separation problem for a class $\mathcal{Q}$ of database queries is to find a query in $\mathcal{Q}$ that separates a given set of `positive' and `negative' data examples. As the space of separating queries can be large, it is of interest to compute logically weakest or strongest separators and determine whether these are unique. We investigate this extremal separation problem for classes of  conjunctive queries in linear temporal logic (\LTL) with the operators `next' and `eventually'\!. Our results range from tight complexity bounds for deciding the existence of such extremal separators to algorithms computing them.
%
The separation problem for a class $\mathcal{Q}$ of database queries is to find a query in $\mathcal{Q}$ that
distinguishes between a given set of `positive' and `negative' data examples. Separation provides explanations of examples and underpins the query-by-example paradigm
to support database users in constructing and refining queries.
As the space of all separating queries can be large, it is helpful to succinctly represent this space
by means of its most specific (logically strongest) and general (weakest) members. We investigate this
extremal separation problem for classes of instance queries formulated in linear temporal logic \LTL{} with the
operators conjunction, `next', and `eventually'\!. Our results range from tight
complexity bounds for verifying and counting extremal separators to algorithms computing them.
% \bigskip
%
% The separation problem for a class $\mathcal{Q}$ of database queries is to find a query in $\mathcal{Q}$ that
% distinguishes between a given set of `positive' and `negative' data examples.
% As the space of all separating queries can be large, it is helpful to succinctly represent this space
% by means of its most specific and general members. We investigate this
% extremal separation problem for classes of instance queries
% \textcolor{red}{formulated in linear temporal logic \LTL{} with the
% operators conjunction, `next', and `eventually'\!.} Our results range from tight
% complexity bounds for verifying and counting extremal separators to algorithms computing them.
\end{abstract}

%%%%%%%%%%%%%%%%%%%%%%%%%%%%%%%%%%%%%%%%%%%%%%%%%%%%%%%%%%%%

\section{Introduction}\label{intro}

The separation (aka fitting or consistency) problem for a class $\mathcal{Q}$ of queries is to find some $\q\in \mathcal{Q}$ that separates a given set $E=(E^{+},E^{-})$ of positive and negative  data examples in the sense that $\D\models \q$ for all $\D\in E^{+}$, and $\D\not\models \q$ for all \mbox{$\D\in E^{-}$}. Separation underpins the query-by-example approach, which aims to support database users in constructing queries and schema mappings with the help of data examples~\cite{DBLP:journals/tods/AlexeCKT11,martins2019reverse}, inductive logic programming~\cite{DBLP:journals/ml/CropperDEM22}, and, more recently, automated feature extraction, where separating queries are proposed as features in classifier engineering~\cite{DBLP:journals/tods/KimelfeldR18,DBLP:journals/jcss/BarceloBDK21}. Separating queries (and, more generally, formulas) also underpin recent logic-based approaches aiming to explain positive and negative data examples given by applications~\cite{DBLP:conf/nesy/SarkerXDRH17,DBLP:conf/aips/CamachoM19,DBLP:conf/tacas/RahaRFN22}.

The space of all separating queries, denoted $\sep(E,\mathcal{Q})$, forms a convex subset of $\mathcal{Q}$ under the containment (or logical entailment) relation $\q \models \q'$ between queries $\q$, $\q'$, and is an instance of the more general  version spaces~\cite{DBLP:journals/ai/Mitchell82}. If finite, $\sep(E,\mathcal{Q})$ can be represented by its extremal elements: the most specific (logically strongest) and most general (logically weakest) separators in $\mathcal{Q}$. %under entailment.
In fact, many known algorithms  check the existence of separators by
looking for a most specific
one~\cite{DBLP:conf/icdt/CateD15,DBLP:conf/icdt/Barcelo017,GuJuSa-IJCAI18,DBLP:conf/ijcai/FunkJLPW19}.
This is not surprising as query classes are often closed under
$\land$, and so the conjunction of all separators gives the unique
most specific one. Dually, the unique most general separator is given
by the disjunction of all separators if the query class is closed
under $\lor$, which is a less common assumption. For the case of first-order queries constructed using $\land$ and existential quantifiers (conjunctive queries or CQs), a systematic
study of extremal separation has recently been conducted in the award winning~\cite{DBLP:conf/pods/CateDFL23}. 

Here, we study extremal separation for temporal instance
queries. Data instances take the form $(\delta_{0},\dots,\delta_{n})$
describing temporal evolutions, where the $\delta_{i}$ are the sets of
atomic propositions that are true at time $i$. Queries are formulated
in the fragment of linear temporal logic \LTL{} with the operators
$\land$, $\nxt$ (next), and $\Diamond$ (eventually). These \emph{$\nxt\Diamond$-queries} and its subclass of \emph{$\Diamond$-queries} without $\nxt$, are obtained by restricting CQs to propositional temporal data and form the core of most temporal query languages proposed in the database and knowledge representation literature~\cite{DBLP:reference/db/ChomickiT18a,BaBL-JWS15,BoLT-JWS15,DBLP:journals/ai/ArtaleKKRWZ21}. We are particularly interested in subclasses
of the classes of $\nxt\Diamond$-and $\Diamond$-queries that are not closed under $\land$ and take the form of path queries:
\begin{equation}\label{dnpath}
	\q = \rho_0 \land \op_1 (\rho_1 \land \op_2 (\rho_2 \land \dots \land \op_n \rho_n) )
\end{equation}
with $\op_i \in \{\nxt, \Diamond\}$ and conjunctions $\rho_i$ of atoms.
The temporal patterns expressed by path queries correspond to common subsequences, subwords, and combinations thereof, which have been investigated
in the string pattern matching literature for more than 50 years. Their
applications range from computational linguistics to bioinformatics and revision control systems~\cite{878178,4609376,DBLP:conf/focs/AbboudBW15,DBLP:journals/cor/BlumDSJLMR21}. In fact, our results can also be interpreted and applied in that research tradition and our techniques combine both logic and automata-based methods with pattern matching. We note that not admitting
$\lor$ in our query languages is crucial for finding this type of patterns and adding it would often trivialise separation. 
\begin{example}\label{exm:1}\em
Suppose the first two sequences of events shown below are `positive'\!, the third one is `negative'\!, and our task is to explain this phenomenon using path queries. The space of \\
\centerline{
\begin{tikzpicture}[nd/.style={draw,thick,circle,inner sep=0pt,minimum size=1.5mm,fill=white},xscale=0.6]
\draw[thick,gray,-] (0,0) -- (3,0);
%\node[nd,label=below:{\scriptsize $0$}, label=above:{\scriptsize $B\ \textcolor{gray}{B'}$}] at (0,0) {};
\slin{0,0}{}{\scriptsize$0$};
\slin{1,0}{\scriptsize$A$}{\scriptsize$1$};
\slin{2,0}{\scriptsize$B$}{\scriptsize$2$};
\slin{3,0}{\scriptsize$C$}{\scriptsize$3$};
\end{tikzpicture}
\
\begin{tikzpicture}[nd/.style={draw,thick,circle,inner sep=0pt,minimum size=1.5mm,fill=white},xscale=0.6]
\draw[thick,gray,-] (0,0) -- (4,0);
%\node[nd,label=below:{\scriptsize $0$}, label=above:{\scriptsize $B\ \textcolor{gray}{B'}$}] at (0,0) {};
\slin{0,0}{}{\scriptsize$0$};
\slin{1,0}{}{\scriptsize$1$};
\slin{2,0}{\scriptsize$A$}{\scriptsize$2$};
\slin{3,0}{\scriptsize$B$}{\scriptsize$3$};
\slin{4,0}{\scriptsize$C$}{\scriptsize$4$};
\end{tikzpicture}
\qquad
\begin{tikzpicture}[nd/.style={draw,thick,circle,inner sep=0pt,minimum size=1.5mm,fill=white},xscale=0.6]
\draw[thick,gray,-] (0,0) -- (4,0);
%\node[nd,label=below:{\scriptsize $0$}, label=above:{\scriptsize $B\ \textcolor{gray}{B'}$}] at (0,0) {};
\slin{0,0}{}{\scriptsize$0$};
\slin{1,0}{\scriptsize$A$}{\scriptsize$1$};
\slin{2,0}{\scriptsize$B$}{\scriptsize$2$};
\slin{3,0}{}{\scriptsize$3$};
\slin{4,0}{\scriptsize$C$}{\scriptsize$4$};
\end{tikzpicture}
}\\
possible explanations includes $\q_1 = \Diamond (A \land \nxt (B \land \nxt C))$, $\q_2 = \Diamond (A \land \nxt \nxt C)$ and $\q_3 = \Diamond (B \land \nxt C)$, all of which are true at $0$ in the positive examples and false in the negative one. In fact, $\q_1$ is the  unique most specific explanation, while $\q_2$ and $\q_3$ are the non-equivalent most general ones.
On the other hand, there exists no explanation in terms of $\Diamond$ only.
\end{example}
The (bounded-size) separator existence problem for various classes $\mathcal{Q}$ of
\LTL-queries has recently been studied
in~\cite{DBLP:conf/icgi/FijalkowL21,DBLP:conf/tacas/RahaRFN22,DBLP:conf/ijcai/FortinKRSWZ23}.
Our aim here is to investigate systematically the separator spaces
$\sep(E,\mathcal{Q})$ by determining the complexity of
verifying and counting most specific and general separators and giving algorithms computing them. On our way, we determine the complexity of entailment (aka containment in database theory) between queries and computing weakening and strengthening frontiers, which are the key to understanding $\sep(E,\mathcal{Q})$.
Intuitively, a weakening/strengthening frontier of $\q\in \Q$ is a set of queries properly weaker/stronger than $\q$ that form a boundary between $\q$ and all of its weakenings/strengthenings in $\Q$.
%Here, a weakening frontier of $\q\in \Q$ is a set $\mathcal{F}$ of $\Q$-queries properly weaker than $\q$ that contains a query stronger than any $\q' \in \Q$ properly weaker than $\q$, there is a $\q''\in \mathcal{F}$ stronger than $\q'$; strengthening frontiers are defined dually.

In detail, we first prove that query containment is in \PTime{} for
the class of $\Diamond$-queries and all of our classes of path queries.
Based on this result, we show that strengthening and weakening frontiers can be computed in polytime for path queries. This is also the case for $\Diamond$-queries and weakening frontiers but not for strengthening ones. It follows that checking whether a path query is a most specific/general separator and whether a $\Diamond$-query is a most general one are both in \PTime{}. In contrast, we establish \coNP-completeness of checking whether a $\Diamond$-query is a most specific separator and whether a path query is the unique most specific/general one. Using frontiers, we show for path queries that  the existence of unique most specific/general separators is in the complexity class US (for which unique SAT is complete)
and that counting the number of most general/specific separators is in $\sharp$\PTime. We show that these upper bounds are tight, sometimes  using the rich literature on algorithms for sequences (e.g., longest common subsequences). Our lower bounds mostly require only a bounded number of atomic propositions and an unbounded number of either  negative or positive examples.

%A table summarising the complexity results is given in Section~\ref{sec:complexity}.

These complexity results are complemented with algorithms for computing extremal separators in the majority of our query classes. The algorithms use a graph encoding of the input example set, associating (extremal) separators with certain paths in the graph. Complexity-wise, they are optimal, running in polytime if the number of examples is bounded and exponential time otherwise.

%These complexity results are complemented with algorithms computing
%(unique) most general/specific separators,\nz{$\Qpnd$} and also
%shortest/longest path separators.\nz{not mentioned so far?} Our
%algorithms are tractable\nj{rather ``run in polynomial time''} if the number of positive and negative examples bounded. The\nz{do we need this?} algorithms can be potentially used to implement systems for use-cases where the example numbers are small.

%\textcolor{blue}{Omitted details and proofs are available in arXiv.}\nz{??}

%**********

\subsection{Related Work}

Being inspired by the investigation of extremal separation for first-order conjunctive queries (CQs)~\cite{DBLP:conf/pods/CateDFL23}, our results turn out to be very different.
For instance, while separability by CQs is \NExpTime-complete and separating queries are exponential in the size of the examples (the extremal ones even larger), the extremal separation problems for \LTL-queries are often complete for SAT-related complexity classes, with separating queries being of polynomial size. For work on separation in the query-by-example paradigm we refer the reader to~\cite{DBLP:conf/sigmod/ZhangEPS13,DBLP:conf/pods/WeissC17,kalashnikov2018fastqre,deutch2019reverse,DBLP:conf/icdt/StaworkoW12,DBLP:conf/icdt/Barcelo017,DBLP:journals/tods/CohenW16,DBLP:conf/www/ArenasDK16} in the database context and to~\cite{GuJuSa-IJCAI18,DBLP:conf/gcai/Ortiz19,DBLP:conf/cikm/CimaCL21,DBLP:journals/ai/JungLPW22} in the context of KR.

Our contribution is also closely related to work on synthesising \LTL{}-formulas that explain the positive and negative data examples coming from an application~\cite{lemieux2015general,DBLP:conf/fmcad/NeiderG18,DBLP:conf/aips/CamachoM19,DBLP:conf/tacas/RahaRFN22,DBLP:conf/kr/FortinKRSWZ22,DBLP:conf/ijcai/FortinKRSWZ23}.
While concerned with separability of temporal data instances and, in particular, separability by \LTL-formulas of small size, the separator spaces $\sep(E,\mathcal{Q})$ themselves have not yet been investigated in this context.
%The\nz{changed} separator spaces $\sep(E,\mathcal{Q})$ provide a more general setting in this context (say, for the search of most concise   explanations)  but have not been investigated so far.

\section{Data and Queries in \LTL}\label{prelims}

Fix some countably-infinite set of unary predicate symbols, called \emph{atoms}.
A \emph{signature}, $\sigma$,  is any finite set of atoms. A (\emph{temporal}) \emph{data instance} is any finite set  $\Abox \ne \emptyset$ of \emph{facts} $A(\ell)$ with an atom $A$ and a \emph{timestamp} $\ell \in \mathbb N$, saying that $A$ happened at $\ell$. The \emph{size} $|\Abox|$ of $\Abox$ is the number of symbols in it, with the timestamps given in \emph{unary}. Let $\max\Abox$ be the maximal timestamp in $\Abox$. Where convenient, we also write $\Abox$ as the word $\delta_0 \dots \delta_{\max\Abox}$ with $\delta_i = \{A  \mid A(i) \in \Abox\}$. The \emph{signature} $\sig(\Abox)$ of $\Abox$ is the set of atoms occurring in it.

We query data instances by means of \LTL-formulas, called
\emph{queries}, that are built from atoms (treated as propositional
variables) and the logical constant $\top$ (truth) using $\land$ and the temporal operators $\nxt$ (next time) and $\Diamond$ (sometime in the future). We consider the following classes of queries:
\begin{description}\itemsep=0pt
\item[$\Qnd$:] all $\nxt\Diamond$-\emph{queries};

\item[$\Qd$:] all $\Diamond$-\emph{queries} (not containing $\nxt$);

\item[$\Qpnd$:] \emph{path $\nxt\Diamond$-queries} of the form~\eqref{dnpath}, where the conjunctions of atoms $\rho_i$ are often treated as sets and the empty conjunction as $\top$;

\item[$\Qpd$:] all \emph{path $\Diamond$-queries} (not containing $\nxt$);

\item[$\Qint$:] \emph{interval-queries} of the form~\eqref{dnpath} with $\rho_{0}=\top$, $\rho_1 \ne \top$, $\op_{1}=\Diamond$, and $\op_{i}=\nxt$, for $i>1$.
\end{description}
Queries in $\Qint$ single out an interval of a fixed length starting
at some time-point $\geq 1$; $\q_{1}$--$\q_{3}$ from
Example~\ref{exm:1} are in $\Qint$.
$\Q^\sigma$ is the restriction of a class $\Q$ to a signature $\sigma$.
The \emph{temporal depth} $\dep(\q)$ of $\q$ is the maximum number of nested temporal operators in $\q$. The \emph{signature} $\sig(\q)$ of $\q$ is the set of atoms in $\q$; the \emph{size} $|\q|$ of $\q$ is the number of symbols in it.

The \emph{truth-relation} $\D,n \models \q$---saying that $\q$ is true in $\D$ at moment $n \in \mathbb N$---is defined as usual in temporal logic under the \emph{strict semantics}: $\D,n \models \top$ for all $n \in \mathbb N$; $\D,n \models A_i$ iff $A_i(n) \in \D$; $\D,n \models \nxt\q'$ iff $\D,n+1 \models \q'$; and $\D,n \models \Diamond \q'$ iff $\D,m \models \q'$, for some $m > n$.
A data instance $\D$ is called a \emph{positive example} for a query $\q$ if $\D,0 \models \q$; otherwise, $\D$ is a \emph{negative example} for $\q$.
Checking whether $\D$ is a positive (negative) example for our queries
$\q$ can obviously be done in polytime in $|\D|$ and
$|\q|$.

We write $\q \models \q'$ if $\D,0 \models \q$ implies $\D,0 \models \q'$ for all instances $\D$, and $\q \equiv \q'$ if $\q \models \q'$ and $\q' \models \q$, in which case $\q$ and $\q'$ are \emph{equivalent}. For example, for any query $\q$, we have $\Diamond\nxt\q \equiv \nxt\Diamond \q \equiv \Diamond\Diamond \q$. It follows that every path query in $\Qpnd$ is equivalent to a query of the form~\eqref{dnpath}, in which $\rho_n \ne \top$ and whenever $\rho_i = \top$, $0 < i < n$, then $\op_i = \op_{i+1}$; in this case we say that $\q$ is in \emph{normal form}. Unless indicated otherwise, we assume all path queries to be in normal form.

%*****************

\smallskip
\noindent
\textbf{Sequences.} There is a close link between evaluating path queries and algorithms for finding patterns in strings \cite{DBLP:books/daglib/0020103}. A \emph{sequence} is a data instance $\D=\delta_{0}\dots\delta_{n}$  with $\delta_{0}=\emptyset$ and $|\delta_{i}| =1$, for $i >0$; a \emph{sequence query} is a path query of the form \eqref{dnpath} with $|\rho_{0}|= 0$ and $|\rho_{i}| = 1$, for $i>0$. Querying sequences using sequence queries corresponds to the following matching problems:
\begin{itemize}\itemsep=0pt
\item[--] for any sequence query $\q \in \Qpd$ of the form \eqref{dnpath}, we have $\D,0\models \q$ iff $\rho_{1}\dots\rho_{m}$ is a \emph{subsequence} of $\D$;
		
\item[--] for any sequence query $\q \in \Qint$ of the form~\eqref{dnpath}, we have $\D,0\models \q$ iff $\rho_{1}\dots\rho_{m}$ is a \emph{subword} of $\D$.
\end{itemize}

%************

\section{Query Containment}\label{sec:cont}

The \emph{query containment problem} for a class $\Q$ of queries is to
decide whether $\q \models \q'$, for any given $\q,\q' \in \Q$. In
contrast to conjunctive queries, where query containment is
\NP-complete~\cite{Chandra&Merlin77}, query
containment is tractable for the majority of query classes defined
above:

% for the majority of the \LTL{}  languages defined above, we obtain:

\begin{theorem}\label{thm:containment}
The query containment problems for $\Qpnd$, $\Qd$ \textup{(}and their subclasses\textup{)} are all in \PTime{}.
\end{theorem}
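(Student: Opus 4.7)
The plan is to give, for each of the two classes, a structural characterization of containment that reduces to a tractable homomorphism/matching problem, and then decide it by dynamic programming.

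For $\Qd$ (and its subclass $\Qpd$) I would associate with each query $q$ a labeled rooted tree $T_q$: the root carries the conjunction of top-level atoms of $q$, and each top-level $\Diamond$-conjunct becomes a child subtree obtained recursively. My claim would be that $q \models q'$ iff there exists a map $h$ from nodes of $T_{q'}$ to nodes of $T_q$ sending the root to the root, mapping every label to a superset label, and sending every parent-child pair of $T_{q'}$ to a pair $(h(u), h(v))$ where $h(v)$ is a \emph{proper} descendant of $h(u)$ in $T_q$. Soundness is immediate: any model $\D$ of $q$ yields an embedding of $T_q$ into $\D$ respecting $<$, and composition with $h$ gives the required witness for $q'$. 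Completeness is the main obstacle: given the absence of such an $h$, I would build a model of $q$ by DFS-unfolding $T_q$ into a linear sequence in which distinct tree nodes sit at distinct time-points separated by empty labels; any witness for $q'$ in such a carefully spread unfolding would induce the missing homomorphism. The homomorphism check itself is polynomial via the DP table $\mathrm{match}[v][w]$ with recurrence ``$\mathrm{label}(v) \subseteq \mathrm{label}(w)$ and for every child $v'$ of $v$ there is a proper descendant $w'$ of $w$ with $\mathrm{match}[v'][w']$.''

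For $\Qpnd$ each path query $q = \rho_0\, \op_1\, \rho_1 \cdots \op_n\, \rho_n$ in normal form is represented as a linear pattern with node labels $\rho_i$ and edge labels $\op_i \in \{\nxt, \Diamond\}$ (rigid versus flexible). I would show that $q \models q'$ iff there is a strictly order-preserving map $f : \{0, \dots, m\} \to \{0, \dots, n\}$ with $f(0) = 0$ such that $\rho'_j \subseteq \rho_{f(j)}$ for all $j$ and, whenever $\op'_j = \nxt$, one has $f(j) = f(j{-}1) + 1$ together with $\op_{f(j)} = \nxt$ (so a rigid step in $q'$ is realized by an adjacent rigid step in $q$). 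Soundness follows by pushing $f$ forward into any model of $q$. For completeness I would pad each $\Diamond$-edge of $q$ with a large number of empty time-points: in such a stretched model of $q$, any witness for $q'$ is forced to align its $\nxt$-obligations with an actual $\nxt$-edge of $q$, and therefore yields the desired $f$. The existence of $f$ is then decided by a two-dimensional DP over $(j, i)$ with $i = f(j)$ in time $O(nm)$.

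Putting the two characterizations together gives polynomial-time containment checks for $\Qd$ and $\Qpnd$, and a fortiori for their subclasses (including $\Qpd$ and $\Qint$), which proves the theorem. The principal technical work lies in the completeness directions of the two characterizations, where one must choose the "stretching" of the constructed model carefully enough so that no spurious homomorphism or matching is accidentally enabled; the soundness directions and the DP-based algorithms are routine once the characterizations are in place.
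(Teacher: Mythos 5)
Your two characterizations are essentially the right ones (they coincide with the paper's Lemma~\ref{lem:containment}, and with Lemma~\ref{lem:decomp} after normalisation), and the DP algorithms are unproblematic; but both completeness arguments---which you rightly identify as the technical core---break down, and for $\Qd$ the characterization itself is not correct as stated. First, for $\Qd$ the tree-homomorphism criterion on raw syntax trees is false: $\Diamond(A\wedge\Diamond B)\wedge\Diamond(A\wedge\Diamond C)$ and $\Diamond(A\wedge\Diamond B\wedge\Diamond C)$ are equivalent over linear orders, yet there is no homomorphism from the branching tree of the latter into the tree of the former (the $B$- and $C$-nodes would both need proper descendants of a single $A$-leaf's image). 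One must first rewrite both queries into the normal form $\rho\wedge\q_1\wedge\cdots\wedge\q_n$ with each $\q_i$ a path query, as the paper does citing prior work; only then does the criterion become ``each path conjunct of $\q'$ maps into a single branch of $\q$''\!. Second, even after normalisation, a single DFS-unfolding of the tree of $\q$ is not a counter-model: linearisation makes nodes from different branches comparable, so $\q'$ can be matched \emph{across} branches. Concretely, any unfolding of $\Diamond A\wedge\Diamond B$ into $\emptyset\cdots\{A\}\cdots\{B\}$ (in either order, with any padding) satisfies one of $\Diamond(A\wedge\Diamond B)$, $\Diamond(B\wedge\Diamond A)$, although neither is entailed; the genuine counter-model $\emptyset\{A,B\}$ is not an unfolding at all. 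The paper's proof of Lemma~\ref{lem:decomp} instead builds a $\q'$-dependent interleaving of prefixes of the branches, governed by functions $f_i$ recording how far $\q'$ can be matched into each branch; no single canonical model of $\q$ refutes all non-entailed $\q'$.

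For $\Qpnd$, padding every $\Diamond$-edge with empty time points does force $\nxt$-obligations to align, but it simultaneously lets $\q'$ realise \emph{extra} $\Diamond$-steps inside the padding, so the stretched model can satisfy $\q'$ even when no containment witness exists. The simplest counterexample is $\q=\Diamond A$ and $\q'=\Diamond\Diamond A$: there is no monotone $f\colon[2]\to[1]$, and indeed $\q\not\models\q'$ (witness $\emptyset\{A\}$), yet the stretched model $\emptyset\,\emptyset^M\{A\}$ satisfies $\q'$. The unpadded word handles this case but fails for the $\nxt$-alignment cases, so no single counter-model suffices. The paper's proof of Lemma~\ref{lem:containment} is an induction on $\dep(\q')$ that deploys several case-specific counter-models (the unpadded word $\rho_0\cdots\rho_n$, a word with padding inserted at one chosen position to break a $\nxt$-chain, and the word padded exactly at the $\Diamond$-edges) depending on where the putative witness fails. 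If you repair the two completeness directions along these lines, the soundness directions and the DP-based decision procedures go through as you describe.
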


To prove Theorem~\ref{thm:containment}, suppose first that we are given two queries $\q,\q' \in \Qpnd$, where $\q$ takes of the form~\eqref{dnpath} and $\q'= \rho_{0}' \wedge \op_{1}' (\rho_{1}' \wedge \cdots \wedge\op_{m}' \rho_{m}')$. Denote by $[m]$ the closed interval $[0,m] \subseteq \mathbb N$. A function $h \colon [m] \rightarrow [n]$ is \emph{monotone} if $h(i) < h(j)$ whenever $i <j$.
Then tractability of containment for path queries follows from the criterion below, which is proved in the appendix by induction on $\dep(\q')$:

\begin{lemma}\label{lem:containment}
Let $\q,\q' \in \Qpnd$. Then $\q \models \q'$ iff there is a monotone function $h \colon [m] \rightarrow [n]$ such that $h(0)=0$ and, for all $i \in [m]$, we have $\rho_{i}' \subseteq \rho_{h(i)}$ and if $\op_{i+1}'=\nxt$, then $\op_{h(i+1)}=\nxt$ and $h(i+1)=h(i)+1$.
\end{lemma}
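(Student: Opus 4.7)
The $(\Leftarrow)$ direction is a direct semantic check. Given any data instance $\D$ and witnesses $t_0 = 0 < t_1 < \cdots < t_n$ of $\q$ in $\D$ (i.e.\ $\rho_i \subseteq \delta_{t_i}$, with $t_{i+1} = t_i + 1$ whenever $\op_{i+1} = \nxt$), I set $s_i := t_{h(i)}$ and verify that $s_0 < \cdots < s_m$ is a witness of $\q'$ in $\D$: strict monotonicity comes from that of $h$ and of $t_\cdot$, the label condition $\rho_i' \subseteq \rho_{h(i)} \subseteq \delta_{t_{h(i)}}$ transfers directly, and if $\op_{i+1}' = \nxt$ then by assumption $h(i+1) = h(i) + 1$ and $\op_{h(i+1)} = \nxt$, yielding $s_{i+1} = t_{h(i)+1} = t_{h(i)} + 1 = s_i + 1$.

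The $(\Rightarrow)$ direction I prove by induction on $m = \dep(\q')$, using the canonical data instance $\D_\q$ of $\q$: timestamps $t_0 := 0$, $t_{i+1} := t_i + 1$ if $\op_{i+1} = \nxt$, and $t_{i+1} := t_i + N$ with $N := m + 2$ if $\op_{i+1} = \Diamond$; $\rho_i$ is placed at $t_i$, all other positions empty. Since $\D_\q, 0 \models \q$, the assumption $\q \models \q'$ supplies a witness $s_0 = 0 < s_1 < \cdots < s_m$ of $\q'$ in $\D_\q$. The base case $m = 0$ is immediate ($\rho_0' \subseteq \rho_0$ and $h(0) := 0$). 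For the inductive step, split $\q' = \rho_0' \wedge \op_1' \q_1'$ with $\dep(\q_1') = m - 1$.

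When $\op_1' = \nxt$, I first argue $\op_1 = \nxt$: otherwise position $1$ in $\D_\q$ falls inside the initial $\Diamond$-gap (of length $N > m$) and is therefore empty, forcing $\rho_1' = \top$; the normal form of $\q'$ then cascades a chain of $\nxt$-edges and $\top$-labels down to some non-$\top$ $\rho_k'$ (which exists because $\rho_m' \ne \top$) whose required position lies at $k \le m < N$ inside the same empty gap, a contradiction. Having $\op_1 = \nxt$, a shift of semantics yields $\q_1 \models \q_1'$ where $\q_1 := \rho_1 \wedge \op_2 \cdots$, and the inductive hypothesis supplies $h_1 \colon [m-1] \to [n-1]$, which I lift to $h$ by $h(0) := 0$ and $h(i+1) := h_1(i) + 1$.

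When $\op_1' = \Diamond$, let $j^*$ be the least $j \ge 1$ such that $s_j$ is matched in $\D_\q$, say $s_{j^*} = t_k$; such $j^*$ exists because $\rho_m' \ne \top$ forces $s_m$ to be matched. The normal form of $\q'$ makes $\rho_1', \dots, \rho_{j^*-1}'$ all $\top$ with $\op_1' = \op_2' = \cdots = \op_{j^*}' = \Diamond$, and the assumption $\q \models \q'$ rules out the pathological case $k < j^*$ by exhibiting a small-gap variant of $\D_\q$ that would refute $\q'$ otherwise. By a universality corollary (the inductive hypothesis combined with the already-proved $(\Leftarrow)$ direction applied to $\q^{[k..n]}$ and $\q_1'$), one obtains $\q^{[k..n]} \models \q_1'$, so the inductive hypothesis yields $h_1 \colon [m-1] \to [n-k]$, and I assemble $h$ by $h(0) := 0$, $h(j^* + i) := h_1(i) + k$ for $0 \le i \le m - j^*$, together with monotone interpolation on $\{1, \dots, j^*-1\}$ (possible exactly because $k \ge j^*$). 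The main obstacle is this $\op_1' = \Diamond$ case -- specifically showing $k \ge j^*$ and justifying the interpolation -- which is precisely where the normal form of $\q'$ and the small-gap refutation of a would-be counterexample are crucial.
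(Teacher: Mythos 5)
Your $(\Leftarrow)$ direction and the $\op_1'=\nxt$ case of $(\Rightarrow)$ are sound and essentially follow the paper's route (induction on $\dep(\q')$, refuting bad configurations with data instances manufactured from $\q$). The gap is in the $\op_1'=\Diamond$ case, at the step where you claim $\q^{[k..n]}\models\q_1'$ for the $k$ read off from a witness of $\q'$ in the canonical instance $\D_\q$. This entailment is not implied by $\q\models\q'$ together with the witness data, and it is false for legitimate witnesses. Take $\q=\Diamond(A\wedge\Diamond(A\wedge\Diamond B))$ and $\q'=\Diamond(A\wedge\Diamond(\top\wedge\Diamond B))$, both in normal form; then $\q\models\q'$ via $h(i)=i$, but in $\D_\q$ the witness may place $s_1$ on the \emph{second} $A$, giving $j^*=1$ and $k=2$ (which passes your $k\ge j^*$ test), while $\q^{[2..3]}=A\wedge\Diamond B\not\models A\wedge\Diamond(\top\wedge\Diamond B)=\q_1'$ (refuted by $\{A(0),B(1)\}$). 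So the inductive hypothesis cannot be invoked on the pair you hand it, and no containment witness $h_1$ into $[n-k]$ exists for that $k$. The appeal to a ``universality corollary'' cannot repair this: the inductive hypothesis converts a semantic entailment into a containment witness, but you still owe an independent proof of the semantic entailment of the tails, and $\D_\q$ alone certifies too much (e.g.\ $\D_{\Diamond A},0\models\Diamond\Diamond A$ although $\Diamond A\not\models\Diamond\Diamond A$).

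To close the gap, $k$ must be determined by something other than where one semantic witness happens to land. The paper splits $\q'$ at the end of its first maximal block and takes the \emph{minimal} $r$ such that the initial subquery $\rho_0\wedge\op_1(\cdots\op_r\rho_r)$ of $\q$ entails the corresponding prefix of $\q'$; this minimality is what pins the split point and makes the tail entailment provable, using a family of refuting instances with gaps inserted at chosen positions rather than a single $\D_\q$ plus one ``small-gap variant''. If you want to keep your canonical-model formulation, you would have to fix the leftmost witness and prove the analogous minimality property for it, which amounts to the same lemma. (There is also an index mismatch in the assembly $h(j^*+i):=h_1(i)+k$ when $j^*>1$, since $h_1(i)$ is the image of $\rho'_{1+i}$ rather than of $\rho'_{j^*+i}$, but this is cosmetic next to the main issue.)
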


We refer to any function $h$ defined in Lemma~\ref{lem:containment} as a \emph{containment witness} for the pair $\q, \q' \in \Qpnd$.

Suppose now $\q \in \Qd$. As shown in~\cite{DBLP:conf/kr/FortinKRSWZ22}, we can convert $\q$ in polytime to an equivalent query in the \emph{normal form} $\rho \wedge \q_{1} \wedge \cdots \wedge \q_{n}$, where $\rho$ is a conjunction of atoms and each $\q_{i}$ is in $\Qpd$ and starts with $\Diamond$. Tractability of containment for $\Qd$-queries follows from:

\begin{lemma}\label{lem:decomp}
If $\q=\rho \wedge \q_{1} \wedge \cdots \wedge \q_{n}\in \Qd$ is in normal form, $\q'\in \Qpd$ and $\q\models \q'$, then there is $\q_i$, $1 \le i \le n$, with $\rho \wedge \q_{i} \models \q'$.
\end{lemma}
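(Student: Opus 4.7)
My plan is to argue the contrapositive: assume $\rho\wedge\q_i\not\models\q'$ for every $i\in\{1,\dots,n\}$ and construct a single model of $\q$ that falsifies $\q'$, contradicting $\q\models\q'$. Writing $\q'=\rho_0'\wedge\Diamond(\rho_1'\wedge\cdots\wedge\Diamond\rho_m')$, any model of $\q$ must satisfy $\rho_0'\subseteq\rho$, so it suffices to refute $\Diamond(\rho_1'\wedge\cdots\wedge\Diamond\rho_m')$. Applying Lemma~\ref{lem:containment} to each path query $\rho\wedge\q_i\in\Qpd$, the contrapositive assumption translates into the purely combinatorial statement that, for every $i$, the sequence of sets $\rho_1'\cdots\rho_m'$ is not a set-inclusion subsequence of $\sigma_1^i\cdots\sigma_{k_i}^i$.

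The candidate counter-model $\D_\pi$ is built by placing $\rho$ at time $0$ and, for some ordering $\pi$ of the blocks, concatenating $\sigma_1^{\pi(1)}\cdots\sigma_{k_{\pi(1)}}^{\pi(1)},\ldots,\sigma_1^{\pi(n)}\cdots\sigma_{k_{\pi(n)}}^{\pi(n)}$ at consecutive positive timestamps. For every ordering $\pi$, one has $\D_\pi\models\q$, since each $\q_i$ is witnessed by the canonical embedding into its own block. If an embedding of $\q'$ into some $\D_\pi$ were entirely contained in a single block $i$, Lemma~\ref{lem:containment} would immediately supply a containment witness $\rho\wedge\q_i\models\q'$, contradicting the hypothesis. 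So, if $\D_\pi$ satisfies $\q'$, every embedding must cross at least two blocks.

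The main obstacle, and technical heart of the proof, is to exhibit a single ordering $\pi$ whose concatenation admits no such cross-block embedding. Abstractly, this is a shuffle-avoidance claim: given sequences over sets none of which individually contains $\rho_1'\cdots\rho_m'$ as a set-inclusion subsequence, there is an ordering of them whose concatenation still does not. I would prove this by induction on $\sum_i k_i$ using a greedy-delay strategy: whenever some block has a next unused element that does not advance the greedy matcher for $\rho_1'\cdots\rho_m'$, pick it and recurse on a smaller instance. The delicate case is when every block's next element advances the matcher; there one uses the per-block non-embeddability together with the normal-form assumption (no $\q_i$ subsumed by another) to show that consuming any such head leaves the residual pattern un-embeddable in the remaining blocks, so the induction goes through. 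Reading $\pi$ off from the resulting sequence of picks yields $\D_\pi\not\models\q'$ and the desired contradiction.
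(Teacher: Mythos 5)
Your high-level strategy (argue the contrapositive and assemble a counter-model out of the blocks of the $\q_i$) is the same as the paper's, but the counter-model you propose cannot work: the ``shuffle-avoidance claim'' at what you call the technical heart of the argument is false. For a concatenation of \emph{whole} blocks, take $\q_1=\Diamond(B\wedge\Diamond(A\wedge\Diamond C))$, $\q_2=\Diamond(C\wedge\Diamond(B\wedge\Diamond A))$ and $\q'=\Diamond(A\wedge\Diamond B)$: neither block $\{B\}\{A\}\{C\}$ nor $\{C\}\{B\}\{A\}$ embeds the pattern $\{A\}\{B\}$, yet \emph{both} orderings of the two blocks do, because the $A$ of one block is followed by the $B$ of the other. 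Your ``greedy-delay'' repair quietly switches from ordering whole blocks to interleaving them element by element (so the resulting instance is a shuffle, not a $\D_\pi$ of the form you defined), but even the shuffle version of the claim is false: for $\q_1=\Diamond(A\wedge\Diamond B)$, $\q_2=\Diamond(A\wedge\Diamond C)$ and $\q'=\Diamond(A\wedge\Diamond A)$ we have $\q_i\not\models\q'$ for $i=1,2$, yet \emph{every} shuffle of the blocks $\{A\}\{B\}$ and $\{A\}\{C\}$ contains two distinct timestamps carrying $A$ and hence satisfies $\q'$. (A side issue: the ``no $\q_i$ subsumed by another'' hypothesis you invoke is redundancy-freeness, which the lemma does not assume.)

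The idea you are missing is that the counter-model must be allowed to \emph{merge} timestamps coming from different blocks. The paper's proof defines, for each block $i$, the greedy leftmost matching $f_i$ of $\rho^{1}\cdots\rho^{m}$ into that block, which by the contrapositive hypothesis gets stuck at some minimal $m'$ with $f_i(m')=\infty$ for all $i$. It then builds a single instance in which, for each $\ell<m'$, the segments of all blocks lying strictly between their $(\ell-1)$-th and $\ell$-th greedy matches are laid out consecutively, followed by \emph{one} shared timestamp $\delta_\ell=\bigcup_i\rho_i^{f_i(\ell)}$ collecting the $\ell$-th greedy matches of all blocks. Each $\q_i$ is still satisfied, since its block occurs in order as a subsequence, but by minimality of the $f_i$ the pattern $\rho^{\ell}$ cannot be matched before $\delta_\ell$, and there is no $\delta_{m'}$, so $\q'$ fails. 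On the second counterexample above this merging is exactly what produces the witness $\emptyset\{A\}\{B\}\{C\}$ with a single shared $A$ --- an instance no shuffle or block ordering can produce.
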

\begin{proof}
In the detailed proof given in the appendix, we show that, assuming $\rho \wedge \q_{i} \not \models \q'$ for all $i$, we can convert the $\q_i$ into a data instance $\D$ with $\D,0 \models \q$ and $\D,0 \not\models \q'$.
\end{proof}

For queries $\q \in \Qnd$, Lemma~\ref{lem:decomp} does not hold:
% as illustrated by the following example.

\begin{example}\em
Let $\q = \q_1 \land \q_2 \land \q_3 \land \q_4$, where
\begin{align*}
& \q_{1}= \Diamond (a \wedge \nxt ((a \wedge b) \wedge \nxt a)),\\
& \q_{2} =\Diamond (b \wedge \nxt ((a \wedge b) \wedge \nxt b)),\\
& \q_{3}=\Diamond (a \wedge \nxt ((a \wedge b) \wedge \nxt b)),\\
& \q_4 =  \Diamond (a \land \Diamond (b \land \Diamond (a \land \Diamond (b \land \Diamond (a \land \Diamond b))))),
\end{align*}
and $\q' = \Diamond (b \wedge \Diamond ((a \wedge b) \wedge \Diamond a))$. Then $\q\models \q'$ but $\q_i \not\models \q'$, for any $i$, $1 \le i \le 4$.
\end{example}

At the moment, the question whether containment of $\Qnd$-queries is tractable remains open.

%*****************

\section{Example Sets and Separating Queries}
\label{sec:example}

An \emph{example set} $E$ is a pair $(E^+, E^-)$ of finite sets $E^+ \ne \emptyset$ and $E^-$ of data instances. We say that a query $\q$ \emph{separates} $E$ if all $\D \in E^+$ are positive examples and all $\D\in E^-$ are negative examples for $\q$. Denote by $\sep(E,\mathcal{Q})$ the set of queries in a class $\mathcal{Q}$ separating $E$ and call $E$ $\mathcal{Q}$-\emph{separable} if $\sep(E,\mathcal{Q}) \ne \emptyset$. Our general aim is to understand the structure of $\sep(E,\mathcal{Q})$ for various important query classes $\Q$.
%
%The \emph{separability problem} for $\mathcal{Q}$ is to decide whether a  given example set is $\mathcal{Q}$-separable.

We consider queries in $\Q$ \emph{modulo equivalence}, not distinguishing between $\q \equiv \q'$. In this case, the relation $\models$ is a partial order on $\Q$.
%Given an example set $E$ and a class $\Q$ of queries, denote by $S(E,\mathcal{Q})$ the set of $\mathcal{Q}$-queries separating $E$.
For any $\q \in \sep(E,\mathcal{Q})$, we clearly have
\begin{itemize}
\item[--] $\dep(\q) \le \min \{\max \D \mid \D \in E^+\}$,

\item[--] $\sig(\q) \subseteq \bigcap \{ \sig (\D) \mid \D \in E^+\}$,
\end{itemize}
so $\sep(E,\mathcal{Q})$ is finite. We refer to the $\models$-minimal elements of $\sep(E,\mathcal{Q}) \ne \emptyset$ as \emph{most specific $\Q$-separators of} $E$ and to the $\models$-maximal elements as \emph{most general $\Q$-separators of} $E$ (modulo $\equiv$); they comprise the sets $\sepms(E,\mathcal{Q})$ and $\sepmg(E,\mathcal{Q})$, respectively. If these sets are singletons, we call their only element the \emph{unique most specific} and, respectively, \emph{unique most general $\Q$-separator of} $E$. Note that the former always exists if $\sep(E,\mathcal{Q}) \ne \emptyset$ and $\Q$ is closed under $\land$.

\begin{example}\em
$(i)$ Suppose $\D^+ = \{A(0),B(0),A(1),B(1)\}$, $\D^- = \{A(0)\}$ and $E = (\{\D^+\}, \{\D^-\})$. The separator space $\sep(E,\Qpd)$ is shown below as a Hasse diagram with arrows indicating the partial order $\models$ (and $\top,A \notin \sep(E,\Qpd)$), so\\
\centerline{
\begin{tikzpicture}[nd/.style={draw,thick,circle,inner sep=0pt,minimum size=1.5mm,fill=white}]

\node at (0,0) (n0) {\small $A \land B \land \Diamond (A \land B)$};

\node at (-3.2,1) (n1) {\small$A \land B \land \Diamond A$} edge[<-] (n0);

\node at (-1.1,1) (n2) {\small$A \land B \land \Diamond B$} edge[<-] (n0);

\node at (1.1,1) (n3) {\small$A \land \Diamond (A \land B)$} edge[<-] (n0);

\node at (3.3,1) (n4) {\small$B \land \Diamond (A \land B)$} edge[<-] (n0);

\node at (-3.7,2) (n21) {\small$A \land B$} edge[<-] (n1) edge[<-] (n2);

\node at (-2.3,2) (n22) {\small$A \land \Diamond A$} edge[<-] (n1) edge[<-] (n3);

\node at (-.9,2) (n23) {\small$A \land \Diamond B$} edge[<-] (n2) edge[<-] (n3);

\node at (.6,2) (n24) {\small$B \land \Diamond A$} edge[<-, out=-150,in=45] (n1) edge[<-, out=-50, in = 145] (n4);

\node at (2.1,2) (n25) {\small$B \land \Diamond B$} edge[<-] (n2) edge[<-] (n4);

\node at (3.6,2) (n26) {\small$\Diamond (A \land B)$} edge[<-] (n3) edge[<-] (n4);

\node at (-3, 3) (n31) {\small\textcolor{gray}{$A$}};% edge[<-] (n21) edge[<-] (n22) edge[<-] (n23);

\node at (-1,3) (n32) {\small$B$} edge[<-] (n24) edge[<-] (n25) edge[<-] (n21);

\node at (1,3) (n33) {\small$\Diamond A$} edge[<-] (n22) edge[<-] (n24) edge[<-] (n26);

\node at (3,3) (n34) {\small$\Diamond B$} edge[<-] (n23) edge[<-] (n25) edge[<-] (n26);

\node at (0,3.35) (ntop) {\small \textcolor{gray}{$\top$}};

\end{tikzpicture}
}
%\centerline{\includegraphics[scale=0.5]{Hasse}}
\\
the most general $\Qpd$-separators of $E$ comprise the set
$\sepmg(E,\Qpd)= \{B, \Diamond A, \Diamond B\}$ and the unique most specific  one is $A\land B \land \Diamond(A\land B)$.

$(ii)$ For the example set $E = (\{\Abox^+_1,\Abox^+_2 \}, \{\Abox^-_1\})$ with
\begin{equation*}
\Abox^+_1 = \{A(1), B(2) \},  \Abox^+_2 = \{B(1), A(2)\}, \Abox^-_1 = \{C(0)\},
\end{equation*}
$\sep(E,\Qpd) = \{\Diamond A, \Diamond B\} = \sepms/\sepmg(E,\Qpd)$ but there is no unique most specific/general separator. In contrast, $\sep(E,\Qd) = \{\Diamond A, \Diamond B, \Diamond A \land \Diamond B\}$ has the unique most specific separator $\Diamond A \land \Diamond B$ but no unique most general one.

$(iii)$ One can show that $\sepms/\sepmg(E,\Q)$ always contains a  \emph{longest/shortest separator} of $E$ in $\Q$ (of largest/smallest temporal depth). To illustrate, let $E = (\{\Abox^+_1\}, \{\Abox^-_1, \Abox^-_2, \Abox^-_3\})$,
\begin{align*}
&\Abox^+_1 = \{B(0), C(0), A(1) \},\\
& \hspace*{1cm}\Abox^-_1 = \{B(0)\},\ \Abox^-_2 = \{C(0)\},\ \Abox^-_3 = \{A(1)\}.
\end{align*}
Then $\sepmg(E,\Qpd) = \{B \land C, B \land \Diamond A,C \land \Diamond A\}$, where $B \land C$ of depth 0 is the shortest separator of $E$ in $\Qpd$.
%The\nz{rewrite} shortest $\Qpd$-separator of $E$ is $B \land C$ (in general, it is not unique). Most general separators are not necessarily shortest as $B \land \Diamond A \in \sepmg(E,\Qpd)$. If it exists, the unique most general separator is also a shortest one. }
\end{example}

Our main concern is the following three algorithmic problems for query classes $\Q \subseteq \Qnd$ with input $E$ and $\q\in \mathcal{Q}$:
\begin{description}\itemsep=0pt
\item[(most specific/general separator verification):] decide whether $\q$ is an element of $\sepms(E,\mathcal{Q})$\,/\,$\sepmg(E,\mathcal{Q})$;

\item[(counting most specific/general separators):] count the elements of $\sepms(E,\mathcal{Q})$\,/\,$\sepmg(E,\mathcal{Q})$;

\item[(computing a most specific/general separator):] construct some query in $\sepms(E,\mathcal{Q})$\,/\,$\sepmg(E,\mathcal{Q})$.
\end{description}
We are particularly interested in deciding whether there is a unique most specific/general separator and computing it.
To achieve our aims, we obviously should be able to decide whether $\q \in \sep(E,\mathcal{Q})$ (\emph{separator verification}) and whether $\sep(E,\mathcal{Q})\ne \emptyset$ (\emph{separator existence}).  As mentioned in Section~\ref{prelims}, separator verification is in \PTime.
We are also interested in the case when the number of positive or negative examples in $E$ is bounded.
%
%We consider a few algorithmic problems for \emph{efficient}\nz{move to Thm19} query classes $\Q \subseteq \Qnd$, for which the problem of deciding membership $\q \in \Q$ for any given $\q \in \Qnd$ is in P. In the following list, $E$ and $\q\in \mathcal{Q}$ are regarded as input:\nz{start with main ones, supporting ones in plain text}
%%
%\begin{description}
%\item[(separator verification):] decide whether $\q \in \sep(E,\mathcal{Q})$;
%
%\item[(most specific or general separator verification):] decide whether $\q\in \sepms(E,\mathcal{Q})$ or $\q\in \sepmg(E,\mathcal{Q})$;
%
%\item[(unique most specific or general separator verification):] decide whether $\q$ is the only element of $\sepms(E,\mathcal{Q})$ or $\sepmg(E,\mathcal{Q})$;
%
%\item[(separator existence):] decide whether $\sep(E,\mathcal{Q})\ne \emptyset$;
%
%%\item[(unique most specific or general separator existence):] decide whether $\sepms(E,\mathcal{Q})$ or $\sepmg(E,\mathcal{Q})$;
%
%\item[(counting most specific or general separators):] count the number of elements in $\sepms(E,\mathcal{Q})$ or $\sepmg(E,\mathcal{Q})$;
%
%\item[(computing a most specific or general separator):] construct some query in $\sepms(E,\mathcal{Q})$ or $\sepmg(E,\mathcal{Q})$.
%\end{description}
%%
%
The table below summarises the complexities of separator existence for our query classes, where $\mathsf{b}^+$\,/\,$\mathsf{b}^-$\\
\centerline{
\begin{tabular}{c|c|c|c}
		%\toprule
		% & & A & B & C \\
		\hline
		{\small separator existence} & $\mathsf{b}^+\!\!,\, \mathsf{b}^-$ & $\mathsf{b}^+$ & $\mathsf{b}^-$ {\small or unbounded} \\
\hline
$\Qpd\,/\,\Qpnd$ & {\small in} $\PTime$ & $\NP${\small-c.} & $\NP${\small-c.}    \\[2pt]
$\Qd\,/\,\Qnd\,/\,\Qint$ & {\small in} $\PTime$ & {\small in} $\PTime$ & $\NP${\small-c.} \\\hline
%$\Qp[\U]$ & \multicolumn{3}{c}{${\small =} \NP$} \\\hline
%$\mathcal{Q}[\Us]$ & ${\small \le} \PTime$ & ${\small \le} \PTime$ & ${\small \ge} \NP$, ${\small \le} \PSpace$  \\\hline
%$\mathcal{Q}[\U]$ & \multicolumn{3}{c}{${\small \le} \PSpace$} \\ \hline
\end{tabular}%
}\\
means that $|E^+|$\,/\,$|E^-|$ is bounded, and one can assume a bounded signature. Except for $\Qint$, these results are shown in \cite{DBLP:conf/ijcai/FortinKRSWZ23}. The proofs use techniques developed for the \emph{longest common subsequence problem} (given $k>0$ and a set $E^{+}$ of sequences, is there a sequence query in $\Qpd$ of depth $\geq k$ that is a subsequence of all $\D\in  E^{+}$; see Section~\ref{prelims})
and separator existence for sequence queries in $\Qpd$ \cite{DBLP:journals/jacm/Maier78,DBLP:journals/tcs/Fraser96}. The $\NP$-upper bound for $\Qint$ is trivial; the lower one follows from the proof of Theorem~\ref{lem:consnew}, and tractability for $\mathsf{b}^+$ is ensured by the observation that there are polynomially-many relevant intervals in the $E^+$-examples.

This $\NP$-lower bound shows that great care is needed when transferring techniques from the literature on algorithms for sequences to our framework as separability of example sets of sequences using sequence queries in $\Qint$ is easily seen to be in $\PTime$ even for unbounded example sets and signatures.

A key to the extremal separator problems above is the following  notions of strengthening and weakening frontiers.

%*********

\section{Strengthening and Weakening Frontiers}
\label{sec:frontiers}

Let $\Q$ be a class of queries and $\q \in \Q$.
A set $\fr \subseteq \mathcal{Q}$ is called a \emph{strengthening frontier} for $\q$ in $\Q$ if
\begin{itemize}\itemsep=0pt
\item[--] for any $\q'\in \fr$, we have $\q'\models \q$ and $\q'\not\equiv \q$;

\item[--] for any $\q'' \in \mathcal{Q}$, if $\q''\models \q$ and $\q''\not\equiv \q$, then there is $\q'\in \fr$ such that $\q''\models \q'$.
\end{itemize}
%
%(i.e., if $\fr$ contains all elements of the partial order $(\Q,\models)$ modulo $\equiv$ that are covered by $\q$.)
%
A set $\fr \subseteq \mathcal{Q}$ is called a \emph{weakening frontier} for $\q$ in $\Q$ if
\begin{itemize}\itemsep=0pt
\item[--] for any $\q'\in \fr$, we have $\q\models \q'$ and $\q'\not\equiv \q$;

\item[--] for any $\q'' \in \mathcal{Q}$, if $\q\models \q''$ and $\q''\not\equiv \q$, then there is $\q'\in \fr$ with $\q'\models \q''$.
\end{itemize}
%
%(i.e., if $\fr$ contains all covers of $\q$ in $(\Q,\models)$ modulo $\equiv$.)
Trivial strengthening/weakening frontiers for $\q$ comprise all queries that are properly stronger/weaker than $\q$ in $\Q$; our concern, however, is finding small frontiers.
We show now that, for $\sQpd$ and $\sQpnd$, one can compute a strengthening/weakening frontier for any given $\q$ in polytime.

%The next lemma shows one application of the frontiers.\nz{polynomial unique characterisability?}

%For any example set $E$ and any $\q \in \Q$, we clearly have:
%%
%\begin{description}
%\item[(spec)] $\q \in \spec(E,\Q)$ iff $\q \in S(E,\Q)$ and $\q' \notin S(E,\Q)$ for all $\q' \in \fr$,
%
%\item[(gen)] $\q \in \gen(E,\Q)$ iff $\q \in S(E,\Q)$ and $\q' \notin S(E,\Q)$ for all $\q' \in \fr$.
%\end{description}
%%
%As we know, checking $\q \in S(E,\Q)$ is in P. We now use the results of Section~\ref{sec:cont} to show how to construct $\fr$ and $\fr$ for the classes $\sQpd$, $\sQpnd$, $\sQd$ and any signature $\sigma$.

%Lemma~\ref{lem:cons} is applicable to $\sQpd$ and $\sQpnd$:}

\begin{theorem}\label{thm:firstfrontier}
  Let $\q \in \sQpnd$ be in normal form~\eqref{dnpath}. A
  strengthening frontier for $\q$ in $\sQpnd$ can be computed in
  polytime
%   \textcolor{blue} {by applying once one of the following operations
%   to~$q$,}
  by applying once to $\q$ one of the following operations,
  for $i \in [n]$ and $\q_i = \rho_i \land \op_{i+1}(\rho_{i+1} \land
  \dots \land \op_n \rho_n)$\textup{:}
\begin{enumerate}\itemsep=0pt
\item extend some $\rho_{i}$ in $\q$ by some $A\in\sigma\setminus
  \rho_{i}$\textup{;}

\item replace some $\Diamond \q_i$ in $\q$ by $\Diamond (\top \land \Diamond \q_i)$\textup{;}

\item replace some $\op_{i}=\Diamond$ in $\q$ by $\nxt$ provided that the resulting query is in normal form\textup{;}

\item add $\op_{n+1}\rho_{n+1}$ at the end of $\q$, where $\op_{n+1} = \Diamond$ and $\rho_{n+1} = A$, for some $A\in \sigma$.
\end{enumerate}
If $\q \in \sQpd$, a strengthening frontier for $\q$ in $\sQpd$ can be
computed in polytime using operations 1, 2, and 4.
\end{theorem}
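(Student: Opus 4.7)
The plan is to establish that the set $\fr$ of queries produced by a single application of operations~1--4 to $\q$ is a strengthening frontier in $\sQpnd$. I would split the argument into two directions: (soundness) every $\q' \in \fr$ strictly strengthens $\q$; and (completeness) every $\q'' \in \sQpnd$ strictly stronger than $\q$ entails some $\q' \in \fr$. Polynomial-time computability then follows because op~1 admits $O(n \cdot |\sigma|)$ choices, op~4 admits $O(|\sigma|)$ choices, and ops~2 and~3 admit $O(n)$ choices each, with every output of size polynomial in $|\q|$.

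For soundness I would go case-by-case using Lemma~\ref{lem:containment}: in each case the natural inclusion between component sequences serves as a containment witness for $\q' \models \q$, and strictness follows by exhibiting a concrete data instance satisfying $\q$ but not $\q'$ (for op~1, one omitting the newly added atom; for op~2, a ``tight'' realisation of $\q$ with minimal gaps; for op~3, one whose relevant $\Diamond$-step is realised at a strictly later time than the next; for op~4, one whose pattern ends where $\q$ does).

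For completeness, I would fix a strict strengthening $\q'' = \rho_0'' \land \op_1''(\rho_1'' \land \dots \land \op_m'' \rho_m'')$ in normal form and apply Lemma~\ref{lem:containment} to obtain a containment witness $h \colon [n] \to [m]$. The case analysis proceeds as follows. \emph{(A)} If some $\rho_i \subsetneq \rho''_{h(i)}$, pick $A \in \rho''_{h(i)} \setminus \rho_i \subseteq \sigma$ and apply op~1 at $(i,A)$; the same $h$ witnesses $\q'' \models \q'$. \emph{(B)} Otherwise $\rho_i = \rho''_{h(i)}$ for all $i$; if $h$ misses some $j \in [m]$ with $h(i) < j < h(i+1)$, then $\op_{i+1}$ must be $\Diamond$ (else the witness conditions force $h(i+1) = h(i)+1$), and op~2 applied to $\Diamond \q_{i+1}$ does the job, with $h$ extended by sending the new $\top$-slot to $j$; if $h$ misses some $j > h(n)$, then op~4 appending $\Diamond A$ for some $A \in \rho''_m$ (non-empty by normal form) works, with $h$ extended by $h(n+1) = m$. \emph{(C)} If $h$ is the identity bijection, then $m = n$ and all components match, so strictness forces some $j$ with $\op_j = \Diamond$ and $\op''_j = \nxt$, and op~3 at~$j$ is the candidate.

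The hard part will be case~(C) when $\rho_{j-1} = \top$ or $\rho_j = \top$, where changing only $\op_j$ violates normal form. I would resolve this by observing that, in $\q''$ (which is itself in normal form), the whole $\top$-separated block of operators around $\op''_j$ must consist entirely of $\nxt$'s; hence op~3 has to be read as simultaneously replacing every $\Diamond$ in one maximal $\top$-separated block by $\nxt$, counted as a single application, and the resulting normal-form query is still entailed by $\q''$ via the same $h$. For $\sQpd$ the argument simplifies: op~3 is vacuous as there is no $\nxt$, and only the sub-cases of~(A) and~(B) can arise, so ops~1, 2, and~4 suffice.
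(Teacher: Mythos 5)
Your proof follows essentially the same route as the paper's: soundness via Lemma~\ref{lem:containment}, and completeness by taking a containment witness $h$ for $\q''\models\q$ and splitting on whether $h$ is the identity, misses an interior index, or misses the final index, matching these cases to operations 1--4 exactly as the paper does. The one place where you go beyond the paper is your case (C), and your instinct there is right: the literal reading of operation 3 (flip a \emph{single} $\op_i$ from $\Diamond$ to $\nxt$, subject to the normal-form proviso) does not suffice. For instance, with $\sigma=\{A\}$ and $\q=\top\land\Diamond(\top\land\Diamond A)$, the query $\q''=\top\land\nxt(\top\land\nxt A)$ is a strict strengthening, yet flipping either single operator breaks normal form, so operation 3 is inapplicable, and $\q''$ (as witnessed by the data instance $\{A(2)\}$) entails none of the queries produced by operations 1, 2 or 4. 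The paper's proof simply asserts that ``operation 3 gives $\q''\in\fr$'' in the surjective case and does not address this; your reformulation of operation 3 as converting an entire maximal $\top$-separated block of $\Diamond$'s to $\nxt$'s in one step is exactly the repair needed, and your justification (normal form of $\q''$ forces the whole corresponding block of operators in $\q''$ to be $\nxt$'s, so the identity witness still certifies $\q''\models\q'$) is sound. In short: same approach, but your treatment of operation 3 is more careful than, and in effect corrects, the paper's own argument.
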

\begin{proof}
Let $\fr$ be the set of queries obtained by a single application of one of these operations to $\q$. By Lemma~\ref{lem:containment}, $\q'\models \q$ and $\q\not\equiv \q'$ for all $\q'\in \fr$. Let $\q'\models \q$ and $\q\not\equiv \q'$, for some $\q'\in \sQpnd$ of the form
$\q'= \rho_{0}' \wedge \op_{1}' (\rho_{1}' \wedge \cdots \wedge\op_{m}' (\rho_{m}'))$. Take a containment witness $h \colon [n] \rightarrow [m]$ for $\q',\q$. If $h$ is surjective, then $n=m$ and $h(i)=i$ for all $i\in [n]$, and so $\rho_{i}\subseteq \rho_{i}'$. As $\q\not\equiv \q'$, either $\op_{i}'=\nxt$ and $\op_{i}=\Diamond$, for some $i\in [n]$, or $\rho_{i}\subsetneq \rho_{i}'$, for some $i\in [n]$. In the former case, operation 3 gives $\q''\in \fr$ with $\q'\models \q''$; in the latter one, operation 1 gives such a $\q''$.
		
Suppose $h$ is not surjective. If there is $i<n$ such that $h(i+1)-h(i)\geq 2$, then $\q'\models \q''$ for a $\q''\in \fr$ given by operation 2. Otherwise $m$ is not in the range of $h$, and we get such a $\q''\in \fr$  by operation 4.
\end{proof}

\begin{example}\label{ex:lfq}\em
For $\sigma = \{A, B\}$ and $\q = \Diamond (A \land \nxt B)$,  operations 1--4 give the following strengthening frontier for $\q$:
\begin{align*}
& \Diamond (A \land B \land \nxt B),  \Diamond (A \land \nxt (A \land B)), \Diamond (\top \land \Diamond (A \land \nxt B)),\\
 & \nxt (A \land \nxt B),  
 \Diamond (A \land \nxt (B \land \Diamond A)),  \Diamond (A \land \nxt (B \land \Diamond B)).
\end{align*}
\end{example}

A weakening frontier can be constructed by reversing operations 1--3 from Theorem~\ref{thm:firstfrontier} and using a similar argument:

\begin{theorem}\label{thm:secondfrontier}
Let $\q \in \sQpnd$ be in normal form~\eqref{dnpath}. A weakening frontier for $\q$ in $\sQpnd$ can be computed in polytime by applying once to $\q$ one of the following operations, for $i \in [n]$ and $\q_i = \rho_i \land \op_{i+1}(\rho_{i+1} \land \dots \land \op_n \rho_n)$\textup{:}
\begin{enumerate}\itemsep=0pt
\item drop some atom from $\rho_{i}$\textup{;}

\item replace some $\Diamond(\top \wedge \Diamond \q_{i})$ in $\q$ by $\Diamond \q_{i}$\textup{;}

\item replace some $\op_{i}=\nxt$ by $\Diamond$.
\end{enumerate}
If $\q \in \sQpd$, a weakening frontier for $\q$ in $\sQpd$ can be computed in polytime using operations 1 and 2.
\end{theorem}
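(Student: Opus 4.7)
The plan is to follow the same proof template as Theorem~\ref{thm:firstfrontier}, splitting into soundness and completeness, with case analysis driven by Lemma~\ref{lem:containment}.

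For soundness, I would let $\fr$ be the set of queries obtained from a single application of one of operations 1--3 (followed by the implicit renormalization to a query in $\sQpnd$), and show that every $\q'\in\fr$ satisfies $\q\models\q'$ and $\q\not\equiv\q'$. In each case I would exhibit an explicit containment witness. For operation~1, the identity function $[n]\to[n]$ is a witness because the modified $\rho_i'$ still contains $\rho_i'\subsetneq\rho_i$; properness comes from the dropped atom. For operation~2, a witness $[n-1]\to[n]$ skipping the position of the eliminated $\top$ works; properness follows because the shorter query no longer forces anything at the removed layer. For operation~3, the identity witness works since the weakened $\Diamond$ can match the surviving $\nxt$-free constraints, with properness coming from the fact that $\nxt$ is strictly stronger than $\Diamond$ at that position.

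For completeness, I would fix any $\q''\in\sQpnd$ with $\q\models\q''$ and $\q''\not\equiv\q$, take a containment witness $h\colon[m]\to[n]$ from Lemma~\ref{lem:containment}, and find a frontier element $\q'\in\fr$ with $\q'\models\q''$. If $h$ is surjective then $n=m$ and $h$ is the identity, so inequivalence forces either some $\rho_i''\subsetneq\rho_i$ (use operation~1 to drop an atom from $\rho_i\setminus\rho_i''$) or some $\op_i=\nxt$ with $\op_i''=\Diamond$ (use operation~3 at position $i$); in both sub-cases the original $h$ remains a witness for $\q'\models\q''$. Otherwise some $k\in[n]$ is not in the range of $h$. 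If $h(m)<n$ (end gap), apply operation~1 to drop an atom from $\rho_n$ (non-empty by normal form); if this empties $\rho_n$, renormalization simply discards the last conjunct, and the original $h$ still works because it never lands at position $n$. If $h(i)<k<h(i+1)$ for some $i<m$ (middle gap), then $\op_{i+1}''=\Diamond$ by the $\nxt$-clause of Lemma~\ref{lem:containment}. When $\rho_k\ne\top$, operation~1 at position $k$ yields $\q'$ for which $h$ is still a witness (position $k$ lies outside its range). When $\rho_k=\top$, the normal form forces $\op_k=\op_{k+1}$: if both are $\Diamond$, operation~2 removes this layer and the shifted witness $h'$ defined by $h'(j)=h(j)$ for $h(j)<k$ and $h'(j)=h(j)-1$ for $h(j)>k$ certifies $\q'\models\q''$; if both are $\nxt$, operation~3 applied to $\op_k$ produces (after renormalizing via $\nxt\Diamond\q_{k+1}\equiv\Diamond\Diamond\q_{k+1}$) a query with $\op_k'=\op_{k+1}'=\Diamond$ at position $k$, and $h$ still witnesses $\q'\models\q''$ because the $\nxt$-clause is never activated at positions $k,k{+}1$ under a monotone $h$ avoiding $k$.

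The restriction to $\sQpd$ follows immediately: operation~3 becomes vacuous (no $\nxt$ operators exist), and sub-cases 1b and 2b-with-$\nxt$ of the completeness argument disappear, so operations~1 and~2 suffice. The main delicate point will be the middle-gap case with $\rho_k=\top$ and $\op_k=\op_{k+1}=\nxt$: one must verify that the renormalization after operation~3 lands in normal form and that the witness still satisfies the $\nxt$-compatibility clause, which hinges on the observation that $k\notin\operatorname{range}(h)$ forbids any consecutive pair $(h(j),h(j+1))$ from straddling position $k$, making the clause vacuous exactly where the operator changed.
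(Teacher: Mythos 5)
Your proposal is correct and follows the same template as the paper's own proof: soundness by exhibiting explicit containment witnesses via Lemma~\ref{lem:containment}, and completeness by taking a containment witness $h$ for $\q\models\q''$ and splitting on whether $h$ is surjective. The only divergence is in the non-surjective case, where you are in fact more careful than the paper: the paper's proof sends every ``middle gap'' ($h(i+1)-h(i)\ge 2$) to operation~2, but operation~2 is only applicable when the skipped position $k$ carries $\rho_k=\top$ with $\op_k=\op_{k+1}=\Diamond$; when $\rho_k\ne\top$ one must instead drop an atom from $\rho_k$ via operation~1 (e.g.\ for $\q=\Diamond(A\land\Diamond B)$ and $\q''=\Diamond B$ the required frontier element is $\Diamond(\top\land\Diamond B)$, obtained by operation~1, not~2), and when $\rho_k=\top$ with $\op_k=\op_{k+1}=\nxt$ one needs operation~3 followed by renormalization, exactly as you describe. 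Your observation that $k\notin\operatorname{range}(h)$ keeps the $\nxt$-clause of Lemma~\ref{lem:containment} vacuous at the modified positions is the right justification that $h$ survives the renormalization; the same point also quietly handles the cascading renormalization after operation~1 empties a $\rho_k$. So your write-up is a sound, and slightly tightened, version of the paper's argument.
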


\begin{example}\label{ex:ufq}\em
For $\sigma= \{A, B\}$ and $\q = \Diamond (A \land \nxt B)$, operations 1-3 give the following weakening frontier for $\q$:
\begin{align*}
\Diamond (\top \land \Diamond B), \quad \Diamond A, \quad \Diamond(A \land \Diamond B).
\end{align*}
Note that the computed weakening frontier can be made smaller by omitting $\Diamond A$, which is weaker than $\Diamond(A \land \Diamond B)$.
\end{example}
We next consider frontiers for queries in $\sQd$. We say that $\q = \rho \wedge \q_{1}\wedge \cdots \wedge \q_{n}\in \sQd$ in normal form is \emph{redundancy-free} if it does not contain $\q_{i},\q_{j}$ with $i\ne j$ and $\q_{i}\models \q_{j}$. Clearly, for any $\q \in \sQd$, we can compute an equivalent redundancy-free $\q'\in \sQd$ in polytime.

\begin{theorem}\label{thm:thirdfrontier}
Let $\q=\rho \wedge \q_{1}\wedge \cdots \wedge \q_{n}\in \sQd$ be  redundancy-free. A weakening frontier for $\q$ in $\sQd$ can be computed in polytime by a single application of one of the following operations to $\q$\textup{:}
\begin{enumerate}\itemsep=0pt
\item drop some atom from $\rho$\textup{;}

\item replace some $\q_{i}$ by $\bigwedge \mathcal{F}_{i}$, where $\mathcal{F}_{i}$ is the weakening frontier for $\q_i$ in $\sQpd$ provided by Theorem~\ref{thm:secondfrontier}.
\end{enumerate}
\end{theorem}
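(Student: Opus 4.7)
The plan is to verify directly that the set $\fr$ consisting of all queries produced by a single application of operation 1 or operation 2 to $\q$ meets the two defining conditions of a weakening frontier. Polytime computability of $\fr$ will then be immediate: operation 1 contributes at most $|\rho|$ queries, while operation 2 contributes $n$ queries, each of polynomial size since $\mathcal{F}_i$ is computable in polytime by Theorem~\ref{thm:secondfrontier}.

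For the first condition, I would treat the two operations separately. For operation 1, $\q \models \q'$ is immediate and strictness follows from the observation that each $\q_j$ starts with $\Diamond$ and hence cannot force any atom at moment $0$, so the minimal model of $\q$ contains exactly $\rho$ at moment $0$. For operation 2, producing $\q' = \rho \wedge \bigwedge_{j\neq i}\q_j \wedge \bigwedge \mathcal{F}_i$, the relation $\q \models \q'$ holds because $\q_i \models \q^*$ for every $\q^* \in \mathcal{F}_i$. Strictness I would argue by contradiction: assuming $\q' \equiv \q$ gives $\q' \models \q_i$, and Lemma~\ref{lem:decomp} applied to the normal form of $\q'$ and to the path query $\q_i$ yields a conjunct $\q'''$ of $\q'$ with $\rho \wedge \q''' \models \q_i$. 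A short moment-$0$ padding argument (extending a model of $\q'''$ by the atoms of $\rho$ at time $0$ and exploiting that $\q_i$ is $\Diamond$-leading) reduces this to $\q''' \models \q_i$. If $\q''' = \q_j$ with $j \neq i$, this contradicts redundancy-freeness of $\q$; if $\q''' \in \mathcal{F}_i$, it contradicts the strict weakening condition built into the definition of the weakening frontier $\mathcal{F}_i$.

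For the second condition, given $\q'' \in \sQd$ with $\q \models \q''$ and $\q \not\equiv \q''$, I would put $\q''$ in normal form $\rho'' \wedge \q''_1 \wedge \cdots \wedge \q''_m$ with each $\q''_k$ a $\Diamond$-leading path query. The same moment-$0$ padding observation shows $\rho'' \subseteq \rho$, and Lemma~\ref{lem:decomp} supplies, for each $k$, an index $j_k$ with $\rho \wedge \q_{j_k} \models \q''_k$, hence $\q_{j_k} \models \q''_k$. If $\rho'' \subsetneq \rho$, operation 1 applied to any $A \in \rho \setminus \rho''$ yields a $\q' \in \fr$ with $\q' \models \q''$. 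Otherwise $\rho'' = \rho$, in which case $\q \not\equiv \q''$ forces some $i$ with $\q'' \not\models \q_i$; I would apply operation 2 at this $i$ and check $\q' \models \q''_k$ for every $k$: if $j_k$ can be chosen different from $i$, the corresponding $\q_{j_k}$ still occurs as a conjunct of $\q'$; if $j_k = i$ is forced, then $\q''_k \not\equiv \q_i$ (since otherwise $\q'' \models \q_i$), so $\q''_k$ is a \emph{strict} weakening of $\q_i$ and the defining property of $\mathcal{F}_i$ supplies some $\q^* \in \mathcal{F}_i$ with $\q^* \models \q''_k$, and this $\q^*$ is a conjunct of $\q'$.

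The main obstacle is the strictness step for operation 2: it is the unique place where the redundancy-freeness hypothesis is essential, and it is where Lemma~\ref{lem:decomp} must be combined with the moment-$0$ padding argument to rule out a collapse of $\rho \wedge \bigwedge_{j\neq i}\q_j \wedge \bigwedge \mathcal{F}_i$ back to something equivalent to $\q$ despite each individual $\q^* \in \mathcal{F}_i$ being strictly weaker than $\q_i$.
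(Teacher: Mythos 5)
Your proposal is correct and follows essentially the same route as the paper's proof: both arguments rest on Lemma~\ref{lem:decomp} to decompose entailments out of a $\Qd$-query into single conjuncts, on redundancy-freeness to get strictness for operation~2, and on the covering property of the $\sQpd$-frontiers $\mathcal{F}_i$ to handle the conjuncts $\q''_k$ that are only entailed by $\q_i$. The only difference is presentational: you spell out the moment-$0$ padding step that turns $\rho\wedge\q_j\models\q''_k$ into $\q_j\models\q''_k$, which the paper leaves implicit.
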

\begin{proof}
Let $\mathcal{F}$ be the set of queries defined above. Clearly, $\q\models \q'$; as $\q$ is redundancy-free and in view of  Lemma~\ref{lem:decomp}, $\q\not\equiv \q'$ for all $\q'\in \mathcal{F}$. Suppose $\q\models \q'$ and $\q\not\equiv \q'$,  for some $\q'= \rho' \wedge \q_{1}' \wedge \cdots \land \q_{m}'$ in $\sQd$.
%
%We show that there exists a query $\q''\in \mathcal{F}$ with $\q''\models \q'$.
By Lemma~\ref{lem:decomp}, $\rho'\subseteq \rho$ and, for each $j$, $1 \le j \le m$, there exists $f(j)$, $1 \le f(j) \le n$, with $\q_{f(j)} \models \q_{j}'$.
If $\rho'\subsetneq \rho$, then operation 1 gives a $\q''\in \mathcal{F}$ with $\q''\models \q'$. Otherwise, as $\q \not\equiv \q'$, there is $\q_{i}$ such that $\q_{j}' \not\models \q_{i}$ for all $\q_{j}'$, $1 \le j \le m$.  Let $\q''$ be the query obtained from $\q$ by replacing $\q_{i}$ with $\bigwedge\mathcal{F}_{i}$ by operation 2. To establish $\q'' \models \q'$, it suffices to show that $f(j)=i$ implies $\bigwedge\mathcal{F}_{i}\models \q_{j}$. So suppose $f(j)=i$. Then $\q_{j}'\not\models \q_{i}$, and so, by the definition of $\mathcal{F}_{i}$, there is $\q_{i}''\in \mathcal{F}_{i}$ with $\q_{i}''\models \q_{j}$.
\end{proof}

However, strengthening frontiers for queries in $\sQd$ are not necessarily of polynomial size as shown by the following:

\begin{example}\label{expDiam}\em
We represent queries in $\sQd$ of the form~\eqref{dnpath} as $\rho_{0}\dots\rho_{n}$. For $\sigma=\{A_{1},A_{2},B_{1}, B_{2}\}$, %consider the query $\q_1 \land \q_2$, in which
let $\q_{1}= \emptyset (\s\sigma)^{n} \s$, $\q_{2} = \emptyset\sigma^{2n+1}$, and $\s=\{A_{1},A_{2}\}\{B_{1},B_{2}\}$. Using~\cite[Example 18]{DBLP:conf/kr/FortinKRSWZ22}, one can show that any strengthening frontier for the query $\q_1 \land \q_2$ in $\sQd$ is of size $O(2^n)$.
\end{example} 	

Note also that weakening frontiers for $\q \in \Qint^{\sigma}$ can be computed in polytime using operation 1 in Theorem~\ref{thm:secondfrontier} (if $i=1$ and $|\rho_{i}|=1$, we drop $\top$ and take $\Diamond (\rho_2 \land \nxt(\rho_{3} \land \dots \land \nxt \rho_n)$).
On the other hand, strengthening frontiers can be infinite:

\begin{example}\em
All $\Diamond (A \wedge \nxt^{n}A)$, $n > 0$, are in any strengthening frontier for $\q=\Diamond A$ in $\Qint^{\sigma}$, where $\sigma=\{A\}$.
\end{example}

As shown by Theorem~\ref{thm:mostspeciverif}, the lack of polytime computable strengthening frontiers in $\Qd$ affects the complexity of verifying most specific separators. In contrast, the lack of polytime computable strengthening frontiers in $\Qint$ turns out to be harmless. For $n\geq \dep(\q)$, call $\fr \subseteq \Q$ an \emph{$n$-bounded weakening/strengthen\-ing frontier} for $\q$ in $\Q$ if $\fr$ is a weakening/strengthen\-ing frontier for $\q$ in the class $\{\q'\in \Q \mid \dep(\q')\leq n\}$ (we assume $n$ to be given in unary).

\begin{theorem}\label{thm:ninterval}
 Weakening frontiers and $n$-bounded strengthening frontiers in $\Qint^{\sigma}$ can be computed in polytime.
\end{theorem}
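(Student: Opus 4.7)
The plan is to give explicit polytime-constructible frontier sets for both cases and to verify coverage using Lemma~\ref{lem:containment} specialised to interval queries.

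For weakening frontiers, I would follow the recipe already indicated in the remark just before this theorem: apply operation~1 of Theorem~\ref{thm:secondfrontier} (drop an atom from some $\rho_i$). Care is needed because the $\Qint$ normal form forbids $\rho_1=\top$ and $\rho_d=\top$, where $d=\dep(\q)$. If $i=1$ and $|\rho_1|=1$, left-shift to $\Diamond(\rho_{j^\ast}\wedge\nxt(\rho_{j^\ast+1}\wedge\cdots\wedge\nxt\rho_d))$, with $j^\ast$ the smallest index $>1$ such that $\rho_{j^\ast}\neq\top$ (which exists because $\rho_d\neq\top$ in normal form); symmetrically right-truncate when $i=d$ and $|\rho_d|=1$. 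This yields a linearly-sized frontier in polynomial time, with coverage immediate from Lemma~\ref{lem:containment}.

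For $n$-bounded strengthening frontiers, let $d=\dep(\q)\le n$ and write $\q=\Diamond(\rho_1\wedge\nxt(\rho_2\wedge\cdots\wedge\nxt\rho_d))$. Since the other two strengthening operations of Theorem~\ref{thm:firstfrontier} (inserting $\Diamond\top$ or turning $\Diamond$ into $\nxt$) cannot stay in $\Qint$ without producing a second $\Diamond$ or deleting the leading one, I rely on three families of $\Qint$-native operations: (a) extend some $\rho_i$ by an atom $A\in\sigma\setminus\rho_i$; (b) for each $k\in\{1,\dots,n-d\}$ and $A\in\sigma$, prepend a leading atom, giving the length-$(d+k)$ query $\Diamond(A\wedge\nxt^k(\rho_1\wedge\cdots\wedge\nxt\rho_d))$; (c) for each $k\in\{1,\dots,n-d\}$ and $A\in\sigma$, append a trailing atom, giving $\Diamond(\rho_1\wedge\cdots\wedge\nxt(\rho_d\wedge\nxt^k A))$. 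Families (b) and (c) thus enumerate all $\Qint$-native ways of shifting or stretching the interval within the depth budget $n$, producing $O(d|\sigma|+n|\sigma|)$ queries of polynomial size in polynomial time.

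The main obstacle is verifying coverage, i.e., that every proper strengthening $\q'\in\Qint^\sigma$ of $\q$ with $\dep(\q')=m\le n$ entails one of the queries above. Specialising Lemma~\ref{lem:containment} to interval queries forces the containment witness $h$ for $\q',\q$ to be rigid: $h(0)=0$ and $h(i)=k^\ast+i-1$ for $i\ge 1$, where $k^\ast=h(1)\ge 1$ is the shift at which $\q$ embeds into $\q'=\Diamond(\rho_1'\wedge\nxt(\cdots\wedge\nxt\rho_m'))$. Hence $\q'\models\q$ iff some $k^\ast\in\{1,\dots,m-d+1\}$ satisfies $\rho_i\subseteq\rho_{k^\ast+i-1}'$ for all $i\in\{1,\dots,d\}$. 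Assuming $\q'\not\equiv\q$, three cases exhaust the possibilities. If $k^\ast=1$ and $m=d$, then some $\rho_j\subsetneq\rho_j'$ and any $A\in\rho_j'\setminus\rho_j$ yields a type-(a) query entailed by $\q'$. If $k^\ast\ge 2$, then $\rho_1'\neq\top$ (normal form) supplies some $A\in\rho_1'$, and the type-(b) query with shift $k^\ast-1$ and atom $A$ is entailed by $\q'$ via the identity witness, with depth $d+k^\ast-1\le m\le n$. If $k^\ast=1$ and $m>d$, then $\rho_m'\neq\top$ supplies some $A\in\rho_m'$, and the type-(c) query with extension $m-d$ is entailed by $\q'$, with depth $m\le n$. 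The $\Qint$-rigidity of the containment witness, combined with normal-form non-emptiness of the first and last positions of $\q'$, is exactly what makes the case analysis tight.
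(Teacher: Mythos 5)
Your proposal is correct and follows essentially the same route as the paper's appendix proofs for $\Qint^{\sigma}$: the same three strengthening operations (extend some $\rho_i$ by an atom, prepend $A\land\nxt^{k}$, append $\nxt^{k}A$, with $k$ capped by the depth budget), atom-dropping with endpoint normalisation for weakening, and the same surjective / left-endpoint-missed / right-endpoint-missed case split on the rigid containment witness from Lemma~\ref{lem:containment}. Your treatment of consecutive $\top$'s at the two endpoints is in fact slightly more careful than the paper's.
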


Theorems~\ref{thm:firstfrontier}, \ref{thm:secondfrontier} give an alternative way of polytime computing  unique characterisations of $\sQpnd$-queries by  examples \cite{DBLP:conf/kr/FortinKRSWZ22}, opening another route to studying unique characterisations and exact learning of temporal queries.

\section{Complexity}
\label{sec:complexity}

Now we show complexity bounds for the decision and counting problems from Section~\ref{sec:example}, starting with verification and observing that polytime computable $n$-bounded frontiers imply tractable verification of most specific/general separators:

\begin{lemma}\label{lem:cons}
If an $n$-bounded strengthening/weakening frontier for $\q \in \Q$ is
polytime computable in $|\q|$ and $n$, then most specific/general separator verification for $\Q$ is in \PTime.
%
%Suppose one can compute an $n$-bounded strengthening \textup{(}weakening\textup{)} frontier for any $\q$ in $\Q$ in time polynomial in $|\q|$ and $n$ (coded in unary). Then most specific \textup{(}general\textup{)} separation verification for $\Q$  is in \textup{P}.
\end{lemma}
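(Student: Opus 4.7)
The plan is to characterise membership in $\sepms(E,\mathcal{Q})$ (respectively $\sepmg(E,\mathcal{Q})$) through separator non-existence inside a suitable bounded strengthening (resp.\ weakening) frontier of $\q$, and then invoke the polytime computability assumption to obtain the algorithm. I would treat the most specific/strengthening case; the most general/weakening case will be fully symmetric.

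First, I would fix the depth bound $n = \min\{\max\Abox \mid \Abox\in E^+\}$. As observed in Section~\ref{sec:example}, every separator $\q''\in\sep(E,\Q)$ satisfies $\dep(\q'')\le n$, so attention can be restricted to queries of depth at most $n$. Using the hypothesis, I would then compute an $n$-bounded strengthening frontier $\fr$ for $\q$ in $\Q$ in polytime; in particular, $|\fr|$ is polynomial in $|\q|+|E|$.

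The key step is to establish the characterisation: $\q\in\sepms(E,\Q)$ iff $\q\in\sep(E,\Q)$ and no $\q'\in\fr$ separates $E$. The ``only if'' direction is immediate, since each $\q'\in\fr$ is by definition a proper strengthening of $\q$, so a separating $\q'\in\fr$ would witness that $\q$ is not $\models$-minimal in $\sep(E,\Q)$. For the ``if'' direction, I would argue by contradiction: if some $\q''\in\sep(E,\Q)$ satisfies $\q''\models\q$ and $\q''\not\equiv\q$, then $\dep(\q'')\le n$, so the $n$-bounded frontier property yields $\q'\in\fr$ with $\q''\models\q'$. A short check shows that $\q'$ itself separates $E$: for $\Abox\in E^+$, $\Abox,0\models\q''$ combined with $\q''\models\q'$ gives $\Abox,0\models\q'$; for $\Abox\in E^-$, $\q'\models\q$ combined with $\Abox,0\not\models\q$ gives $\Abox,0\not\models\q'$. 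This contradicts the hypothesis, completing the characterisation.

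Given the characterisation, the algorithm verifies $\q\in\sep(E,\Q)$ and then, for each $\q'\in\fr$, checks $\q'\notin\sep(E,\Q)$; since separator verification is in \PTime{} (Section~\ref{prelims}) and $\fr$ is polysize and polytime computable, the whole procedure runs in \PTime. The most general case will use the weakening frontier symmetrically: a weakening separator $\q''$ has $\dep(\q'')\le n$, the frontier delivers $\q'\in\fr$ with $\q'\models\q''$, positives of $\q$ transfer to $\q'$ via $\q\models\q'$, and negatives of $\q''$ transfer to $\q'$ via $\q'\models\q''$. I do not anticipate a serious obstacle; the only delicate point is pinning down a depth bound $n$ that is large enough to capture every potentially relevant proper strengthening or weakening separator, while still being small enough (indeed, polynomial in $|E|$) for the frontier computation to remain polynomial.
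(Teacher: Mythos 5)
Your proof is correct and follows essentially the same route as the paper: fix $n=\min\{\max\D\mid\D\in E^+\}$, compute the $n$-bounded frontier, and use the characterisation that $\q\in\sepms(E,\Q)$ iff $\q$ separates $E$ and no frontier element does. The paper states this equivalence without proof, whereas you verify both directions explicitly; the details you supply (including the depth bound on any competing separator, which is what makes the $n$-bounded frontier sufficient) are exactly the intended ones.
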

\begin{proof}
Let $n= \min \{\max \D \mid \D \in E^+\}$. We compute an $n$-bounded strengthening frontier $\fr$ for $\q$ in $\Q$ and use that $\q \in \sepms(E,\Q)$ iff $\q \in \sep(E,\Q)$ and $\q' \notin \sep(E,\Q)$ for all $\q' \in \fr$. The case of $\sepmg(E,\Q)$ is similar.
\end{proof}
%
%\begin{lemma}\label{lem:cons} Let $\mathcal{Q}$ be a class of $\nxt\Diamond$-queries. If strengthening/weakening frontiers can be computed in polynomial time for $\mathcal{Q}$, then most specific/most general $\mathcal{Q}$-separator verification is in \PTime{}.
%\end{lemma}
\begin{corollary}\label{cor:most}
Most specific/general separator verification is in \PTime{} for $\Qpd$,
$\Qpnd$, and $\Qint$. Most general separator verification is in \PTime{} for $\Qd$.
\end{corollary}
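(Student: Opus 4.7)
The plan is to invoke Lemma~\ref{lem:cons}: it suffices, for each class $\Q$ and each direction (specific/general), to exhibit an $n$-bounded strengthening/weakening frontier for $\q$ in $\Q$ that is polytime-computable in $|\q|$ and $n$, where $n = \min\{\max\D \mid \D \in E^+\}$. Throughout I would work over the signature $\sigma = \bigcap\{\sig(\D) \mid \D \in E^+\}$, since any separator must use atoms from $\sigma$ only.

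For $\Qpd$ and $\Qpnd$, the weakening frontiers produced by Theorem~\ref{thm:secondfrontier} use only operations that preserve or decrease $\dep(\q)$ (dropping an atom, collapsing a nested $\Diamond$, replacing $\nxt$ by $\Diamond$), so they are automatically $n$-bounded and polytime-computable. For strengthening, I would take the frontier from Theorem~\ref{thm:firstfrontier} but restrict operations~2 and~4 (which raise the depth by one) to be applied only when the result still has depth $\leq n$. Operations~1 and~3 are kept unchanged. The covering property in the depth-bounded class still holds: if $\dep(\q)<n$ all four operations remain available; if $\dep(\q)=n$ and $\q'' \models \q$, $\q'' \not\equiv \q$, $\dep(\q'') \leq n$, then any containment witness $h\colon [n] \to [m]$ from Lemma~\ref{lem:containment} (with $m = \dep(\q'')\le n$) is forced to be the identity, so that $\q''$ differs from $\q$ only in the $\rho_i$'s or by a $\Diamond \to \nxt$ swap, and is therefore covered by operations~1 or~3 alone.

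For $\Qint$, Theorem~\ref{thm:ninterval} already delivers polytime-computable weakening and $n$-bounded strengthening frontiers, so Lemma~\ref{lem:cons} applies directly for both most specific and most general verification. For $\Qd$, Theorem~\ref{thm:thirdfrontier} yields a polytime weakening frontier whose operations (dropping atoms from $\rho$, or replacing a conjunct $\q_i$ by the conjunction of its $\sQpd$-weakening frontier) do not increase depth, hence are $n$-bounded; this handles most general separator verification. The corresponding claim for most specific verification in $\Qd$ is \emph{not} asserted precisely because strengthening frontiers in $\sQd$ may be exponentially large (Example~\ref{expDiam}).

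The only real step beyond bookkeeping is confirming the covering property for the depth-truncated version of Theorem~\ref{thm:firstfrontier}; this reduces to the short case analysis on the containment witness sketched above. Once that is verified, the four parts of the corollary follow by applying Lemma~\ref{lem:cons} with the frontiers provided by Theorems~\ref{thm:firstfrontier}, \ref{thm:secondfrontier}, \ref{thm:thirdfrontier}, and~\ref{thm:ninterval}, respectively.
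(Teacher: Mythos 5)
Your proposal is correct and follows essentially the same route as the paper, which likewise derives the corollary by feeding the frontiers of Theorems~\ref{thm:firstfrontier}, \ref{thm:secondfrontier}, \ref{thm:thirdfrontier}, and~\ref{thm:ninterval} into Lemma~\ref{lem:cons}. Your extra check that the strengthening frontier of Theorem~\ref{thm:firstfrontier} can be truncated to depth $\le n$ while preserving the covering property is a detail the paper's one-line proof silently elides, and your case analysis via the containment witness settles it correctly.
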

\begin{proof}
	It follows from Lemma~\ref{lem:cons} and Theorems~\ref{thm:firstfrontier}, \ref{thm:secondfrontier}, \ref{thm:ninterval} that most specific/general separator  verification is in \PTime{} for $\Qpd$, $\Qpnd$, and $\Qint$. By Theorem~\ref{thm:thirdfrontier}, most general separation verification is also in \PTime{} for $\Qd$.
\end{proof}

Most specific separator verification is harder for
$\Qd$:
%But most specific separator verification is not tractable:

\begin{theorem}\label{thm:mostspeciverif}
  For $\Qd$, the most specific separator verification problem coincides with the unique most specific separator verification problem and is \coNP-complete.
\end{theorem}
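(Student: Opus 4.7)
The plan splits into three stages. First, the coincidence of the two verification problems is immediate from $\Qd$ being closed under conjunction: if $\q_1,\q_2\in\sepms(E,\Qd)$, then $\q_1\wedge\q_2\in\Qd$ is still a separator (true in every positive and false in every negative, since both conjuncts are) and is at least as specific as each $\q_i$, so $\models$-minimality forces $\q_1\wedge\q_2\equiv\q_1\equiv\q_2$. Hence $\sepms(E,\Qd)$ contains at most one element up to $\equiv$, so most specific separator verification coincides with unique most specific separator verification.

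Second, for the \coNP{} upper bound, I would argue that, assuming the \PTime-checkable condition $\q\in\sep(E,\Qd)$ holds, $\q\notin\sepms(E,\Qd)$ iff there exists $\q^*\in\sep(E,\Qd)$ with $\q\not\models\q^*$. Writing $\q^*=\rho^*\wedge\q_1^*\wedge\cdots\wedge\q_k^*$ in normal form, Lemma~\ref{lem:decomp} forces $\q\not\models\rho^*$ or $\q\not\models\q_j^*$ for some $j$, producing an atom or path query $\q'\in\Qpd$ with $\q'$ true on every positive (since $\q^*\models\q'$ and $\q^*$ separates) yet $\q\not\models\q'$. Conversely, any such $\q'$ yields the strictly stronger separator $\q\wedge\q'\in\Qd$, which is false on the negatives because $\q$ is. Restricting $\dep(\q')\le\min\{\max\D\mid\D\in E^+\}$ and $\sig(\q')\subseteq\bigcap_{\D\in E^+}\sig(\D)$ keeps $|\q'|$ polynomial, so guessing $\q'$ nondeterministically and verifying both conditions in \PTime{} (using Theorem~\ref{thm:containment} for containment and direct evaluation for truth in positives) places the complement in \NP, hence the problem in \coNP.

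Third, for the \coNP{} lower bound, I would reduce from SAT (or an equivalent \NP-hard problem, such as separator existence for $\Qpd$ with unbounded examples, listed in the table of Section~\ref{sec:example}). The reduction must exploit the specifically $\Qd$ phenomenon of exponentially many incomparable strengthenings exhibited in Example~\ref{expDiam}, since Corollary~\ref{cor:most} already rules out \coNP-hardness for the sibling classes with polytime strengthening frontiers. The encoding would build $E$ and $\q$ so that (i)~$\q\in\sep(E,\Qd)$ is guaranteed by suitable positive and negative examples, (ii)~the candidate path queries $\q'\in\Qpd$ true on every positive example with $\q\not\models\q'$ are in bijection with truth assignments to the SAT variables, and (iii)~those assignments producing such a $\q'$ are exactly the satisfying ones; then $\q$ is most specific iff the formula is unsatisfiable.

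The main obstacle is this third stage: picking a $\q$ of polynomial size whose set of proper path-query strengthenings is combinatorially rich enough to encode satisfiability yet controllable enough---via the positives and clause gadgets---to restrict ``separating strengthenings'' to exactly the satisfying assignments. The shape of Example~\ref{expDiam}, where a short $\Qd$-query admits exponentially many pairwise incomparable strengthenings indexed by a choice of insertion position, points to building $\q$ as a conjunction of two path queries over a signature split into variable-gadget atoms, with positive examples tailored to a clause structure so that only satisfying choices of insertion positions give a separator. The upper-bound extraction argument already delivers the forward direction of the reduction; the delicate part is enforcing the converse.
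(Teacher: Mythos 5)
Your first two stages are correct and essentially coincide with the paper's argument. The coincidence of the two verification problems is exactly the paper's observation that closure of $\Qd$ under $\land$ makes the most specific separator unique whenever $\sep(E,\Qd)\neq\emptyset$, and your upper bound is the paper's \coNP{} argument with one useful detail made explicit: the witness for non-minimality can always be taken to be a single polynomial-size conjunct $\q'\in\Qpd$ (so that $\q\land\q'$ is the strictly stronger separator), which is what legitimises the polynomial bound on the nondeterministic guess. (Minor quibble: the step ``$\q\not\models\q^*$ implies $\q\not\models\rho^*$ or $\q\not\models\q_j^*$ for some $j$'' is just the semantics of conjunction and does not need Lemma~\ref{lem:decomp}, which goes in the other direction.)

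The genuine gap is the third stage: the \coNP-hardness proof is the substantive content of this theorem, and you have only written down the specification a reduction would have to meet, not the reduction itself---you say explicitly that ``the delicate part is enforcing the converse,'' and that part is absent. Moreover, the gadget you gesture at (insertion-position strengthenings of a conjunction of two path queries, as in Example~\ref{expDiam}) is not how the paper's reduction works and would be much harder to control. The paper reduces SAT to the complement as follows: over atoms $A_1,\bar A_1,\dots,A_n,\bar A_n$ it uses positive examples $\D_0$ (all atoms at time $1$) and $\D_0'$ (all atoms at time $2$) to force every separator to be, up to equivalence, a conjunction of depth-one queries $\Diamond\rho$; positive examples $\D_i$ containing $A_i(1),\bar A_i(2)$ and both literals of every other pair at both times ensure that any such $\rho$ with $\q\not\models\Diamond\rho$ contains at most one of $A_i,\bar A_i$ for each $i$, i.e.\ $\rho\subseteq\rho_{\mathfrak a}$ for some assignment $\mathfrak a$; the verified query is $\q=\bigwedge_{\ell}\Diamond\rho_\ell$ with $\rho_\ell$ encoding clause $\psi_\ell$, extra facts at timestamps $2+\ell$ are added to the $\D_i$ to make $\q$ itself a separator, and the single negative example is trivial ($\{A_1(0)\}$). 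One then checks that $\Diamond\rho_{\mathfrak a}$ is a separator not entailed by $\q$ exactly when $\mathfrak a$ satisfies $\varphi$, so $\q$ is the (unique) most specific separator iff $\varphi$ is unsatisfiable. Without constructing something of this kind, the hardness half of the theorem remains unproved.
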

\begin{proof}
As we know, $E$ has a unique most specific separator in $\Qd$ iff $\sep(E,\Qd)\ne \emptyset$.
%as we can simply take the conjunction of all $\q \in \sep(E,\Qd)$ with $\dep(\q)\leq \min \{\max \D \mid \D \in E^+\}$.
Hence most specific separator verification coincides with unique most specific separator verification.
Given $\q$ and $E$, we can check in \NP{} that either $\q \notin \sep(E,\Qd)$ (which is in \PTime{}) or that there is $\q' \in \sep(E,\Qd)$ with $\q'\models \q$ and $\q'\not\equiv \q$ (which is in \NP). This gives the \coNP{} upper bound. The more involved proof of the lower one (in the appendix) is based on ideas similar to those in the proof of Theorem~\ref{lem:consnew} below.
\end{proof}

In the cases when most specific/general separators are not necessarily unique, we obtain the following:

\begin{theorem}\label{thm:uniqueveri}
Unique most specific/general separator verification in $\Qpnd$, $\Qpd$, $\Qint$ as well as unique most general separator verification in $\Qd$ are \coNP-complete.
\end{theorem}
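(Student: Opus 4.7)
The plan is to establish the \coNP{} upper bound by a guess-and-check argument and the \coNP-hardness by reduction from \textsc{UnSat}. I start from the following characterisation: $\q$ is the unique most specific separator of $E$ in $\Q$ iff $\q \in \sep(E,\Q)$ and $\q \models \q'$ for every $\q' \in \sep(E,\Q)$; dually, $\q$ is the unique most general separator iff $\q \in \sep(E,\Q)$ and $\q' \models \q$ for every $\q' \in \sep(E,\Q)$. The forward direction uses that $\sep(E,\Q)$ is finite, so any separator has a $\models$-minimal (resp., maximal) element below (resp., above) it in $\sepms(E,\Q)$ (resp., $\sepmg(E,\Q)$), which must be $\equiv \q$ when $\q$ is unique. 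The converse follows by a short minimality/maximality argument.

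For the upper bound, the complement problem amounts to guessing a witness $\q' \in \sep(E,\Q)$ with $\q \not\models \q'$ (or $\q' \not\models \q$), plus the trivial alternative $\q \notin \sep(E,\Q)$. Both conditions are checkable in polytime using separator verification and Theorem~\ref{thm:containment}. Polynomial witness size is immediate for $\Qpnd$, $\Qpd$, and $\Qint$ from the bounds $\dep(\q') \le \min\{\max\D \mid \D \in E^+\}$ and $\sig(\q') \subseteq \bigcap\{\sig(\D) \mid \D \in E^+\}$. For unique most general verification in $\Qd$, given a witness $\q' = \rho' \land \q_1' \land \dots \land \q_k'$, I retain $\rho'$ together with, for each $\D \in E^-$, one conjunct $\q_j'$ failing on $\D$; the reduced query $\q''$ has at most $|E^-|$ polynomial-size $\Qpd$-conjuncts, still separates $E$, and satisfies $\q'' \not\models \q$ because $\q' \models \q''$ transfers any counterexample to $\q' \models \q$.

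For \coNP-hardness I reduce from \textsc{UnSat}. Given a 3-CNF formula $\varphi$, the plan is to construct $E_\varphi$ and $\q_\varphi$ so that alternative separators incomparable to $\q_\varphi$ in the appropriate direction are in bijection with satisfying assignments of $\varphi$. The positive examples impose a rigid skeleton that forces every candidate separator to instantiate a variable choice at each position — encoded, for $\Qpd$, by which of two atoms appears in a $\rho_i$ and, for $\Qpnd$ and $\Qint$, additionally by toggling $\nxt$ versus $\Diamond$; the negative examples, one per clause of $\varphi$, are built so that blocking them requires the variable choices to satisfy the clause. The target query $\q_\varphi$ is chosen to be a separator independent of $\varphi$'s satisfiability, so that incomparable separators exist precisely when $\varphi$ is satisfiable. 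The construction follows the template of the lower bound in Theorem~\ref{thm:mostspeciverif}, which (as the authors note) is based on ideas similar to Theorem~\ref{lem:consnew}; for the $\Qd$ unique most general case, the coding uses $\Diamond$-conjuncts in place of path positions.

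The main obstacle is the gadget design for the more rigid classes, particularly $\Qint$, where a single interval must simultaneously encode all variable positions and be forced by every positive example; the limited shape of such queries restricts the encoding of variable choices and clause exclusion. Nevertheless, the existing $\NP$-hardness proofs for separator existence in these classes (implicit in Theorem~\ref{lem:consnew}) already use only a bounded signature with one-sided bounded examples, suggesting that similar techniques suffice here, with the roles of positive and negative examples adjusted to force incomparability rather than mere separation.
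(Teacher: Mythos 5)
Your upper bound is correct and is essentially the paper's argument: the complement is decided by the non-deterministic guess of a witness separator, with all checks (separation, containment via Theorem~\ref{thm:containment}, and the polynomial bounds on depth and signature) in \PTime. Your observation that a $\Qd$-witness can be pruned to $\rho'$ plus one failing conjunct per negative example is a sound way to bound the witness size. So far, so good.

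The hardness part, however, has a genuine gap, and not only because it is a plan rather than a construction. You propose to reduce from \textsc{UnSat} by exhibiting a query $\q_\varphi$ that separates $E_\varphi$ \emph{independently of the satisfiability of $\varphi$} and is dominated (or dominates) exactly when $\varphi$ is unsatisfiable. That template is the one used in Theorem~\ref{thm:mostspeciverif}, but it works there only because the designated query is a \emph{conjunction} $\bigwedge_{\ell}\Diamond\rho_{\ell}$ of clause gadgets, which lives in $\Qd$ and is automatically the unique most specific separator whenever it is most specific (closure under $\land$). No such satisfiability-independent separator is available in $\Qpnd$, $\Qpd$ or $\Qint$: in the paper's gadget the \emph{only} path separators are the queries $\Diamond\rho_{\mathfrak a}$ for satisfying assignments $\mathfrak a$ (Lemma~\ref{lem:basic-reduction0}(ii)), so when $\varphi$ is unsatisfiable there is nothing to verify against. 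The paper therefore does \emph{not} reduce from \textsc{UnSat}; it reduces from the \NP-hard problem ``given $\varphi$ and a satisfying assignment $\mathfrak a_0$, is there a \emph{different} satisfying assignment?'', takes the input query to be $\Diamond\rho_{\mathfrak a_0}$, and uses the fact that all the $\Diamond\rho_{\mathfrak a}$ are pairwise $\models$-incomparable, so that $\Diamond\rho_{\mathfrak a_0}$ is simultaneously the unique most specific and the unique most general separator iff $\mathfrak a_0$ is the only model. Your reduction source forces you to solve a strictly harder gadget-design problem (a default separator that is extremal exactly for unsatisfiable $\varphi$, in classes not closed under $\land$), and you give no construction for it; the difficulty you flag for $\Qint$ is real but it already bites for $\Qpd$. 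To repair the proof, switch the reduction source to the ``another satisfying assignment'' problem and reuse the example set from Theorem~\ref{lem:consnew}, as the paper does.
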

\begin{proof}
The upper bounds follow from Corollary~\ref{cor:most}.
The lower ones are by reduction of the problem to decide whether, for a Boolean formula $\varphi$ and a satisfying assignment $\mathfrak a$,  there is a satisfying assignment for $\varphi$ different from $\mathfrak a$. We use ideas similar to those in the proof of Theorem~\ref{lem:consnew}.
\end{proof}

We now turn to the existence and counting problems. Recall that \textsc{Unique SAT} is the problem to decide whether there is exactly
one satisfying assignment for a propositional formula~\cite{DBLP:journals/iandc/BlassG82}.
It is in $\Delta_{2}^{p}$, \coNP-hard, and complete for the class US of problems solvable by a non-deterministic polynomial-time Turing machine $M$ that accepts iff $M$ has exactly one accepting path. A counting problem is in the class $\sharp \PTime$ if there is such $M$ whose number of accepting paths coincides with the number of solutions to the problem~\cite{DBLP:books/daglib/0023084}.

%Call a class $\mathcal{Q}$ of $\nxt\Diamond$-queries \emph{efficient} if it can be decided in \PTime whether a $\nxt\Diamond$-query is in $\mathcal{Q}$. Clearly all classes of queries considered in this paper are efficient.

\begin{theorem}\label{thm:main}
For $\Qpnd$, $\Qpd$, $\Qint$, counting most speci\-fic/general separators is $\sharp P$-complete\textup{;} the existence of a unique most specific/general sepa\-ra\-tor is \textup{US}-complete. The same  holds for most general separators in $\Qd$.
\end{theorem}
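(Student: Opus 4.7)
The plan is to prove matching upper and lower bounds using the polytime frontier computations from Section~\ref{sec:frontiers} together with parsimonious reductions from \textsc{\#SAT} and \textsc{Unique SAT}.

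For the upper bounds, I first fix a canonical normal form: atoms within each $\rho_i$ are listed lexicographically, and in the $\Qd$-case the $\q_i$-conjuncts of a redundancy-free query are also listed lexicographically. Since every $\q\in \sep(E,\Q)$ satisfies $\dep(\q)\le \min\{\max\Abox\mid \Abox\in E^{+}\}$ and $\sig(\q)\subseteq \bigcap\{\sig(\Abox)\mid \Abox\in E^{+}\}$, each equivalence class of extremal separators has a unique syntactic representative of polynomial size. By Corollary~\ref{cor:most} (for $\Qpnd$, $\Qpd$, $\Qint$) and Theorem~\ref{thm:thirdfrontier} (for $\Qd$ most general separators), checking whether such a representative $\q$ is an extremal separator is in polytime: verify that $\q$ separates $E$, then compute the relevant $n$-bounded frontier for $n=\min\{\max\Abox\mid\Abox\in E^{+}\}$ and check that no element of it separates $E$. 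Hence a nondeterministic polytime machine $M$ that guesses a canonical representative and accepts iff it is extremal has accepting paths in one-to-one correspondence with the elements of $\sepms(E,\Q)$ or $\sepmg(E,\Q)$, yielding $\sharp P$-membership for counting and US-membership for unique existence.

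For the lower bounds, I plan to sharpen the constructions underlying Theorem~\ref{lem:consnew} and Theorems~\ref{thm:mostspeciverif}, \ref{thm:uniqueveri} into \emph{parsimonious} reductions. Given a Boolean formula $\varphi$ over variables $x_1,\dots,x_n$, I build an example set $E_\varphi$ over a small fixed signature plus auxiliary position-marker atoms so that any extremal separator of $E_\varphi$ is forced into a fixed skeleton of $n$ slots, and within each slot $i$ exactly one of two designated atoms (encoding $x_i=0$ or $x_i=1$) may appear. Positive examples encode all $2^n$ valuations, while negative examples prune the slot-choices corresponding to falsifying assignments of $\varphi$. The extremal separators of $E_\varphi$ modulo $\equiv$ are then in bijection with the satisfying assignments of $\varphi$, yielding $\sharp P$-hardness of counting and, since the bijection preserves uniqueness, US-hardness of unique existence. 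For $\Qint$, whose separators are single intervals, the assignment is encoded inside a fixed-length interval content; for $\Qd$ most general separators, Lemma~\ref{lem:decomp} ensures that each most general separator decomposes into $\Qpd$-path-conjuncts, and the choice of branches captures the assignment.

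The main obstacle is parsimony: I must forbid ``accidental'' extremal separators that do not match the skeleton form, and guarantee that distinct satisfying assignments yield non-equivalent extremal separators. The former is achieved by calibrating the positive and negative examples so that every candidate separator not of the skeleton shape either fails to separate, can be further weakened/strengthened within the shape (hence is not extremal), or is equivalent to some skeleton separator. The latter follows from the containment criterion of Lemma~\ref{lem:containment} applied to the distinctive slot atoms, which makes any two skeleton separators arising from different assignments $\models$-incomparable and non-equivalent. Once these two points are in place, correctness of the reductions is routine and the theorem follows.
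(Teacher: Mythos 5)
Your overall strategy matches the paper's: the upper bounds are obtained exactly as in Theorem~\ref{lem:cons2} (a nondeterministic machine that guesses a candidate, checks separation, and checks that nothing in the polytime-computable $n$-bounded frontier separates), and the lower bounds are obtained exactly as in Theorem~\ref{lem:consnew} (a parsimonious reduction from SAT producing an example set whose separators are pairwise $\models$-incomparable and in bijection with the satisfying assignments, so that every separator is simultaneously most specific and most general). Your explicit insistence on a canonical syntactic representative per equivalence class is a point the paper leaves implicit, and is a genuine improvement in precision for the parsimony of the counting machine.

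Two steps are under-justified. First, for most general separators in $\Qd$ you assert that every equivalence class has a polynomial-size representative, but this is exactly where the argument is nontrivial: strengthening frontiers in $\Qd$ can be exponential (Example~\ref{expDiam}), and a redundancy-free $\Qd$-query can a priori have many conjuncts. The paper's machine guesses $\q = \rho \wedge \Diamond\q_1 \wedge \dots \wedge \Diamond\q_l$ with $l \le |E^{-}|$; the justification is that dropping conjuncts weakens a query, and keeping one falsifying conjunct per negative example preserves separation, so any most general $\Qd$-separator is equivalent to one with at most $|E^{-}|$ conjuncts. Without this bound your guess step is not polynomially bounded and the $\sharp\PTime$/US membership for $\Qd$ does not follow. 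Second, in the lower bound you write that ``positive examples encode all $2^n$ valuations''; read literally this destroys polynomiality of the reduction. What is actually needed (and what the paper does with $n+2$ positive examples $\D_0,\D_0',\D_1,\dots,\D_n$ over atoms $A_i,\bar A_i$) is a polynomial-size set of positive examples that jointly force every separator into the skeleton shape and are satisfied by every skeleton query; the valuations live in the space of candidate queries, not in $E^{+}$. With these two points repaired, your argument coincides with the paper's.
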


Theorem~\ref{thm:main} follows from the very general and fine-grained complexity results provided by Theorems~\ref{lem:cons2} and \ref{lem:consnew} below.

\begin{theorem}\label{lem:cons2}
Let $\Q \subseteq \Qpnd$ be a class with polytime decidable membership and polytime computable $n$-bounded strengthening/weakening frontiers for queries in $\Q$. Then
%Let $\Q$ be an efficient \nz{define here} class of $\nxt\Diamond$-path queries with polynomial-time computable $n$-bounded strengthening/ weakening frontiers in $\Q$. Then
%
\begin{itemize}\itemsep=0pt
\item[--] counting most specific/general $\mathcal{Q}$-separators is in $\sharp P$\textup{;}

\item[--] the existence of a unique most specific/general $\mathcal{Q}$-sepa\-ra\-tor is in \textup{US}.
\end{itemize}
This also holds for most general separators in classes of conjunctions of $\nxt\Diamond$-path queries closed under dropping con\-juncts and having polytime computable weakening frontiers.
\end{theorem}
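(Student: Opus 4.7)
The plan is to exhibit, for each of the two problems, a nondeterministic polynomial-time Turing machine $M$ whose accepting paths are in bijection with the equivalence classes in $\sepms(E,\Q)$ (respectively $\sepmg(E,\Q)$). By the definitions of $\sharp P$ and \textup{US}, this immediately gives $\sharp P$-membership for counting and \textup{US}-membership for uniqueness (accept iff exactly one path accepts).

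First I would bound the size of candidate separators in terms of $|E|$. For $\q \in \sQpnd$ the observations in Section~\ref{sec:example} already yield $\dep(\q) \le n := \min\{\max\D \mid \D \in E^+\}$ and $\sig(\q) \subseteq \bigcap_{\D \in E^+}\sig(\D)$; in normal form this forces $|\q|$ to be polynomial in $|E|$. For the conjunctive setting of the second assertion, each conjunct is a path query and obeys the same bounds, so it suffices to bound the number of conjuncts. In a redundancy-free most general separator $\q = \q_1 \wedge \cdots \wedge \q_k$, maximality together with closure under dropping conjuncts implies that, for every $i$, there is a witness $\D_i \in E^-$ satisfying $\bigwedge_{j \neq i}\q_j$ but not $\q_i$ (otherwise dropping $\q_i$ yields a strictly weaker separator). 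These witnesses are pairwise distinct: if $\D_i = \D_j$ for $i \neq j$, then $\D_i \models \q_j$ by the witness property of $\q_i$ and simultaneously $\D_i \not\models \q_j$ (since $\D_j = \D_i$ fails $\q_j$), a contradiction. Hence $k \le |E^-|$ and the overall size of $\q$ is polynomial in $|E|$.

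Next I would fix a canonical representative of each equivalence class so that distinct accepting paths correspond to distinct classes. Fix an arbitrary total order on atoms, list atoms within each $\rho_i$ of a path query in normal form in this order, and for the conjunctive case additionally require redundancy-freeness and list the conjuncts in lexicographic order of their representatives. Canonicity is polytime-checkable. The NTM $M$ on input $E$ then: (i)~guesses a query $\q$ of the polynomial size fixed above in canonical normal form; (ii)~verifies $\q \in \Q$ (polytime by hypothesis); (iii)~verifies $\q \in \sep(E,\Q)$ (polytime); (iv)~using the assumed polytime procedure, computes the $n$-bounded strengthening (respectively weakening) frontier $\fr$ for $\q$ in $\Q$ and accepts iff no $\q' \in \fr$ separates $E$. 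Correctness of step (iv) relies on the fact that every separator has depth $\le n$, so every proper strengthening (respectively weakening) of $\q$ that is itself a separator lies in the $n$-bounded subclass and is therefore entailed by (respectively entails) some element of $\fr$; and any member of $\fr$ separating $E$ would be a strict refinement of $\q$ in $\sep(E,\Q)$, contradicting extremality.

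The main obstacle is the size bound in the conjunctive case: the private-witness argument that translates redundancy-freeness and maximality into a polynomial bound on the number of conjuncts is the crux step, since, unlike in the pure path-query case, the depth and signature bounds do not by themselves constrain the number of conjuncts. A secondary difficulty is the design of the canonical normal form, which must jointly disambiguate the ordering of atoms and of the path queries inside a conjunction so that different accepting paths of $M$ always yield distinct equivalence classes; without exact canonicity the $\sharp P$ count would overcount and the \textup{US} reduction would break.
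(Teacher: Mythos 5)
Your proposal is correct and follows essentially the same route as the paper: a nondeterministic polynomial-time machine that guesses a polynomial-size candidate (depth bounded by $\min\{\max\D \mid \D\in E^+\}$, and at most $|E^-|$ conjuncts in the conjunctive case), checks separation, and checks that no member of the polytime-computable $n$-bounded frontier separates. The paper's proof is only a sketch, and your private-witness argument for the bound $l\le|E^-|$ and the explicit canonical-form device for making the count parsimonious are exactly the details it leaves implicit.
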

\begin{proof}
Given $E$, construct a TM $M$ that guesses a $\q\in \Q$ with $\dep(\q) \leq n = \min \{\max \D \mid \D \in E^+\}$ and accepts if $\q$ separates $E$ and no $\q'$ in the $n$-bounded frontier for $\q$ in $\Q$ separates $E$. As we know, the required checks are in \PTime{}. In the second claim, $M$ guesses a query $\q = \rho \wedge \Diamond \q_{1} \wedge \dots \land \Diamond \q_{l}$ with $\q_i \in \Qpnd$ and $l\leq |E^{-}|$.
\end{proof}

For the lower bounds, we take into account the boundedness of $\sig(E)$ and the cardinalities $|E^+|$ and $|E^-|$ of positive and negative examples in $E$. The next result provides matching lower bounds for Theorem~\ref{thm:main} even if the signature is bounded and only one of $|E^+|$ and $|E^-|$ is unbounded, except for $\Qint$ and $\Qd$, where $|E^+|$ has to be unbounded.

\begin{theorem}\label{lem:consnew}
Let $\Q \subseteq \Qpnd$ be any class of queries containing all $\Diamond\rho$ with a conjunction of atoms $\rho$. Then
\begin{itemize}\itemsep=0pt
\item[--] counting most specific/general $\mathcal{Q}$-sepa\-ra\-tors is $\sharp \PTime$-hard\textup{;}

\item[--] the existence of unique most specific/general separator is \textup{US}-hard
\end{itemize}
even for $E = (E^+,E^-)$ and $\sigma = \sig(E)$ with
$(a)$ $|E^{-}|\leq 1$ or $(b)$ $|E^{-}| \leq 1$, $|\sigma| = 2$ and
$\mathcal{Q}=\Qint$ or $\mathcal{Q}\supseteq \Qpd$, or $(c)$ $|E^{+}| \leq 4$, $|\sigma| = 3$, $\mathcal{Q}\supseteq \Qpd$.
This result also holds for most general separators in $\Qd$ and $|E^{-}|=1$, $|\sigma|=2$.
\end{theorem}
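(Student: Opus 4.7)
The plan is to establish each hardness claim by a single parsimonious polynomial-time reduction, from \#SAT for the $\sharp\PTime$-hardness part and from Unique SAT for the US-hardness part; because a parsimonious reduction preserves the exact number of solutions, a single construction in each case yields both hardness statements at once. Given a CNF formula $\varphi(x_1,\dots,x_n)$ with clauses $C_1,\dots,C_m$, the goal is to build an example set $E_\varphi$ respecting the stipulated restrictions together with a bijection between satisfying assignments of $\varphi$ and most general (resp.\ most specific) $\mathcal{Q}$-separators of $E_\varphi$.

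For case (a), with $|E^-|\leq 1$ and unbounded signature and $|E^+|$, I would use two atoms $A_i^1, A_i^0$ per variable $x_i$ (encoding the literals $x_i$ and $\neg x_i$) and design the positive examples so that a query $\Diamond\rho$ with a conjunction-of-atoms body $\rho$ separates $E_\varphi$ iff $\rho$ contains at least one of $A_i^1, A_i^0$ for every $i$ and at least one literal from every clause $C_j$. The single negative example is crafted to kill separators whose $\rho$ is not inclusion-minimal with these properties, yielding the bijection $\alpha \mapsto \Diamond\{A_i^{\alpha(x_i)} \mid 1 \le i \le n\}$ between satisfying assignments and most general separators (a symmetric construction handles most specific separators). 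For case (b) with $|\sigma|=2$, I would use a positional encoding: each variable $x_i$ is represented by the letter at position $i$ of a word over $\{A,B\}$, and its truth value by the choice of letter. The correspondence recalled in Section~\ref{prelims} between $\Qpd$- and $\Qint$-queries and subsequence/subword patterns lets us port the clause-by-clause gadget of (a) to sequences, using blocks of repeated letters to simulate the interval structure required by $\Qint$. For case (c) with $|E^+|\leq 4$ and $|\sigma|=3$, the roles of positive and negative examples are inverted: four carefully crafted positive examples pin down the \emph{template} of possible most general separators, while the unbounded list of negative examples, one per clause, excludes those separators corresponding to assignments that falsify the clause. Finally, the $\mathcal{Q}=\Qd$ statement with $|E^-|=1, |\sigma|=2$ is obtained by combining the positional encoding of (b) with the normal-form decomposition of Lemma~\ref{lem:decomp}: each conjunct $\Diamond\q_j$ in a redundancy-free $\Qd$-separator encodes one clause-satisfaction requirement, and Lemma~\ref{lem:decomp} guarantees that most general $\Qd$-separators correspond precisely to minimal families of $\Qpd$-separators, which we arrange to biject with satisfying assignments.

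The main obstacle in each case is verifying the bijection \emph{exactly}---ruling out spurious separators that are not of the intended form. One has to show not only that every satisfying assignment yields a most general (resp.\ specific) separator, but also, conversely, that every most general (specific) $\mathcal{Q}$-separator of $E_\varphi$ arises in this way. For this I would invoke the containment criterion of Lemma~\ref{lem:containment} to classify when two candidate separators are comparable, and the frontier operations of Theorems~\ref{thm:firstfrontier}--\ref{thm:secondfrontier} (and Theorem~\ref{thm:thirdfrontier} for the $\Qd$ case) to decide when a candidate can still be properly strengthened or weakened within $\mathcal{Q}$. The tightest constraint is $|\sigma|=2$ in (b) and in the $\Qd$ case, which forces all formula-specific information into the temporal structure of the examples and drives the positional encoding; a close second is $|E^+|\leq 4$ in (c), which forces the formula to live entirely on the negative side---an arrangement dual to the usual SAT-to-separation reductions and the most delicate to tune.
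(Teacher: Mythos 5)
Your high-level strategy is the right one and matches the paper for case $(a)$: a parsimonious reduction from SAT with two atoms per variable, positive examples that force any separator to be of the form $\Diamond\rho$ with at most one of $A_i,\bar A_i$ in $\rho$, negative examples that force at least one of them and kill clause-violating choices, and a merge of the negatives into a single instance. One point you get slightly wrong there: you do not need ``a symmetric construction'' for most specific separators. The whole reason a single construction yields hardness for counting/uniqueness of \emph{both} most specific and most general separators is that all separators of $E_\varphi$ turn out to be pairwise $\models$-incomparable (they are all queries $\Diamond\rho_{\mathfrak a}$ of the same shape), so $\sep=\sepms=\sepmg$; no appeal to the frontier theorems is needed for the verification step.

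The genuine gap is in cases $(b)$ and $(c)$, where the entire difficulty lives. Your ``positional encoding over $\{A,B\}$, porting the clause gadget of $(a)$'' is exactly the part that cannot be waved at: with $|\sigma|=2$ there is no room for one atom per literal, and it is not at all clear how a clause constraint is enforced by the letter chosen at position $i$ of a common subsequence. The paper does not build this from scratch; it invokes known parsimonious \NP-hardness constructions from the string literature --- the reduction of 3SAT to longest common subsequence over a two-letter alphabet with bounded run-lengths (Blin et al.) for case $(b)$, and Fraser's reduction to the \emph{consistent subsequence} problem (two positive sequences, many negative ones, alphabet $\{\#,0,1\}$) for case $(c)$ --- and then adds auxiliary examples such as $\D_1^+=\emptyset\{H,T\}^{11n}$, $\D_2^+=\emptyset(\emptyset^{11n}\{H,T\})^{11n}$ and their length-$(11n-1)$ negative counterparts, whose sole purpose is to force every separator in $\Q$ to be a $\Diamond$-path \emph{sequence} query of exactly the prescribed length (this also squeezes out $\nxt$ for $\Q\supseteq\Qpd$). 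Without either citing such a result or reconstructing it, your proof of $(b)$ and $(c)$ does not go through. Finally, your treatment of $\Qd$ is both more complicated than necessary and not quite right: since $|E^-|=1$, any separating $\Qd$-query must have a single $\Qpd$-conjunct that is already false on the negative example and hence separates, so the most general $\Qd$-separators coincide (up to equivalence) with the most general $\Qpd$-separators and the $\Qd$ claim is inherited verbatim from case $(b)$; there is no bijection with ``minimal families of $\Qpd$-separators'' to set up.
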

\begin{proof}
We sketch the proof of the first claim by a parsimonious reduction from SAT for an unbounded number of negative examples (which can be easily merged into one). Take a CNF $\varphi=\psi_1\wedge \dots \wedge \psi_k$ with clauses $\psi_i$ over variables $x_1, \dots, x_n$. 
%(which do not occur both positively and negatively the same $\psi_i$). 
We construct $E=(E^{+},E^{-})$ such that there is a bijection between the satisfying assignments for $\varphi$ and the separators for $E$ in $\Qnd$ (even queries of the form $\Diamond\rho$). The claim follows from the fact that the separating queries are mutually $\models$-incomparable. Define the positive examples $E^+ = \{\D_0, \D_0',\D_1,\dots,\D_n\}$ with $2n$ atoms $A_1,\bar A_1, \dots, A_n,\bar A_n$ by taking
\begin{align*}
& \D_0 = \{A_i(1),\bar A_i(1)\mid 1\leq i\leq n\},\\
& \D_0'=\{A_i(2), \bar A_i(2)\mid 1\leq i\leq n\},\\
& \D_i = \{A_i(1), \bar A_i(2), A_j(1),\bar A_j(1),A_j(2),\bar A_j(2) \mid i \ne j\},
\end{align*}
for $1\leq i\leq n$. Let $E^{-}=\{\D_1^1,\dots,\D_k^1,\D^2_1,\dots,\D^2_n\}$, where $\D_i^1$, $1\leq i\leq k$, comprises $\bar A_j(1)$ if $x_j$ does not occur negatively in $\psi_i$, and $A_j(1)$ if $x_j$ does not occur positively in $\psi_i$, and
$\D_i^2 = \{A_j(1),\bar A_j(1) \mid j\neq i\}$.
For an assignment $\mathfrak a$ for $x_1,\dots,x_n$, let $\rho_{\mathfrak a}$ contain $A_i$ if $\mathfrak a(x_i)=1$ and $\bar A_i$, otherwise. Now our claim follows from the following: 
$(i)$ if $\mathfrak a$ satisfies $\varphi$, then $\Diamond \rho_{\mathfrak a}$ separates $E$; $(ii)$ if $\q\in\Qpnd$ separates $E$, then $\q\equiv\Diamond \rho_{\mathfrak a}$, for some $\mathfrak a$ satisfying $\varphi$.		

To bound $\sigma$ and/or $E^+$, we give  parsimonious reductions from SAT and employ techniques for the longest common subsequence problem and separator existence for sequence queries in $\Qpd$~\cite{DBLP:conf/cpm/BlinBJTV12,DBLP:journals/tcs/Fraser96}.
\end{proof}

%Applying results from the literature on algorithms for sequences needs some care.
Again, re-using techniques
%or algorithms
from the literature on algorithms for sequences needs some care.
Similar to separability, for example sets of sequences, counting most general/specific sequence $\Qint$-separators is easily seen to be in $\PTime$ even for unbounded signatures and example sets, but Theorem~\ref{lem:consnew}
shows that this is not so for $\Qint$ on non-sequence data instances.

Surprisingly, the complexities of counting most general separators and deciding the existence of a unique one diverge if we bound the number of examples, cf.\ Theorems~\ref{thm:algorithms} and~\ref{forDiamond}.

\begin{theorem}\label{lem:consnew+}
Let $\Q=\Qd$ or $\Q \subseteq \Qpnd$ be any class containing  all $\Diamond\rho$ with a conjunction of atoms $\rho$. Then counting most general $\mathcal{Q}$-sepa\-ra\-tors is $\sharp \PTime$-hard even for example sets $E = (E^+,E^-)$ with $|E^{+}|=2$ and $|E^{-}|=1$.
\end{theorem}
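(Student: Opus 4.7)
The plan is a parsimonious reduction from the $\sharp\PTime$-hard problem of counting the minimal vertex covers of a graph (equivalently, the maximal independent sets). Given $G=(V,E)$ with $V=\{v_1,\ldots,v_n\}$ and $E=\{e_1,\ldots,e_m\}$, where $e_k=(v_{i_k},v_{j_k})$ and WLOG $E\ne\emptyset$, I construct over $\sigma=\{A_1,\ldots,A_n\}$ the data instances
\begin{align*}
\D^+_1 &= \{A_i(1)\mid 1\le i\le n\},\\
\D^+_2 &= \{A_i(2)\mid 1\le i\le n\},\\
\D^- &= \bigcup_{k=1}^{m}\{A_i(k)\mid i\notin\{i_k,j_k\}\},
\end{align*}
and take $E=(\{\D^+_1,\D^+_2\},\{\D^-\})$, so that $|E^+|=2$ and $|E^-|=1$. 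I will exhibit a bijection via $\rho\mapsto\Diamond\rho$ (identifying $\rho$ with $\{v_i\mid A_i\in\rho\}$) between the minimal vertex covers of $G$ and the most general $\Q$-separators of $E$.

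The argument rests on the depth bound $\dep(\q)\le\min\{\max\D:\D\in E^+\}=1$ for any separator $\q$. A depth-$0$ query $\rho$ satisfies $\D^+_1,0$ only if $\rho\subseteq\emptyset$, so $\rho=\top$, which is not a separator. For a depth-$1$ path query $\rho_0\wedge\op_1\rho_1\in\Qpnd$ in normal form (so $\rho_1\ne\top$), satisfaction of $\D^+_1,0$ forces $\rho_0=\top$; if moreover $\op_1=\nxt$, satisfaction of $\D^+_2,0$ forces $\rho_1\subseteq\emptyset$, so $\op_1=\Diamond$ is the only remaining case. A direct check shows that $\Diamond\rho$ separates $E$ iff no timepoint of $\D^-$ contains $\rho$, iff $\rho\cap\{A_{i_k},A_{j_k}\}\ne\emptyset$ for every edge $e_k$, iff $\{v_i\mid A_i\in\rho\}$ is a vertex cover of $G$. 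Lemma~\ref{lem:containment} gives $\Diamond\rho\models\Diamond\rho'$ iff $\rho'\subseteq\rho$, so the $\models$-maximal separating $\Diamond\rho$ queries correspond exactly to the minimal vertex covers of $G$.

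It remains to rule out other MGS. For any class $\Q\subseteq\Qpnd$ containing all $\Diamond\rho$, the separators in $\Q$ coincide with the $\Diamond\rho$-separators of the previous paragraph, and the MGS are exactly the minimal-VC $\Diamond\rho$ queries. For $\Qd$, every separator reduces via the depth bound and $\rho_0=\top$ to a nonempty conjunction $\bigwedge_i\Diamond\rho_i$; such a conjunction is $\models$-strictly stronger than each of its conjuncts, and it can separate $E$ only if at least one $\rho_i$ is a vertex cover (else every $\Diamond\rho_i$, and hence the conjunction itself, is satisfied by $\D^-$), so some $\Diamond\rho_i$ alone is a strictly weaker separator, contradicting the conjunction being MGS. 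Hence MGS in $\Qd$ also coincide with minimal-VC $\Diamond\rho$ queries, yielding $\sharp\PTime$-hardness. The main obstacle, ruling out conjunction MGS in $\Qd$, is thus handled by observing that any separating conjunction has a strictly weaker separating conjunct.
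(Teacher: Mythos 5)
Your proof is correct, but it takes a genuinely different route from the paper's. The paper reduces from counting satisfying assignments of \emph{monotone} CNF formulas, reusing the gadgets of Theorem~\ref{lem:consnew}: positive examples $\D_0,\D_0'$ saturated with $2n$ atoms $A_i,\bar A_i$ at times $1$ and $2$, and a single merged negative example built from clause gadgets $\D^1_i$ plus gadgets $\D^2_i$ forcing at least one of each pair $A_i,\bar A_i$; the most general separators then turn out to be exactly the pairwise-incomparable queries $\Diamond\rho_{\mathfrak a}$ for satisfying assignments $\mathfrak a$. You instead reduce from counting minimal vertex covers (equivalently maximal independent sets, a standard $\sharp\PTime$-complete problem), and the extremality of the separator maps \emph{directly} onto the minimality of the cover via $\Diamond\rho\models\Diamond\rho'$ iff $\rho'\subseteq\rho$. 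Both constructions force the $\Diamond\rho$ shape in the same way (two saturated positive examples at times $1$ and $2$), but yours needs only $n$ atoms and no complementary-pair gadgets, at the cost of invoking hardness of counting maximal independent sets rather than monotone $\sharp$SAT; the handling of $\Qd$ by extracting a separating conjunct is essentially the paper's argument too. One small imprecision: a conjunction $\bigwedge_i\Diamond\rho_i$ is \emph{not} always strictly stronger than each of its conjuncts (a conjunct may be redundant); the correct statement is that it entails the separating conjunct $\Diamond\rho_j$, so either it is not $\models$-maximal or it is equivalent to $\Diamond\rho_j$ and falls into the single-diamond case already analysed. This is a one-sentence fix and does not affect the result.
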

\begin{proof} The proof is by reduction from counting satisfying assignments
  for monotone formulas:
%   CNFs:
  $E^{-}$ is as in the proof of Theorem~\ref{lem:consnew} and the positive examples are $\D_{0}$, $\D_{0}'$.
\end{proof}
Theorem~\ref{lem:consnew+} does not hold for counting
most specific separators in $\Qint$, which is easily seen to be in
$\PTime$ if $|E^+|$ is bounded. It remains open whether some version of
this theorem holds for most specific separators in
$\Qpnd$ or $\Qpd$.

\newcommand{\tp}{t_+}
\newcommand{\tm}{t_-}
\newcommand{\cp}{c_+}
\newcommand{\cm}{c_-}
\newcommand{\tpcp}{t_+^{c_+}}
\newcommand{\tmcm}{t_-^{c_-}}

\section{Algorithms}\label{algs}

The frontiers defined in Section~\ref{sec:frontiers} give a polytime algorithm for computing a most specific/general separator starting from \emph{any} given separator. Suppose $\Q\in \{\Qpd, \Qpnd,\Qint\}$, we are given $\q \in \sep(E,\Q)$ and need a most specific separator. By  Theorems~\ref{thm:firstfrontier}, \ref{thm:secondfrontier}, the length of the longest $\models$-chain in $\sep(E,\Qpnd)$ is polynomial in $|E|$.\footnote{In contrast, the proof of Theorem~\ref{lem:consnew} shows the size of the maximal antichain in $\sep(E,\Qpd)$ is in general exponential in $|E|$.} We take, if possible, some $\q'\in \sep(E,\Q)$ in the strengthening frontier for $\q$, then $\q'' \in \sep(E,\Q)$ in the strengthening frontier for $\q'$, etc. This process terminates after polynomially-many steps, returning a most specific $\Q$-separator (and so the unique one, if any).

Thus, we can focus on algorithms deciding the existence of a (unique most specific/general) separator and constructing it. In the next theorem, we compute not just a random input separator, but a longest/shortest one. As strengthening/weakening frontiers contain queries that are not shorter/longer than the input, the procedure above will compute a longest most specific/shortest most general separator. 

%In the next theorem, we assume that $E = (E^+,E^-)$, $\sigma = \sig(E)$,  $\tp$ and $\tm$ are the maximum timestamps in $E^+$ and $E^-$, respectively, $\cp = |E^+|$ and $\cm = |E^-|$.

\begin{theorem}\label{thm:algorithms}
Let $E = (E^+,E^-)$, $\sigma = \sig(E)$, $\tp$/$\tm$ be the maximum timestamp in $E^+$/$E^-$, $\cp = |E^+|$, $\cm = |E^-|$, and $\Q \in\{\Qpd, \Qpnd\}$.
The following can be done in time $O(\tpcp \tmcm)$\textup{:}
\begin{itemize}\itemsep=0pt
\item[$(a)$] deciding whether $\sep(E,\Q)\ne \emptyset$\textup{;}

\item[$(b)$] computing a longest/shortest separator in $\sep(E,\Q)$\textup{;}

\item[$(c)$] deciding the existence of a unique most specific $\Q$-separator and a unique most general $\Qpd$-separator, and constructing such a separator.
\end{itemize}
For bounded $\cp$ and $\cm$, problem $(a)$ is in $\NL$.
\end{theorem}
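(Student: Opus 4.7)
The plan is to construct, for each $\Q\in\{\Qpd,\Qpnd\}$, a directed graph $G_\Q(E)$ of size $O(\tpcp\tmcm)$ whose vertices encode the matching progress of a partial query in each example and whose edges correspond to extending the query; separators will correspond to paths from a designated initial vertex to an ``accepting'' vertex, so that parts (a)--(c) reduce to reachability-style problems on $G_\Q(E)$.

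A vertex of $G_\Q(E)$ is a tuple $(j^+_1,\dots,j^+_{\cp},j^-_1,\dots,j^-_{\cm})$ where $j^+_k\in[0,\tp]$ is the current match position of the partial query in $\D^+_k$ and $j^-_k\in[0,\tm]\cup\{\infty\}$ is the \emph{minimum} position reachable by any match of the partial query in $\D^-_k$ ($\infty$ meaning no match exists). This gives $O(\tpcp\tmcm)$ vertices. For $\Qpd$ an edge labelled by a conjunction $\rho$ encodes adding $\Diamond\rho$: each $j^+_k$ is advanced nondeterministically to some larger position where $\rho$ holds in $\D^+_k$, and $j^-_k$ is advanced deterministically to the smallest such position (or $\infty$). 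Soundness of the ``$\min$-only'' representation follows because for $\Diamond$-transitions the new minimum reachable position in a negative example depends only on the old minimum.

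For $\Qpnd$ the ``$\min$-only'' representation is insufficient for single-atom $\nxt$-steps (the new reachable set then depends on the whole old set), so I group a query $\q\in\Qpnd$ into maximal $\nxt$-blocks separated by $\Diamond$'s, writing $\q=\pi_0\Diamond\pi_1\Diamond\dots\Diamond\pi_l$ with each $\pi_i=\rho_{i,0}\land\nxt(\rho_{i,1}\land\dots\land\nxt\rho_{i,k_i})$, and let an edge of $G_\Q(E)$ encode adding a full next block $\Diamond\pi$. The key observation, proved by unfolding the semantics, is that after such a transition the minimum end-position of the new block in a negative example depends only on the old minimum: if $m$ is the old minimum end-position in $\D^-_k$, then the new one is $\min\{p+k_\pi : p>m\text{ and }\pi\text{ matches starting at }p\}$ (or $\infty$). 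Hence $j^-_k$ stays in $[0,\tm]\cup\{\infty\}$ throughout the construction.

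Paths from the initial vertex (obtained by matching $\pi_0$ at position $0$) to any accepting vertex (with $j^-_k=\infty$ for all $k$) correspond to separators in $\Q$, which yields parts (a) and (b): since each edge strictly increases every $j^+_k$, the graph is a DAG, so existence reduces to standard reachability and longest/shortest separators to longest/shortest path in a DAG, all in time $O(\tpcp\tmcm)$ up to polynomial factors; membership of (a) in $\NL$ for bounded $\cp,\cm$ is immediate since the graph then has polynomial size. For (c), I first produce a most specific (resp.\ most general) separator $\q_0$ by starting from any separator (e.g.\ a shortest one from (b)) and iteratively applying Theorem~\ref{thm:firstfrontier} (resp.\ Theorem~\ref{thm:secondfrontier}), which terminates in polynomially many steps by the $\models$-chain-length bound used in the opening paragraph of Section~\ref{algs}. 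Uniqueness of $\q_0$ is then checked by enriching $G_\Q(E)$ with an extra component that runs the ``greedy'' containment test of Lemma~\ref{lem:containment} for the partial query $\q'$ being built against the fixed $\q_0$; $\q_0$ is the unique most specific (resp.\ most general) separator iff no accepting path in the enriched graph ends in a state where the greedy test has failed, i.e., iff no separator $\q'$ witnesses $\q_0\not\models\q'$ (resp.\ $\q'\not\models\q_0$). The main obstacle is the $\Qpnd$ case, where the naive one-operator-at-a-time encoding would require exponentially many subset-states for negative examples, so the block-based edge relation has to be shown both to enumerate queries up to equivalence without blow-up and to let the $\min$-tracking on the negative side correctly capture non-matching of the full query.
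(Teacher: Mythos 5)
Your construction for parts $(a)$, $(b)$, and the most-specific half of $(c)$ is essentially the paper's: a product graph whose positive coordinates advance nondeterministically and whose negative coordinates track only the \emph{minimal} match position (the min-only representation being sound for $\Diamond$-steps), with maximal $\nxt$-blocks bundled into single edges for $\Qpnd$, existence and longest/shortest separators read off as DAG reachability and longest/shortest paths, and uniqueness of the most specific separator verified by a greedy containment test against all separating paths (the paper runs this as an $\NL=\textsc{co}\NL$ marking procedure). All of that is fine.

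The genuine gap is in the unique \emph{most general} separator part of $(c)$, which you treat as the mirror image of the most specific case. It is not symmetric. The accepting paths of your graph (with the maximal labels $\avec{l}(\avec{n})$ on edges, which is what you must use to keep the graph of size $O(\tpcp\tmcm)$) represent only the separators whose conjunctions are maximal; every separator is \emph{dominated} by such a path-query, which is exactly why checking $\q_0\models\q_\nu$ over path-queries $\q_\nu$ suffices for most-specific uniqueness. But a witness to non-uniqueness of a most general separator is a separator $\q'$ with $\q'\not\models\q_0$, and such a $\q'$ typically has \emph{proper sub-conjunctions} of the labels and so corresponds to no accepting path in your graph. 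The paper gives the counterexample $E^+=\{\{A(1),B(1)\}\}$, $E^-=\{\{C(0)\}\}$: the product graph has a single separating path, yielding $\Diamond(A\land B)$, yet $\Diamond A$ and $\Diamond B$ are incomparable separators, so no unique most general separator exists; your check would wrongly report uniqueness. If instead you let edges carry arbitrary $\rho\subseteq\avec{l}(\avec{n})$ so that every separator is a path, the edge relation (and your enriched containment automaton) blows up by a factor of $2^{|\sigma|}$, breaking the claimed bound. The paper needs a separate construction here: a graph $\mathfrak O$ over combined positive/negative position tuples with $\rho$-labelled edges, compressed to $\mathfrak O'$ by replacing the edge bundle between two nodes by the single \emph{intersection} label $\rho_{\avec{p},\avec{p}'}=\bigcap\{\rho\mid\avec{p}\to_\rho\avec{p}'\}$ together with a two-sorted edge relation recording whether that intersection is itself a valid transition, and a nontrivial lemma (Lemma~\ref{th:most-gen-poly}) showing this compression preserves the uniqueness criterion. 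Your proposal contains no substitute for this step.
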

\begin{proof}
We only sketch the construction for $\Q = \Qpd$.

$(a)$ First, we define a directed labelled rooted graph $\mathfrak P$ whose paths from the root represent $\Qpd$-queries with positive examples $E^+ = \{\D_1, \dots, \D_{\cp}\}$. Its nodes are vectors $(n_1, \dots, n_{\cp})$ with $0 \leq n_i \leq \max \D_i$, which are labelled by the sets $\{A \in \sigma \mid A(n_i) \in \D_i \text{ for all $i$}\}$, and the edges are $(n_1, \dots, n_{\cp}) \to (n_1', \dots, n'_{\cp})$ with $n_i < n_i'$ for all $i$. The \emph{root} of $\mathfrak P$ is $\bar 0 = (0, \dots, 0)$.
To illustrate, let $E = (E^+,E^-)$, where $E^+ = \{\D_1^+, \D_2^+ \}$, $E^- = \{\D_1^-, \D_2^-, \D_3^- \}$,
\begin{align*}
& \D_1^+ = \{ A(0), C(1), D(1), B(2) \}, \\
& \hspace*{3.7cm} \D_2^+ = \{ A(0), C(1), B(2), D(2) \},\\
& \D_1^- = \{ A(0), C(1) \}, \ \D_2^- = \{ A(0), D(1) \}, \ \D_3^-= \{ B(0)\}.
\end{align*}
Graph $\mathfrak P$ is shown on the left-hand side of the picture below: \\
\centerline{
\begin{tikzpicture}[->,thick,node distance=1.8cm, transform shape,scale=0.9]\footnotesize

\node[rectangle, label=below:{$\{A\}$}] (s00) {\scriptsize$(0,0)$};
\node[rectangle, right of = s00, label=above:{\scriptsize$\{C\}$}] (s11) {\scriptsize$(1,1)$} edge[above, <-] (s00);
\node[rectangle, above right of = s00, label=right:{\scriptsize$\emptyset$}] (s21) {\scriptsize$(2,1)$} edge[left, <-] (s00);
\node[rectangle, above of = s00, label=right:{\scriptsize$\{D\}$}] (s12) {\scriptsize$(1,2)$} edge[left, <-] (s00);
\node[rectangle, right of = s11, label=above:{\scriptsize$\{B\}$}] (s22) {\scriptsize$(2,2)$} edge[above, <-] (s11) edge[<-, out = 140, in =40] (s00);

\node at (3.4,1.8) {$\mathfrak P$};
\end{tikzpicture}
\begin{tikzpicture}[->,thick,node distance=1.8cm, transform shape,scale=0.9]\footnotesize
\node[rectangle, label=above:{\scriptsize$\{A\}$}] (s00) {\scriptsize$(0,0, \infty)$};
\node[rectangle, right of = s00, label=above:{\scriptsize$\emptyset$}] (s11) {\scriptsize$(1,1, \infty)$} edge[above, <-] (s00);
\node[rectangle, above right of = s00, label=right:{\scriptsize$\{C\}$}] (s1i) {\scriptsize$(1,\infty, \infty)$} edge[left, <-] (s00);
\node[rectangle, below right of = s00, label=right:{\scriptsize$\{D\}$}] (si1) {\scriptsize$(\infty,1,\infty)$} edge[left, <-] (s00);
\node[rectangle, right of = s11, label=above:{\scriptsize$\{A,B,C,D\}$}] (sii) {\scriptsize$(\infty, \infty, \infty)$} edge[<-] (s11) edge[<-, out = 200, in =-20] (s00) edge[<-] (si1) edge[<-] (s1i) edge[loop below] (sii);

\node at (3.6,1.2) {$\mathfrak N$};
\end{tikzpicture}
}\\
Each path starting at $\bar 0$ gives rise to a $\sQpd$-query with positive examples in $E^+$: e.g., $(0, 0) \to (1, 1) \to (2, 2)$ gives rise to the query $A \land \Diamond (C \land \Diamond B)$.

Next, define another graph $\mathfrak{N}$ for $E^- = \{\D_1, \dots, \D_{\cm}\}$. Its nodes are $(n_1, \dots, n_{\cm})$, where $n_i \in [0, \max \D_i] \cup \{\infty\}$, including $\bar \infty = (\infty, \dots, \infty)$, which is labelled by $\sigma$. The label of $(n_1, \dots, n_{\cm})$ is $\{A \mid A(n_i) \in \D_i, \text{ for all } n_i \neq \infty\}$. The edges are defined in the same way as for $\mathfrak P$, with $n_i < \infty$, for any $n_i$. The root of $\mathfrak{N}$ is $(n_1, \dots, n_l)$, where $n_i = 0$ if $\{A(0) \mid A(0) \in \D, \text{ for all } \D \in E^+\} \subseteq \D_i$ and $n_i = \infty$ otherwise (see the picture above).
Let $\q$ be a $\sQpd$-query of the form~\eqref{dnpath} with $\rho_0$ contained in the root's label. Then all $\D_i \in E^-$ are negative examples for $\q$ iff every path starting at the root and having labels $\rho'_0,\dots,\rho'_n$ with $\rho'_i \supseteq \rho_i$, for all $i \le n$, comes through node $\bar\infty$. In our example, every such path for $\q = A \land \Diamond B$ and $\q = A \land \Diamond (C \land \Diamond B)$ involves $\bar\infty$. However, this is not the case for $\q = A \land \Diamond C$.

Consider now a graph $\mathfrak P \otimes \mathfrak{N}$ with nodes $(\avec{n}, \avec{m})$, where $\avec{n}$ is a node in $\mathfrak P$ and $\avec{m}$ a node in $\mathfrak N$ with $\avec{l}(\avec{n}) \subseteq \avec{l}(\avec{m})$, for the labels $\avec{l}(\avec{n})$ and $\avec{l}(\avec{m})$ of $\avec{n}$ and $\avec{m}$. We have an edge $(\avec{n}, \avec{m}) \to (\avec{n}', \avec{m}')$ in $\mathfrak P \otimes \mathfrak{N}$ iff $\avec{n} \to \avec{n}'$ in $\mathfrak P$, $\avec{m} \to \avec{m}'$ in $\mathfrak N$, and $\mathfrak P \otimes \mathfrak{N}$ has no $(\avec{n}', \avec{m}'')$ with $\avec{m} \to \avec{m}''$, $\avec{m}''_i < \avec{m}'_i$ and $\avec{m}_i'' \neq \infty$, for some $i$, $\avec{m}_i$ being the $i$th coordinate of $\avec{m}$. One can see that, for any $(\avec{n}, \avec{m})$ in $\mathfrak P \times \mathfrak N$ and $\avec{n}'$ in $\mathfrak P$, there exists at most one edge $(\avec{n}, \avec{m}) \to (\avec{n}', \avec{m}')$. The root of $\mathfrak P \times \mathfrak N$ comprises the roots  of $\mathfrak P$ and $\mathfrak N$. In our example, the edges from the root $(\bar 0, (0,0, \infty))$ of $\mathfrak P \otimes \mathfrak N$ lead to $((1,2), (\infty,1,\infty))$, $((2,1), (1,1,\infty))$,  $((1,1), (1,\infty,\infty))$ and $((2,2), \bar{\infty})$.
Given a path
\begin{equation}\label{eq:prod-path}
\pi = (\bar 0 =\avec{n}_0, \avec{m}_0),\dots,(\avec{n}_n, \avec{m}_n)
\end{equation}
let $\q_\pi$ be the $\Qpd$-query of the form~\eqref{dnpath} with $\rho_i = \avec{l}(\avec{n}_i)$ (note that $\q_\pi$ is not necessarily in normal form). We call $\pi$ a \emph{separating path} for $E$ if $\avec{l}(\avec{n}_n) \ne \emptyset$ and $\avec{m}_n = \infty$.

\begin{lemma}\label{lem:sep-bounded-diamond-prod}
$\sep(E,\Q) \ne \emptyset$ iff $\mathfrak P \otimes \mathfrak N$ contains a separating path $\pi$, with $\q_\pi$ separating $E$.
\end{lemma}

In our running example, $\mathfrak P \otimes \mathfrak N$ has two \emph{separating paths}: $\pi_1 = (\bar 0, (0,0, \infty)), \, ((2,2), \bar \infty)$ and
$\pi_2 = (\bar 0, (0,0, \infty)),$ $((1,1), (1,\infty,\infty)),\, ((2,2), \bar \infty)$, which give rise to the separators $\q_{\pi_1} = A \land \Diamond B$ and $\q_{\pi_2} = A \land \Diamond (C \land \Diamond B)$.

The existence of a separating path in $\mathfrak P \otimes \mathfrak N$ can be checked in time $O(\tpcp \tmcm)$. If $\cp$ and $\cm$ are bounded, given $(\avec{n}, \avec{m})$ and $(\avec{n}', \avec{m}')$, we can check in logspace whether $(\avec{n}, \avec{m})$ is the root of $\mathfrak P \otimes \mathfrak N$ and $(\avec{n}, \avec{m}) \to (\avec{n}', \avec{m}')$, and so the existence of a separating path can be decided in \NL.
%
%\begin{equation}\label{eq:prod-path}
%(\avec{n}_0, \avec{m}_0) \to (\avec{n}_1, \avec{m}_1) \to \dots \to (\avec{n}_n, \avec{m}_n),
%\end{equation}
%%
%with $\avec{n}_0 = \bar 0$, root $\avec{m}_0$ of $\mathfrak N$, some $\avec{n}_n$ in $\mathfrak P$ with $\avec{l}(\avec{n}_n) \ne \emptyset$, and $\avec{m}_n = \bar \infty$, which can be done in $\NL$.

The proof of point $(b)$ relies on the following observation:
\begin{lemma}\label{th:bounded-subs}
If $\q$ of the form~\eqref{dnpath} separates $E$, then there is a separating path of the form~\eqref{eq:prod-path} with $\rho_i \subseteq \avec{l}(\avec{n}_i)$, for all $i \le n$.
\end{lemma}

It follows that the length of a longest/shortest separator for $E$ coincides with the length of a longest/shortest separating path in $\mathfrak P \otimes \mathfrak N$, which can be found in polytime.

The proof of point $(c)$ is based on the following criterion:
%With every separating path $\pi$ of the form~\eqref{eq:prod-path} we associate a $\Qpd$-query $\q_\pi$ of the form~\eqref{dnpath}, in which $\rho_i = \avec{l}(\avec{n}_i)$.

\begin{lemma}\label{th:strongest-crit}
A $\Qpd$-query $\q$ is a unique most specific $\Qpd$-separator for $E$ iff there is a separating path $\pi$ such that $\q = \q_\pi$ and $\q_\pi \models \q_\nu$, for every separating path $\nu$.
\end{lemma}

In our example, $\q_{\pi_2}$ is a unique most specific separator. We show that the criterion of Lemma~\ref{th:strongest-crit} can be checked in polytime in $\mathfrak P \times \mathfrak N$.
For unique most general separators, the seemingly obvious inversion of $\q_\pi \models \q_\nu$ does not give a criterion, and a different type of graph is required.
%
%Note that inverting $\q_\pi \models \q_\nu$ in Lemma~\ref{th:strongest-crit} does not give a criterion for $\q_\pi$ being a unique most general separator, and a different type of graph is required.\nz{TBE}
\end{proof}

Finally, we show how Theorem~\ref{thm:algorithms} can be used to check the existence of and construct unique most general separators in $\Qd$ (the case of unique most specific ones is trivial).

\begin{theorem}\label{forDiamond}
An example set $E=(E^{+},E^{-})$ has a unique most general separator in $\Qd$ iff $\q=\bigwedge_{\D \in S} \q_{\D}$ separates $E$, where $S$ is the set of $\D\in E^{-}$ such that $(E^+,\{\D\})$ has a unique most general separator, $\q_{\D}$, in $\Qpd$. In this case, $\q$ is a unique most general separator of $E$ in $\Qd$.
\end{theorem}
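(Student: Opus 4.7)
The plan is to establish both directions using the following key lemma, which itself rests on Lemma~\ref{lem:decomp}: for each $\D \in S$, the query $\q_\D$ is not only the unique most general $\Qpd$-separator of $(E^+, \{\D\})$ but also the unique most general $\Qd$-separator. Indeed, given any $\Qd$-separator $\q''$ of $(E^+,\{\D\})$ in normal form, some conjunct of $\q''$ must fail on $\D$ and is therefore a $\Qpd$-separator of $(E^+, \{\D\})$; this conjunct entails $\q_\D$ by its uniqueness in $\Qpd$, so $\q'' \models \q_\D$.

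For the $(\Leftarrow)$ direction, assuming $\q$ separates $E$, any most general $\Qd$-separator $\q'$ of $E$ is also a $\Qd$-separator of $(E^+,\{\D\})$ for every $\D \in S$, so the key lemma gives $\q' \models \q_\D$ and hence $\q' \models \q$. Since $\q$ itself is a separator and $\q'$ is most general, $\q \equiv \q'$, showing $\q$ is the unique most general $\Qd$-separator of $E$. For the $(\Rightarrow)$ direction, let $\q^*$ be the unique most general $\Qd$-separator; the key lemma directly yields $\q^* \models \q$, so $\q$ is satisfied on every positive example. The crucial remaining task is to show that $\q$ rejects every $\D \in E^-$, and the case $\D \in S$ is immediate since $\q_\D \in \q$ rejects $\D$.

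The main obstacle is treating $\D \in E^- \setminus S$. My plan is to write $\q^* = \rho^* \land \q_1^* \land \dots \land \q_n^*$ in redundancy-free normal form and consider a conjunct $c$ of $\q^*$ that rejects $\D$; I will show $c$ is ``captured'' by a conjunct of $\q$ that also rejects $\D$. For a $\Diamond$-conjunct $c = \q_j^*$, I introduce $T_j$, the set of negatives rejected solely by $\q_j^*$ in $\q^*$ (nonempty by redundancy-freeness, else $\q_j^*$ could be dropped), and argue by contradiction that $\q_j^* \equiv \q_{\D''}$ for some $\D'' \in T_j \cap S$. Otherwise I can pick, for each $\D'' \in T_j$, a $\Qpd$-separator $m_{\D''}$ of $(E^+, \{\D''\})$ strictly weaker than $\q_j^*$, and replace $\q_j^*$ in $\q^*$ by $\bigwedge_{\D'' \in T_j} m_{\D''}$ to obtain (in the spirit of the weakening frontier from Theorem~\ref{thm:thirdfrontier}) a strictly weaker $\Qd$-separator of $E$, contradicting the uniqueness of $\q^*$. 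Since $\q_{\D''} \equiv \q_j^*$ rejects $\D$ and lies in $\q$, we obtain $\D \not\models \q$. The case where $c$ is an atom $A \in \rho^*$ is analogous: a similar replacement argument shows some $\D' \in S$ has $A$ in its time-$0$ atomic part, so $\q_{\D'} \models A$, and since $\D \not\models A$ we get $\D \not\models \q_{\D'}$, hence $\D \not\models \q$. Once $\q$ separates $E$, the $(\Leftarrow)$ argument completes the equivalence and identifies $\q$ as the unique most general $\Qd$-separator.
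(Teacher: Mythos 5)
The paper's appendix does not actually contain a proof of Theorem~\ref{forDiamond}, so there is nothing to compare against; judged on its own merits, your architecture is right (the key lemma, the $(\Leftarrow)$ direction, and the reduction of $(\Rightarrow)$ to showing that every essential conjunct of $\q^*$ is equivalent to some $\q_{\D''}$ with $\D''\in S$ are all sound), but there is a genuine gap in the replacement step. You claim that if $\q_j^*\not\equiv\q_{\D''}$ for all $\D''\in T_j\cap S$, then for \emph{each} $\D''\in T_j$ you can pick a $\Qpd$-separator of $(E^+,\{\D''\})$ \emph{strictly weaker} than $\q_j^*$. For $\D''\in T_j\setminus S$ this can fail outright: $\q_j^*$ may itself be one of several $\models$-maximal (most general) $\Qpd$-separators of $(E^+,\{\D''\})$, in which case no separator of $(E^+,\{\D''\})$ is strictly weaker than $\q_j^*$, and your substitute query cannot be formed. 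The same defect affects the atom case, whose justification (``some $\D'\in S$ has $A$ in its time-$0$ atomic part, so $\q_{\D'}\models A$'') is in any event garbled, since membership of $A(0)$ in a negative example $\D'$ says nothing about $\q_{\D'}$.

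The fix is local but requires a different choice and a different conclusion. What you actually need is $m_{\D''}\not\models\q_j^*$, not strict weakening: for $\D''\in T_j\cap S$ take $m_{\D''}=\q_{\D''}$ (which satisfies $\q_{\D''}\not\models\q_j^*$ when $\q_{\D''}\not\equiv\q_j^*$, since $\q_j^*\models\q_{\D''}$), and for $\D''\in T_j\setminus S$ use that $(E^+,\{\D''\})$ has at least two pairwise incomparable most general $\Qpd$-separators, at most one of which can entail the separator $\q_j^*$ (any maximal $m$ with $m\models\q_j^*$ is equivalent to $\q_j^*$). The resulting query $\tilde\q$ then satisfies $\tilde\q\not\models\q_j^*$ by Lemma~\ref{lem:decomp} together with redundancy-freeness, hence $\tilde\q\not\models\q^*$; this still contradicts uniqueness of the most general separator (every separator must entail the unique maximal one), even though $\tilde\q$ need not be comparable with $\q^*$. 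Equivalently and more cleanly: for each conjunct $c$ of $\q^*$ there must be some $\D''\in T_c$ such that \emph{every} $\Qpd$-separator of $(E^+,\{\D''\})$ entails $c$; since $c$ is itself such a separator, $c$ is then the unique most general $\Qpd$-separator of $(E^+,\{\D''\})$, so $\D''\in S$ and $\q_{\D''}\equiv c$, which handles atoms and $\Diamond$-conjuncts uniformly and completes your argument.
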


\section{Conclusions}
\label{sec:conc}
We have conducted a comprehensive complexity analysis of extremal
separators in the spaces $\sep(E,\mathcal{Q})$ with $\Q$ ranging from
various classes of temporal path $\nxt\Diamond$-queries to arbitrary
$\Diamond$-queries.
For arbitrary $\nxt\Diamond$-queries, we only know more or less
straightforward upper bounds such as $\Pi_{2}^{p}$ for (unique) most
general separator verification and $\Sigma_{3}^{p}$ for unique most
general separator existence. Establishing tight bounds remains a
challenging open problem, which requires a deeper understanding of
query containment for these queries.

We also plan to analyse the shortest and longest separators. For
instance, we show in the appendix that verifying such separators in
$\sep(E,\Qpnd)$ is \coNP-complete---harder than verifying most
specific/general ones. An empirical evaluation of our algorithms and
more expressive query languages (say, with non-strict $\Diamond$ and
`until') are left for future work.

%\textcolor{blue}{We hope the algorithms provide a solid starting point for applications with a reasonably small number of examples.}

\newpage

\bibliographystyle{named}
%\bibliography{LTL,local,bib}

%%%%%%%%%%%%%%%%%%%%%%%%%%%%%%%%%%%%%%%%%%%%%%%%%%%%%%%%%%%%

\newpage

\begin{appendix}
\onecolumn
\noindent{\Large \bf Appendix: Proofs}

\section{Proofs for Section~\ref{sec:cont}}
    We assume that queries $\q$ are in normal form. Recall that $[n]=\{0,\ldots,n\}$. We start by introducing a helpful 
    tool for checking whether $\D,0\models \q$.
	Let $\q$ be a path $\nxt\Diamond$-query of the form \eqref{dnpath}, that is,
	\begin{equation*}
		\q = \rho_0 \land \op_1 (\rho_1 \land \op_2 (\rho_2 \land \dots \land \op_n \rho_n) ),
	\end{equation*}
	where $\op_i \in \{\nxt, \Diamond \}$ and $\rho_i$ is a conjunction of atoms.
	A \emph{satisfying function} $f$ for $\q$ into a data instance $\D$ maps $[n]$ to $[\max\D]$ such that
	\begin{itemize}
		\item $f(0)=0$;
		\item $\rho_{i} \subseteq \D_{f(i)}:=\{ A \mid A(f(i))\in \D\}$;
		\item $f(i)<f(i+1)$;
		\item $f(i+1)=f(i)+1$ if $\op_{i+1}=\nxt$.
	\end{itemize}
It is easy to show that $\D,0\models \q$ iff there exists a satisfying function for $\q$ into $\D$.
\begin{eqnarray}\label{defofq}
\q'= \rho_{0}' \wedge \op_{1}' (\rho_{1}' \wedge \cdots \wedge\op_{m}' (\rho_{m}')).
\end{eqnarray}
A function $g:[m] \rightarrow [n]$ is a \emph{containment witness} for $\q,\q'$
if the following conditions holds:
\begin{itemize}
	\item $\rho_{i}' \subseteq \rho_{g(i)}$;
	\item $g(0)=0$;
	\item $g(i)<g(i+1)$;
	\item if $\op_{i+1}'=\nxt$, then $\op_{g(i+1)}=\nxt$ and $g(i+1)=g(i)+1$.
\end{itemize}
We can now state Lemma~\ref{lem:containment} as follows.
	\begin{lemma}\label{lem:cont}
		$\q \models \q'$ iff there is a containment witness for $\q,\q'$.
	\end{lemma}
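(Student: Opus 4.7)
For the $(\Leftarrow)$ direction I would compose the containment witness with an arbitrary satisfying function. Given $g\colon [m]\to[n]$ satisfying the four conditions and any $\D$ with $\D,0\models\q$ via a satisfying function $f\colon[n]\to[\max\D]$, I would verify that $f\circ g$ is a satisfying function for $\q'$ into $\D$: the condition $(f\circ g)(0)=0$ is immediate; $\rho_{i}'\subseteq\rho_{g(i)}\subseteq\D_{f(g(i))}$ by the two inclusions built into $g$ and $f$; strict monotonicity of $f\circ g$ follows from the strict monotonicity of both factors; and the $\nxt$ condition cascades, since $\op_{i+1}'=\nxt$ forces $g(i+1)=g(i)+1$ with $\op_{g(i+1)}=\nxt$, which in turn forces $f(g(i+1))=f(g(i))+1$ by the satisfying-function condition on $f$.

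For the $(\Rightarrow)$ direction I would induct on the length $m$ of $\q'$, using a \emph{stretched canonical model} $\D_{\q}^{K}$ of $\q$: at position $p_{i}$ place $\rho_{i}\cup\{M_{i}\}$ for fresh marker atoms $M_{i}\notin\sig(\q')$, with $p_{0}=0$, $p_{i+1}=p_{i}+1$ if $\op_{i+1}=\nxt$, and $p_{i+1}=p_{i}+K$ if $\op_{i+1}=\Diamond$, where $K>m+n$; all intermediate positions carry the empty atom set. Then $f(i)=p_{i}$ witnesses $\D_{\q}^{K},0\models\q$, so by $\q\models\q'$ there is a satisfying function $f'$ for $\q'$ into $\D_{\q}^{K}$, and I would read off the witness by setting $g(i)=j$ whenever $f'(i)=p_{j}$. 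The base case $m=0$ gives $\rho_{0}'\subseteq\rho_{0}$, so $g(0)=0$ works. For the inductive step, write $\q'=\rho_{0}'\land\op_{1}'\q''$ and split on $\op_{1}'$: if $\op_{1}'=\nxt$, I would argue that $\op_{1}=\nxt$ must hold (otherwise the $\nxt$-chain of $\q'$ rooted at position $0$ runs through at most $m$ empty positions of $\D_{\q}^{K}$ yet must end at the non-$\top$ conjunct $\rho_{m}'$, contradicting $K>m$), and then apply the IH to the shifted pair $(\q_{\geq 1},\q'')$, gluing its witness $h$ with $g(0)=0$ via $g(i)=h(i-1)+1$; if $\op_{1}'=\Diamond$, the markers force $f'(1)=p_{j}$ for some $j\geq 1$, and I would reduce analogously to $(\q_{\geq j},\q'')$.

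The main obstacle is the rigorous normalisation of $f'$ when $\q'$ contains $\top$-conjuncts nested inside $\nxt$-chains: \emph{a priori} $f'$ can land on empty positions of $\D_{\q}^{K}$, blurring the correspondence $f'(i)\mapsto g(i)$. Here the normal-form condition on $\q'$ — which forces $\op_{j}'=\op_{j+1}'$ whenever $\rho_{j}'=\top$ for $0<j<m$ and in particular prevents a $\nxt$-chain from silently switching to $\Diamond$ — together with the choice $K>m+n$ implies that no $\nxt$-segment of $\q'$ can traverse a $\Diamond$-gap of $\q$, so $f'$ can always be normalised to land on marked positions precisely where $\rho_{i}'\ne\top$, and on the uniquely determined in-between positions otherwise; from this normalised $f'$ the four witness conditions on $g$ follow directly from the corresponding satisfying-function conditions on $f'$.
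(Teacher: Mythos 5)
Your $(\Leftarrow)$ direction is correct and is essentially what the paper dismisses as ``easy to see'': composing a containment witness with a satisfying function for $\q$ yields a satisfying function for $\q'$, exactly as you describe.

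The $(\Rightarrow)$ direction has a genuine gap, and you have misdiagnosed where it lies. The hard case is not a $\top$-conjunct inside a $\nxt$-chain (there, as you note, $K>2$ forces $f'$ onto marked positions), but a $\top$-conjunct under $\Diamond$, i.e.\ a subquery $\Diamond(\top\land\Diamond\dots)$, which normal form explicitly permits. For such $\q'$ your single stretched model $\D_{\q}^{K}$ does not characterise containment. Concretely, take $\q=A\land\Diamond B$ and $\q'=A\land\Diamond(\top\land\Diamond B)$, both in normal form. Then $\D_{\q}^{K},0\models\q'$ via $f'(0)=0$, $f'(1)=1$ (an empty intermediate position), $f'(2)=K$; yet $\q\not\models\q'$ (the instance $\{A(0),B(1)\}$ separates them) and no containment witness exists, since monotonicity would require $0<g(1)<g(2)\leq 1$. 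So a satisfying function into $\D_{\q}^{K}$ that lands on an intermediate position carries no containment information, and it cannot in general be ``normalised'': there may be no marked position strictly between $f'(i-1)$ and $f'(i+1)$ at all. (Note also that your marker atoms $M_i\notin\sig(\q')$ cannot ``force'' $f'$ anywhere, since $\q'$ does not mention them; only $\rho_i'\neq\top$ forces a marked landing site.) A secondary soft spot is the inductive step itself: to invoke the IH on the suffixes you need the entailment $\q_{\geq j}\models\q''$ between \emph{queries}, which does not follow from having one satisfying function on one model. The paper's proof avoids both problems by inducting on $\dep(\q')$ and playing $\q\models\q'$ against a whole family of tailored instances ($\rho_0\cdots\rho_n$ compressed, $\rho_0\cdots\rho_{i-1}\emptyset^{m}\rho_i\cdots\rho_n$ with a gap inserted at a chosen place, and $\rho_0w_1\rho_1\cdots w_n\rho_n$ with gaps only at $\Diamond$-positions), extracting the witness piecewise rather than from a single canonical model. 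To repair your argument you would need a comparable case analysis over several models, at which point it converges to the paper's proof.
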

	\begin{proof}
		It is easy to see that if there is a containment witness for $\q,\q'$ then $\q\models \q'$. We prove the converse direction.
				
		Let $\sigma=\sig(\q)$. The proof is by induction on the temporal depth of $\q'$. If $\dep(\q')=0$, the claim is trivial.
		
		Assume $\q\models \q'$ are given and the statement has been shown for all $\q''$ with $\dep(\q'')<\dep(\q')$.
		
		Assume first that $\q'$ starts with a nonempty sequence of $\nxt$.
		Then let $k>0$ be maximal such that $\op_{1}'=\cdots = \op_{k}'=\nxt$.
		Then $\rho_{k}'\not=\top$ (by normal form) and if $m>k$, then $\op_{k+1}'=\Diamond$.
		
		Let $r$ be minimal such that
		$$
		\rho_{0} \wedge \op_{1} (\rho_{1} \wedge \cdots \wedge\op_{r}\rho_{r}))\models \rho_{0}' \wedge \op_{1}' (\rho_{1}' \wedge \cdots \wedge\op_{k}'\rho_{k}'))
		$$
		
		\medskip
		\noindent
		\emph{Claim 1.} For all $i\leq k$, $\op_{i}=\nxt$ and $\rho_{i}\supseteq \rho_{i}'$. Hence $r=k$.
		
		\medskip
		\noindent 
		To prove Claim~1 observe that if there is $i\leq k$ with $\op_{i}=\Diamond$, then
		we have for
		$$
		\D = \rho_{0}\ldots\rho_{i-1}\emptyset^{m}\rho_{i}\ldots\rho_{n}
		$$
		$\D,0\models \q$ and $\D,0\not\models\q'$, a contradiction.
		If there is $i\leq k$
		with $\rho_{i}\not\supseteq \rho_{i}'$, then we have for
		$$
		\D = \rho_{0}\ldots\rho_{n}
		$$
		$\D,0\models \q$ and $\D,0\not\models\q'$, again a contradiction. This finishes the proof of Claim~1.
		
		Next observe that if follows from $\q\models \q'$ that
		$$
		\delta \models \delta'
		$$
		for $\delta=\op_{k+1}(\rho_{k+1}\wedge \cdots \wedge \op_{n}\rho_{n}))$ and $\delta'=\op_{k+1}'(\rho_{k+1}' \wedge \cdots \op_{m}' \rho_{m}'))$.
		As
		$$
		\dep(\delta')< \dep(\q')
		$$
		we can apply the IH to $\delta, \delta'$ and have a containment witness $g'$ for $\delta,\delta'$. Define $g$ by setting $g(i)=i$ for all $i\leq k$
		and $g(k+j)=g'(j)+k$ for all $j$ with $1\leq j \leq m-k$.
		Then $g$ is a containment witness for $\q,\q'$.		
		
		\medskip
		
		Assume now that $\q'$ does not start with $\nxt$.
		Assume first there is a minimal initial subquery
		$$
		\delta'=\rho_{0}' \wedge \op_{1}'(\rho_{1}'\wedge \cdots \wedge \op_{k}'\rho_{k}'))
		$$
		of $\q'$ ending with $\rho_{k}'\not=\top$
		and, moreover, $\op_{k+1}'=\Diamond$ and $k>0$.
		Let
        $$
        \delta = \rho_{0} \wedge \op_{1}(\rho_{1}\wedge \cdots \wedge \rho_{r}))
        $$
        be the minimal initial subquery of $\q$ such that $\delta\models \delta'$. As $\dep(\delta')< \dep(\q')$, by IH we have a containment witness $g_{0}$ for $\delta,\delta'$. As $\rho_{k}'\not=\top$ and $r$ is minimal we have $g_{0}(k)=r$. 
        By IH the claim also holds for
        $\gamma'=\op_{k+1}'(\rho_{k+1}' \wedge \cdots \wedge \op_{m}'\rho_{m}'))
        $
        and
        $
        \gamma=\op_{r+1}(\rho_{r+1} \wedge \cdots \wedge \op_{n}\rho_{n}))$.
        Hence we obtain a containment witness $g_{1}$ for $\gamma,\gamma'$. Then we obtain a containment witness for $\q,\q'$ by concatenating $g_{0}$ and $g_{1}$.

        Finally assume there is no $\rho_{k}'\not=\top$
        with $op_{k+1}'=\Diamond$ and $k>0$. Then, by normal form, $\q'$ takes the form
        $$
        \rho_{0}'\wedge \Diamond^{k}(\rho_{k}' \wedge \op_{k+1}'\rho_{k+1}'\cdots \wedge \op_{m}'\rho_{m}'))
        $$
        with $k>0$, $\rho_{k}'\not=\top$, and $\op_{k+1}'=\cdots =\op_{m}'=\nxt$.

        We show there exists $j\geq k$ such that
        $\op_{j+1}=\cdots =\op_{j+m-k}=\nxt$ and $\rho_{j+\ell} \supseteq \rho_{k+\ell}'$ for all $\ell$ with $0\leq \ell\leq m-k$.
        Clearly then there is a containment witness $g$ for $\q,\q'$.
	
	    To show our claim, suppose no such $j$ exists. Then take the data instance $$
	    \D = \rho_{0}w_{1}\rho_{1}\cdots w_{n}\rho_{n}
	    $$
	    where $w_{i}$ is the empty word if $\op_{i}=\nxt$ and $w_{i}=\emptyset^{m}$ if $\op_{i}=\Diamond$. Then we have $\D,0\models \q$ but $\D,0\not\models \q'$, a contradiction.
	\end{proof}
	%\newpage

We consider $\Qd$. We assume that $\q\in \Qd$
takes the from
$$
\rho \wedge \q_{1} \wedge \cdots \wedge \q_{n}
$$
where $\rho$ is a conjunction of atoms and each $\q_{i}$ is in $\Qpd$ and starts with $\Diamond$.

\medskip
\noindent
{\bf Lemma~\ref{lem:decomp}.}
	\emph{If $\q=\rho \wedge \q_{1} \wedge \cdots \wedge \q_{n}\in \Qd$ is in normal form, $\q'\in \Qpd$ and $\q\models \q'$, then there is $\q_i$, $1 \le i \le n$, with $\rho \wedge \q_{i} \models \q'$.}

\medskip
\noindent
\begin{proof}
	Assume
	$$
	\q_{i}=\Diamond (\rho^{1}_{i} \wedge \cdots \rho^{n_{i}}_{i}))
	$$
	for $1\leq i \leq n$,
	$$
	\q'= \rho^{0} \wedge \Diamond (\rho^{1} \wedge \cdots \wedge \Diamond \rho^{m})),
	$$
	and $\q\models \q'$.
	Note that $\rho\supseteq \rho^{0}$ as otherwise $\q\not\models\q'$. We define functions
	$$
	f_{i}: \{1,\ldots,m\} \rightarrow \{1,\ldots, n_{i}\}\cup \{\infty\}
	$$
	for $1\leq i \leq n$. The definition is by induction starting from $1$.
	
	Set $f_{i}(1)=j$ if $j$ is minimal such that $\rho_{i}^{j}\supseteq \rho^{1}$ and $f_{i}(1)=\infty$ if no $j$ with $\rho_{i}^{j}\supseteq \rho^{1}$ exists.
	
	Inductively, if $f_{i}(\ell)=\infty$, then $f_{i}(\ell+1)=\infty$.
	If $f_{i}(\ell)<\infty$, then set $f_{i}(\ell+1)=j$ if $j$ is minimal such that
	\begin{itemize}
		\item $j>f_{i}(\ell)$;
		\item $\rho_{i}^{j}\supseteq \rho^{\ell+1}$
	\end{itemize}
	and $f_{i}(\ell+1)=\infty$ if no $j>f_{i}(\ell)$ such that $\rho_{i}^{j}\supseteq \rho^{\ell+1}$ exists.
	
	Observe that if there exists $i\leq n$ such that $f_{i}(m)<\infty$, then
	$\rho \wedge \q_{i}\models \q'$, as required. Hence it remains to show that there exists such an $i\leq n$. For a proof by contradiction, assume there is no such $i\leq n$. We derive a contraction by proving that $\q\not\models \q'$. Let $m'\leq m$ be minimal such that $f_{i}(m')=\infty$ for all $i\leq n$. Define a data instance $\D_{1}$ as
	$$
	\D_{1}=\rho\rho_{1}^{1}\cdots \rho_{1}^{k_{1}}\cdots\rho_{n}^{1}\cdots \rho_{n}^{k_{n}},
	$$
	where $k_{i}=\min\{n_{i},f_{i}(1)-1\}$ for $1\leq i \leq n$ (we set $\infty-1=\infty$). If $m'=1$, then $\D_{1},0\models \q$ and $\D_{1},0\not\models\q'$ since $\D_{1},0\not\models\Diamond\rho^{1}$, and we are done. Otherwise let
	$$
	\delta_{1}= \bigcup_{1\leq i \leq n,f_{i}(1)<\infty}\rho_{i}^{f_{i}(1)}
	$$
	We continue in this way for $\ell=2,3,\ldots,m'$ as follows.
	First define for $2\leq \ell \leq m'$:
	$$
	\D_{\ell}=
	\rho_{1}^{f_{1}(\ell)+1}\cdots \rho_{1}^{k_{\ell,1}}\cdots
	\rho_{n}^{f_{n}(\ell)+1}\cdots \rho_{n}^{k_{\ell,n}},
	$$
	where $k_{\ell,i}=\min\{n_{i},f_{i}(\ell+1)-1\}$ for $1\leq i \leq n$ (note that $\rho_{i}^{f_{i}(\ell)+1}\cdots \rho_{i}^{k_{\ell,i}}$ is empty if $f_{i}(\ell)+1 >n_{i}$). Then let for $2\leq \ell <m'$:
	$$
	\delta_{\ell}= \bigcup_{1\leq i \leq n,f_{i}(\ell)<\infty}\rho_{i}^{f_{i}(\ell)}
	$$
	and set
	$$
	\D = \D_{1}\delta_{1}\D_{2}\delta_{2}\cdots\delta_{m'-1}\D_{m'}
	$$
	It is straightforward to show that $\D,0\models \q$ and
	$\D,0\not\models \q'$. Hence $\q\not\models\q'$ and we have derived a contradiction.
\end{proof}
\section{Proofs for Section~\ref{sec:example}}
We prove the complexity results for separability that are stated in this section for $\Qint$.
\begin{theorem}\label{thm:qinappendix}
	(1) For $\Qint$, separator existence is \NP-complete. This also holds if $|E^{-}|$ and $\sig(E)$ are bounded.
	
	(2) If $|E^{+}|$ is bounded then $\Qint$-separator existence is in \PTime.
	
	(3) For example data instances $E=(E^{+},E^{-})$ that are sequences, the existence of a separator that is a sequence query in $\Qint$ is in \PTime.
\end{theorem}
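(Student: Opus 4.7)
The three parts of the theorem call for three different arguments, which I sketch in turn.

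For part~(1), the \NP{} upper bound is immediate: any $\q \in \Qint$ separating $E$ has $\dep(\q) \le \min\{\max\D \mid \D \in E^+\}$ and $\sig(\q) \subseteq \sig(E)$, so $|\q|$ is polynomial in $|E|$; I would guess $\q$ and verify separation in polytime, since separator verification is in \PTime{} (Section~\ref{prelims}). For the matching \NP-hardness with $|E^{-}|\le 1$ and $|\sig(E)|\le 2$, I plan to reuse the parsimonious reduction from \textsc{Sat} to $\Qint$-separation constructed in case~$(b)$ of the proof of Theorem~\ref{lem:consnew}, which puts separating queries in bijection with satisfying assignments of the input CNF; in particular, a separator exists iff the CNF is satisfiable.

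For part~(2), suppose $|E^+|=k$ is bounded and $E^+ = \{\D_1,\dots,\D_k\}$. Any $\Qint$-separator $\q = \Diamond(\rho_1 \land \nxt(\rho_2 \land \dots \land \nxt \rho_n))$ must match each $\D_i$ at some starting position $s_i \ge 1$, so $\rho_j \subseteq \{A \mid A(s_i+j-1) \in \D_i\}$ for all $j \le n$. For each tuple $(s_1,\dots,s_k)$ with $1 \le s_i \le \max\D_i$ and each length $n \le \min_i(\max\D_i - s_i + 1)$, I would form the \emph{pointwise-maximal} candidate $\q^{\star}$ whose $j$-th conjunction is $\rho^{\star}_j := \bigcap_i \{A \mid A(s_i+j-1) \in \D_i\}$. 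Since $\rho_j \subseteq \rho^{\star}_j$, Lemma~\ref{lem:containment} gives $\q^{\star} \models \q$, and stronger queries separate whenever weaker ones do, so $\q^{\star}$ separates whenever $\q$ does. The number of such candidates is polynomial in $|E|$ as $k$ is bounded, so one enumerates them all and tests each for separation in polytime.

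For part~(3), restricted to sequence data $E$ and sequence $\Qint$-queries, a separator corresponds to a word $w$ of length $\ge 1$ that is a \emph{subword} (contiguous substring) of every $\D \in E^+$ and of no $\D \in E^-$. I would pick a shortest $\D^{\ast} \in E^+$ and enumerate the $O(|\D^{\ast}|^2)$ candidate subwords of $\D^{\ast}$ (indexed by pairs of endpoints), testing each by linear-time string matching against all positives and negatives; correctness follows because any separating $w$ must appear inside $\D^{\ast}$. This runs in polytime even when $|E^+|$, $|E^-|$, and $|\sig(E)|$ are all unbounded.

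The main obstacle I expect is the \NP-hardness in part~(1): the \textsc{Sat}-reduction must encode variable assignments and clause evaluations using only two atoms and a single negative example, while respecting the rigid contiguous structure of $\Qint$-queries (no gaps may be skipped inside the interval). I plan to adapt the gadget-and-delimiter construction sketched for case~$(b)$ of Theorem~\ref{lem:consnew}, verifying that its bijection between satisfying assignments and separators carries over unchanged. Parts~(2) and~(3) are then straightforward enumeration and subword-matching arguments by comparison.
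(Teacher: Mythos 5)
Your proposal is correct and follows essentially the same route as the paper's proof: guess-and-check plus the Theorem~\ref{lem:consnew}(b) reduction for part~(1), enumeration of the pointwise-intersection candidates $\q_{k,\vec{k}}$ indexed by starting positions and length for part~(2), and enumeration of the subwords of one positive sequence for part~(3). The only cosmetic caveat is that ``stronger queries separate whenever weaker ones do'' is false in general; what you actually use (and what works) is that your candidate satisfies all positive examples by construction and inherits falsity on the negatives from $\q^{\star}\models\q$.
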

\begin{proof}
	(1) The $\NP$-upper bound follows from the fact that given $E$ one can guess a separator of polynomial size in $\Qint$ and check in polytime that is separates. The lower bound follows directly from the 
	proof of Theorem~\ref{lem:consnew} (b) given later in this appendix.
	
	(2) Assume $E^{+}=\{\D_{1},\ldots,\D_{n}\}$. Let $m_{i}=\max \D_{i}$ and $m=\max \{m_{i} \mid 1\leq i\leq n\}$. Consider for every $0\leq k<m$ and vector $\vec{k}=k_{1},\ldots,k_{n}$ with $0< k_{i}\leq m_{i}$ the query
	$$
	\q_{k,\vec{k}}= \Diamond(\rho_{k,\vec{k}}^{0} \wedge \nxt(\cdots \wedge \nxt \rho_{k,\vec{k}}^{k}))
	$$
	where
	$$
	\rho_{k,\vec{k}}^{j}= \bigcap_{1\leq i \leq n}\{ A | A(k_{i}+j)\in \D_{i}\}
	$$
	Let $\Xi$ denote the set of queries of the form $\q_{k,\vec{k}}$ that are in $\Qint$. Then it is easy to see that there exists a query in $\Qint$ that separates $E$ iff some query in $\Xi$ separates $E$. The latter can be checked in polynomial time if $n$ is fixed.
	
	(3) Let $E^{+}=\{\D_{1},\ldots,\D_{n}\}$ be a set of sequences. Assume 
	$$
	\D_{1}= \{A_{0}\}\{A_{1}\}\cdots \{A_{m}\}
	$$
	Then any possible separating sequence query in $\Qint$ takes the form
	$$
	\Diamond(A_{k_{1}} \wedge \nxt(\cdots \nxt A_{k_{2}}))
	$$
	with $0<k_{1}\leq k_{2}\leq m$. Clearly it can be checked in polynomial time whether any such query separates $E$.
\end{proof}

\section{Proofs for Section~\ref{sec:frontiers}}

{\bf Theorem~\ref{thm:secondfrontier}.}
	{\em Let $\q \in \sQpnd$ be in normal form~\eqref{dnpath}. A weakening frontier for $\q$ in $\sQpnd$ can be computed in polytime by applying once to $\q$ one of the following operations, for $i \in [n]$ and $\q_i = \rho_i \land \op_{i+1}(\rho_{i+1} \land \dots \land \op_n \rho_n)$\textup{:}
	\begin{enumerate}
		\item drop some atom from $\rho_{i}$\textup{;}
		
		\item replace some $\Diamond(\top \wedge \Diamond \q_{i})$ in $\q$ by $\Diamond \q_{i}$\textup{;}
		
		\item replace some $\op_{i}=\nxt$ by $\Diamond$.
	\end{enumerate}
	If $\q \in \sQpd$, a weakening frontier for $\q$ in $\sQpd$ can be constructed in polynomial time using operations 1 and 2.
}
\medskip
\noindent
\begin{proof}
     Let $\mathcal{F}$ denote the set of queries defined in Theorem~\ref{thm:secondfrontier}. It is easy to see using Lemma~\ref{lem:containment} that $\q\models \q'$ and $\q\not\equiv \q'$ for all $\q'\in \mathcal{F}$.
	
	Assume now $\q\models \q'$ and $\q\not\equiv \q'$ for some $\q'\in \sQpnd$.
	We show that there exists a query $\q''\in \mathcal{F}$ such that $\q''\models \q'$. Assume
	$$
	\q'= \rho_{0}' \wedge \op_{1}' (\rho_{1}' \wedge \cdots \wedge\op_{m}' (\rho_{m}')).
	$$
	and assume that $g:[m] \rightarrow [n]$
	is a containment witness for $\q,\q'$.
	
	We distinguish the following cases.
	
	Case 1. $g$ is a surjective mapping onto $[n]$. Then $n=m$ and $g(i)=i$ for all $i\in [m]$. Hence $\rho_{i}'\subseteq \rho_{i}$ for all $i\in [m]$. Then, as $\q\not\equiv \q'$, either $\op_{i}=\nxt$ and $\op_{i}'=\Diamond$ for some $i\in [m]$ or $\rho_{i}'\subsetneq \rho_{i}$ for some $i\in [m]$. In the first case $\q''\models \q'$ for a $\q''\in \mathcal{F}$ obtained in Point~3 and in the second case $\q''\models \q'$ for a $\q''\in \mathcal{F}$ obtained in Point~1.
	
	Case 2. $g$ is not onto $[n]$. Take any $i\leq n$ that is not in the range of $g$. If there is $i<m$ with $f(i+1)-f(i)\geq 2$, then $\q''\models \q'$ for a $\q''\in \mathcal{F}$ obtained in Point~2. Otherwise $n$ is not in the range of $g$ and $\q''\models \q'$ for a $\q''\in \mathcal{F}$ obtained in Point~1.
\end{proof}

{\bf Example~\ref{expDiam}}
	We represent queries in $\sQd$ of the form~\eqref{dnpath} as $\rho_{0}\dots\rho_{n}$. For $\sigma=\{A_{1},A_{2},B_{1}, B_{2}\}$, %consider the query $\q_1 \land \q_2$, in which
	let $\q_{1}= \emptyset (\s\sigma)^{n} \s$, $\q_{2} = \emptyset\sigma^{2n+1}$, and $\s=\{A_{1},A_{2}\}\{B_{1},B_{2}\}$. We show using~\cite[Example 18]{DBLP:conf/kr/FortinKRSWZ22} that any strengthening frontier for the query $\q_1 \land \q_2$ in $\sQd$ is of size $O(2^n)$.	
	Indeed, let $P$ be the set of queries of the form $\emptyset\s_{1}\dots \s_{n+1}$, where $\s_{i}$ is either $\{A_{1}\}\{A_{2}\}$ or $\{B_{1}\}\{B_{2}\}$. Clearly, $|P| = 2^{n+1}$. As shown in~\cite{DBLP:conf/kr/FortinKRSWZ22}, $\q_{1}\wedge \q_{2}\not\models \q$ for all $\q\in P$ and, for any data instance $\mathcal{D}$ with $\mathcal{D}\models \q_{1}\wedge \q_{2}$, there is at most one $\q\in P$ with $\mathcal{D}\not\models \q$. It follows that $|\fr| \ge 2^{n+1}$, for otherwise the pigeonhole principle would give distinct $\q^1,\q^2 \in P$ and some $\q \in \fr$ such that $\q_1 \land \q_2 \land \q^i \models \q$, $i=1,2$. As $\q_1 \land \q_2 \not\models \q$, there is a data instance $\D \models \q_1 \land \q_2$ with $\D \not\models \q$, and so $\D \not\models \q^i$, for $i=1,2$, contrary to $\q^1 \ne \q^2$.

\medskip\noindent

Theorem~\ref{thm:ninterval} is a consequence of the following two results.	
\begin{theorem}\label{thm:appendixintsecondfrontier}
		Let $\q = \Diamond (\rho_1 \land \nxt (\rho_2 \land \dots \land \nxt \rho_n)\in \Qint^{\sigma}$. A weakening frontier for $\q$ in $ \Qint^{\sigma}$ can be computed in polytime by dropping some atom from $\rho_{i}$ for some $i\in [n]$; here, if $i=1$ and $\rho_{i}$ is a singleton, then also drop $\top$ and take $\Diamond (\rho_2 \land \nxt(\rho_{3} \land \dots \land \nxt \rho_n)$.
\end{theorem}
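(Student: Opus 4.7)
The plan is to adapt the strategy of Theorem~\ref{thm:secondfrontier} to the restricted shape of $\Qint$-queries. Let $\fr \subseteq \Qint^\sigma$ be the set of queries produced by the described operations, where we understand that each operation is followed by a trivial normalization to restore normal form: strip trailing $\top$-positions if an atom is dropped from $\rho_n$, and, for the shift, iterate the shift past any leading $\top$-positions until the new first $\rho$ is non-$\top$ (such a position exists because $\rho_n \neq \top$). Both normalizations are polytime and preserve the containment witnesses used below. I must verify (a) each $\q' \in \fr$ satisfies $\q \models \q'$ and $\q \not\equiv \q'$, and (b) every $\q' \in \Qint^\sigma$ with $\q \models \q'$ and $\q \not\equiv \q'$ admits some $\q'' \in \fr$ with $\q'' \models \q'$. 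Polytime computability then follows immediately, since there are $O(|\q|)$ candidate drop-operations plus one iterated shift.

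Direction (a) is routine. For an atom drop from $\rho_i$, the identity map $h(j) = j$ is a containment witness for $\q, \q''$ via Lemma~\ref{lem:containment}. For the (iterated) shift past $\ell$ leading $\top$-positions, the map $h(0) = 0$, $h(j) = j + \ell$ for $j \geq 1$ is a containment witness; the constraints in Lemma~\ref{lem:containment} hold because $\op_i = \nxt$ for all $i \geq 2$ in $\q$. Non-equivalence in each case is witnessed by a data instance realizing $\q''$ tightly but omitting, respectively, the dropped atom or the atom originally at position $1$ of $\q$.

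Direction (b) rests on the structural observation that any containment witness $h \colon [m] \to [n]$ for $\q, \q' \in \Qint^\sigma$ must be a shift: since $\op_j' = \nxt$ for all $j \geq 2$, Lemma~\ref{lem:containment} forces $h(j+1) = h(j) + 1$ for $j \geq 1$, so $h(0) = 0$ and $h(j) = k + j - 1$ for $j \geq 1$, where $k = h(1) \geq 1$. I split into two cases. If $k = 1$, then $m \leq n$ and $\rho_j' \subseteq \rho_j$ for $j \in [m]$; since $\q \not\equiv \q'$, either $\rho_j' \subsetneq \rho_j$ for some $j \geq 1$ (drop an atom from $\rho_j$), or $m < n$ and we drop an atom from $\rho_n$ (which is non-$\top$ by normal form) followed by normalization---the identity witness still certifies $\q'' \models \q'$. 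If $k \geq 2$, dropping any atom from $\rho_1$ preserves the same witness $h$ when $|\rho_1| > 1$; when $|\rho_1| = 1$, the iterated shift past $\ell$ leading $\top$-positions produces $\q''$ with witness $h'(j) = h(j) - \ell$, which is valid because $k \geq \ell + 1$ (from $\rho_k \supseteq \rho_1' \neq \top$ we get $\rho_k \neq \top$, which rules out $k \in \{2, \dots, \ell\}$ since those are exactly the $\top$-positions).

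The main obstacle is establishing the shift-form of containment witnesses in (b) and matching it cleanly to the drop/shift operations; this is a direct consequence of Lemma~\ref{lem:containment} but requires careful bookkeeping with the normal-form conventions of $\Qint^\sigma$ (in particular, that $\rho_i = \top$ is allowed for $1 < i < n$ but forbidden at positions $1$ and $n$). The secondary subtlety is the iterative normalization in the shift operation; once both are handled, the case analysis mirrors Theorem~\ref{thm:secondfrontier} closely, and polytime computability of each step is immediate.
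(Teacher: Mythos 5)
Your proof is correct and follows essentially the same route as the paper's: obtain a containment witness via Lemma~\ref{lem:containment}, observe that for interval queries it must be a shift $h(j)=k+j-1$, and split on $k=1$ (further: some $\rho_j'\subsetneq\rho_j$, or $m<n$ and position $n$ is missed) versus $k\geq 2$ (position $1$ is missed), which is exactly the paper's surjective/non-surjective case analysis. Your explicit handling of the normalization edge cases (stripping trailing $\top$-positions and iterating the leading shift past $\top$-positions) is a point the paper glosses over, and your argument that $k\geq\ell+1$ correctly justifies that the shifted witness remains valid.
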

\begin{proof}
	Let $\mathcal{F}$ denote the set of queries defined in Theorem~\ref{thm:appendixintsecondfrontier}. It is easy to see using Lemma~\ref{lem:containment} that $\q\models \q'$ and $\q\not\equiv \q'$ for all $\q'\in \mathcal{F}$.
	
	Assume now $\q\models \q'$ and $\q\not\equiv \q'$ for some $\q'\in \Qint^{\sigma}$.
	We show that there exists a query $\q''\in \mathcal{F}$ such that $\q''\models \q'$. Assume
	$$
	\q'= \Diamond (\rho_{1}' \wedge \nxt (\rho_{2}' \wedge \cdots \wedge\nxt \rho_{m}')).
	$$
	and assume that $g:[m] \rightarrow [n]$
	is a containment witness for $\q,\q'$.
	
	We distinguish the following cases.
	
	Case 1. $g$ is a surjective mapping onto $[n]$. Then $n=m$ and $g(i)=i$ for all $i\in [m]$. Hence $\rho_{i}'\subseteq \rho_{i}$ for all $i\in [m]$. Then, as $\q\not\equiv \q'$,  $\rho_{i}'\subsetneq \rho_{i}$ for some $i\in [m]$. Then  $\q''\models \q'$ for a $\q''\in \mathcal{F}$ obtained by dropping an atom from $\rho_{i}$ (observe that $\rho_{i}'\not=\top$ by the definition of interval queries, so $\rho_{i}$ is not a singleton).
	
	Case 2. $g$ is not onto $[n]$. Then either $i=1$ or $i=n$ are not in the range of $g$. If $i=1$, then we obtain $\q''\in \mathcal{F}$ with $\q''\models \q'$ by dropping an atom from $\rho_{1}$ (if $\rho_{1}$ is a singleton we also drop $\top$). If $i=n>1$,
	then we obtain $\q''\in \mathcal{F}$ with $\q''\models \q'$ by dropping an atom from $\rho_{n}$. 
\end{proof}	
\begin{theorem}\label{thm:appendixintfirstfrontier}
		Let $\q = \Diamond (\rho_1 \land \nxt (\rho_2 \land \dots \land \nxt \rho_n)\in \Qint^{\sigma}$. A strengthening frontier for $\q$ in $\Qint^{\sigma}$ can be computed by applying once to $\q$ one of the following operations\textup{:}
		\begin{enumerate}
			\item extend some $\rho_{i}$ in $\q$ by some $A\in\sigma\setminus \rho_{i}$\textup{;}
			
			\item insert $A \land \nxt^{n}$ before $\rho_{1}$, that is, form $\Diamond (A \land \nxt^{n}(\rho_1 \land \nxt (\rho_2 \land \dots \land \nxt \rho_n)$ for some $A\in \sigma$ and $n\geq 1$\textup{;}
			
			\item add $\nxt^{n}A$ at the end of $\q$, for some $A\in \sigma$ and $n\geq 1$.
		\end{enumerate}
	\end{theorem}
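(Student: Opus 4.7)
The proof plan follows the template of Theorem~\ref{thm:firstfrontier} and Theorem~\ref{thm:appendixintsecondfrontier}. Let $\fr$ denote the set of queries obtained by a single application of operations 1--3 to $\q$. First I would verify, via Lemma~\ref{lem:containment}, that every $\q' \in \fr$ satisfies $\q' \models \q$ and $\q' \not\equiv \q$: the required containment witness from $\q'$ to $\q$ is immediate in each case, and for non-equivalence one exhibits a simple interval data instance that matches $\q$ but violates, respectively, the newly added atom, the inserted initial segment, or the appended tail of $\q'$.

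The heart of the argument is the frontier property. Fix $\q'' = \Diamond(\rho_1'' \land \nxt(\rho_2'' \land \dots \land \nxt \rho_m'')) \in \Qint^\sigma$ with $\q'' \models \q$ and $\q'' \not\equiv \q$, and let $h \colon [n] \to [m]$ be a containment witness (Lemma~\ref{lem:containment}). Since $\op_i = \nxt$ for every $i > 1$ in $\q$, the witness is forced to satisfy $h(i+1) = h(i) + 1$ for all $i \geq 1$, so $h$ is entirely determined by $k := h(1) \geq 1$, namely $h(i) = k + i - 1$. In particular $m \geq k + n - 1$, and the containment constraints $\rho_i \subseteq \rho_{k+i-1}''$ hold for all $i \in [n]$.

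A three-way case split on $(k, m)$ then completes the argument. If $k \geq 2$, then $\rho_1'' \neq \top$ (because $\q'' \in \Qint^\sigma$), so pick $A \in \rho_1''$ and apply operation~2 with parameter $k-1$ to obtain $\q' = \Diamond(A \land \nxt^{k-1}(\rho_1 \land \nxt(\rho_2 \land \dots \land \nxt \rho_n)))$; the identity witness $i \mapsto i$ then yields $\q'' \models \q'$. If $k = 1$ and $m > n$, normal form gives $\rho_m'' \neq \top$, so pick $A \in \rho_m''$ and apply operation~3 with parameter $m - n$. Finally, if $k = 1$ and $m = n$, then $h$ is the identity and $\rho_i \subseteq \rho_i''$ for all $i$; since $\q \not\equiv \q''$, some $\rho_j \subsetneq \rho_j''$, and any $A \in \rho_j'' \setminus \rho_j$ gives a strengthening via operation~1.

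The main obstacle is recognising that the rigid shape $h(i) = k + i - 1$ forced by the all-$\nxt$ suffix of $\q$ really does leave only the two free parameters $(k, m)$, and that each parameter is compensated by exactly one of the three operations; the defining constraints $\rho_1'' \ne \top$ and $\rho_m'' \ne \top$ then guarantee that a suitable atom $A \in \sigma$ can always be found. A secondary point, in contrast to Theorems~\ref{thm:firstfrontier}--\ref{thm:appendixintsecondfrontier}, is that $\fr$ is necessarily infinite, since operations 2 and 3 range over an unbounded natural-number parameter---consistent with the example preceding Theorem~\ref{thm:ninterval}.
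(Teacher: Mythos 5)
Your proof is correct and follows essentially the same route as the paper's: both take a containment witness $h$ for $\q''\models\q$, observe that the all-$\nxt$ suffix of $\q$ forces $h$ to be a shifted identity on $\{1,\dots,n\}$, and split into the three cases (range misses position $1$, range misses position $m$, $h$ surjective) matched by operations 2, 3, and 1 respectively. Your version merely makes explicit the parameters $k=h(1)$ and $m-n$ that the paper leaves implicit, and your closing remark that the frontier is infinite (hence no polytime claim) is consistent with the paper's discussion preceding Theorem~\ref{thm:ninterval}.
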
	
\begin{proof}
Let $\fr$ be the set of queries obtained by a single application of one of these operations to $\q$. By Lemma~\ref{lem:containment}, $\q'\models \q$ and $\q\not\equiv \q'$ for all $\q'\in \fr$. Let $\q'\models \q$ and $\q\not\equiv \q'$, for some $\q'\in \Qint^{\sigma}$ of the form
$$
\q'= \Diamond (\rho_{1}' \wedge \nxt(\rho_{2}' \wedge \cdots \wedge \nxt \rho_{m}')).
$$
Take a containment witness $h \colon [n] \rightarrow [m]$ for $\q',\q$. If $h$ is surjective, then $n=m$ and $h(i)=i$ for all $i\in [n]$, and so $\rho_{i}\subseteq \rho_{i}'$. As $\q\not\equiv \q'$, $\rho_{i}\subsetneq \rho_{i}'$, for some $i\in [n]$. Then operation 1 gives a $\q''\in \fr$ with $\q'\models \q''$.

Suppose $h$ is not surjective. Then either $i=1$ is not in the range of $h$ or $i=m$ is not in the range of $h$. In the first case  $\q'\models \q''$ for a $\q''\in \fr$ given by operation 2. In the second case, $\q'\models \q''$ for a $\q''\in \fr$ given by operation 3. 
\end{proof}

\section{Proofs for Section~\ref{sec:complexity}}

\medskip
\noindent
{\bf Theorem~\ref{thm:mostspeciverif}.}
{\em For $\Qd$, the most specific separator verification problem coincides with the unique most specific separator verification problem and is \coNP-complete.}

\begin{proof}
	It remains to show the lower bound. We reduce SAT to the complement of most specific separator verification. Let $\varphi=\psi_1\wedge\ldots\wedge \psi_k$ be a
	propositional formula with $k$ clauses over $n$ variables
	$x_1,\ldots,x_n$. We assume without loss of generality that no
	variable occurs both positively and negatively in a clause. We use $2n$ atoms $A_1,\bar
	A_1,\ldots,A_n,\bar A_n$. Define the positive examples
	$E^+=\{\D_0,\D_0',\D_1,\ldots,\D_n\}$ by setting
	\begin{itemize}
		
		\item $\D_0=\{A_i(1),\bar A_i(1)\mid 1\leq i\leq n\}$,
		
		\item $\D_0'=\{A_i(2),\bar A_i(2)\mid 1\leq i\leq n\}$,
		
	\end{itemize}
	and including in $\D_i$, for $1\leq i\leq n$, the
	assertions:
	\begin{itemize}
		
		\item $A_i(1)$ and $\bar A_i(2)$, and
		
		\item $A_j(1),\bar A_j(1),A_j(2),\bar A_j(2)$ for all
		$j\neq i$.
		
	\end{itemize}
	We also add to each $\D_{i}$ the following, for $1\leq \ell\leq
	k$:
	\begin{itemize}
		
		\item $\bar A_j(2+\ell)$, if $x_j$ does not occur negatively in $\psi_{\ell}$, and
		
		\item $A_j(2+\ell)$, if $x_j$ does not occur positively in $\psi_{\ell}$.
		
	\end{itemize}
	
	Now consider the query
	$$
	\q= \bigwedge_{1\leq \ell \leq k}\Diamond\rho_{\ell}
	$$
	with $\rho_{\ell}$ containing:
	\begin{itemize}
		
		\item $\bar A_j$, if $x_j$ does not occur negatively in $\psi_{\ell}$, and
		
		\item $A_j$, if $x_j$ does not occur positively in $\psi_{\ell}$.
		
	\end{itemize}
	The negative examples are not relevant, we simply take $E^{-}=\{\D^{-}\}$ with $\D^{-}=\{A_{1}(0)\}$.
	
	Then clearly $\D,0\models \q$ for all $\D \in E^{+}$ and $\D^{-},0\not\models\q$. We show that $\q$ is equivalent to
	the unique most specific separator if and only if $\varphi$ is
	not satisfiable. For a variable assignment $\mathfrak a$ for $x_1,\ldots,x_n$,
	we denote with $\rho_\mathfrak a$ the set that contains, for all $i$, $A_i$, if
	$\mathfrak a(x_i)=1$ and $\bar A_i$, otherwise.
	
	Assume first that $\varphi$ is satisfiable. Take a variable
	assignment $\mathfrak a$ witnessing this. Then
	$\Diamond\rho_{\mathfrak a}$ separates $E$ but
	$\q\not\models\Diamond\rho_{\mathfrak a}$ and so $\q$ is not a
	most specific separating query for $E$. To show that
	$\q\not\models\Diamond\rho_{\mathfrak a}$
	take the data instance $\D$ containing
	for $1\leq \ell\leq
	k$:
	\begin{itemize}
		
		\item $\bar A_j(2+\ell)$, if $x_j$ does not occur negatively in $\psi_{\ell}$, and
		
		\item $A_j(2+\ell)$, if $x_j$ does not occur positively in $\psi_{\ell}$.
		
	\end{itemize}
	Then, by definition, $\D,0\models \q$. But $\D,0\not\models
	\Diamond\rho_{\mathfrak a}$ since $\mathfrak a$ is a satisfying variable assignment
	for $\varphi$.
	
	For the converse direction, assume that $\q'$ is a query in
	$\Qd$ that separates $E$ but that $\q\not\models \q'$. We show that
	$\varphi$ is satisfiable. By the positive examples $\D_{0}$
	and $\D_{0}'$ we may assume that $\q'$ is a conjunction 
	\[\q'=\Diamond\rho'_1\wedge \ldots \wedge \Diamond \rho'_\ell\] 
	with each $\rho'_j$ a set of atoms. It should be clear that
	each $\Diamond\rho'_j$ separates $E$ and there is at least one $j$ with $\q\not\models \Diamond\rho'_j$. Thus, we can assume that $\q'$ actually
	takes the form $\Diamond\rho$ for some set of atoms $\rho$.

	By
	construction of the positive examples $\D_i$, $1\leq i\leq n$, either $\rho\subseteq \rho_{\ell}$ for some $1\leq \ell \leq k$, or for every $i$ at most one of
	$A_i,\bar A_i$ is contained in $\rho$. As $\q\not\models \q'$,
	the first condition does not hold and the latter holds. Thus,
	$\rho\subseteq \rho_\mathfrak a$ for some variable
	assignment $\mathfrak a$. It remains to show that $\mathfrak a$ is a model for
	$\varphi$. But this follows directly from $\rho_{\mathfrak a}\not\subseteq \rho_{\ell}$ for any $\rho_{\ell}$, $1\leq \ell \leq k$.
\end{proof}

\medskip
\noindent
We do not know whether Theorem~\ref{thm:mostspeciverif} still holds if the number of positive examples is bounded.

\medskip\noindent
{\bf Theorem~\ref{thm:uniqueveri}.}
{\em Unique most specific/general separator verification in $\Qpnd$, $\Qpd$, $\Qint$ as well
as unique most general separator verification in $\Qd$ are \coNP-complete.}

\medskip
\noindent
\begin{proof}
	For the upper bound we show an \NP{} upper bound for the complement.
	For given $\q$ and $E$, check that either $\Q$ is not a most general/specific separator (this is in \PTime{} by Corollary~\ref{cor:most}) or there
	exists $\q'$ with $\q'\not\equiv \q$ which separates $E$ (which can be done in \NP{} by guessing $\q'$ and then doing the checks in \PTime{}).
	
	For the lower bounds we show an \NP{} lower bound for the complement.
	We use that given a propositional formula $\varphi$ and a
	satisfying assignment $\mathfrak a$ it is \NP{}-hard to
	decide whether there is satisfying assignment for $\varphi$
	different from $\mathfrak a$.
	
	Let $\varphi=\psi_1\wedge\ldots\wedge \psi_k$ a
	propositional formula with $k$ clauses over $n$ variables
	$x_1,\ldots,x_n$. Let $\mathfrak a_{0}$ be a satisfying assignment. We assume without loss of generality that no
	variable occurs both positively and negatively in a clause. We use $2n$ atoms $A_1,\bar
	A_1,\ldots,A_n,\bar A_n$. Define the positive examples
	$E^+=\{\D_0,\D_0',\D_1,\ldots,\D_n\}$ by setting
	\begin{itemize}
		
		\item $\D_0=\{A_i(1),\bar A_i(1)\mid 1\leq i\leq n\}$,
		
		\item $\D_0'=\{A_i(2),\bar A_i(2)\mid 1\leq i\leq n\}$,
		
	\end{itemize}
	and including in $\D_i$, for $1\leq i\leq n$, the
	assertions:
	\begin{itemize}
		
		\item $A_i(1)$ and $\bar A_i(2)$, and
		
		\item $A_j(1),\bar A_j(1),A_j(2),\bar A_j(2)$ for all
		$j\neq i$.
		
	\end{itemize}
	We next define
	$E^-=\{\D_1^1,\ldots,\D_k^1,\D^2_1,\ldots,\D^2_n\}$. In $\D_i^1$, $1\leq i\leq
	k$, we include
	\begin{itemize}
		
		\item $\bar A_j(1)$, if $x_j$ does not occur negatively in $\psi_i$, and
		
		\item $A_j(1)$, if $x_j$ does not occur positively in $\psi_i$.
		
	\end{itemize}
	In $\D_i^2$, $1\leq i\leq n$, we include
	\begin{itemize}
		
		\item $A_j(1),\bar A_j(1)$, for all $j\neq i$.
		
	\end{itemize}
	%
	% Finally, $\D^3$ contains
	% %
	% \begin{itemize}
		%
		%   \item $A_i(1),\bar A_i(2)$, for all $1\leq i\leq n$.
		%
		% \end{itemize}
	%
	Set $E=(E^+,E^-)$. For a variable assignment $\mathfrak a$ for $x_1,\ldots,x_n$,
	we denote with $\rho_\mathfrak a$ the set that contains, for all $i$, $A_i$, if
	$\mathfrak a(x_i)=1$ and $\bar A_i$, otherwise.
	
	\begin{lemma}\label{lem:basic-reduction0}
		\begin{enumerate}[label=(\roman*)]
			
			\item For every model $\mathfrak a$ of
			  $\varphi$, $\Diamond \rho_\mathfrak a$
			separates $E$.
			
			\item If $\q\in\Qpnd$ separates $E$, then
			  $\q\equiv\Diamond\rho_\mathfrak a$ for
			some model $\mathfrak a$ of $\varphi$.
		\end{enumerate}
		%
		%   Hence, every
		%     \item $\Diamond \rho_v$ is a weakly most general separator for
		%       $E$,
		%
		%     \item $\Diamond \rho_v$ is a weakly most specific separator for
		%       $E$.
		%
	\end{lemma}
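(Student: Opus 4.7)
The plan is to handle the two directions separately, relying throughout on the fact that for $\q=\Diamond \rho$, $\D,0\models \q$ iff $\rho\subseteq \D_{t}:=\{A\mid A(t)\in \D\}$ for some $t\geq 1$.

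For part $(i)$, given a model $\mathfrak a$ of $\varphi$, I would verify that $\Diamond \rho_\mathfrak a$ separates $E$ by checking each example. For the positive examples, $\D_0$ and $\D_0'$ contain every literal at times $1$ and $2$ respectively, so the inclusion $\rho_\mathfrak a\subseteq \D_{0,1}$ (and $\D_{0,2}'$) is immediate. For $\D_i$, I would observe that $\D_{i,1}$ contains $A_i$ together with all $A_j,\bar A_j$ for $j\neq i$, while $\D_{i,2}$ contains $\bar A_i$ together with the same complete set for $j\neq i$; thus $\rho_\mathfrak a\subseteq \D_{i,1}$ if $\mathfrak a(x_i)=1$ and $\rho_\mathfrak a\subseteq \D_{i,2}$ otherwise. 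For the negative examples, $\D_i^2$ has atoms only at time $1$ and omits both $A_i$ and $\bar A_i$, so $\rho_\mathfrak a$ is not included anywhere in $\D_i^2$. For $\D_i^1$, the key computation is that $\rho_\mathfrak a\subseteq \D_{i,1}^1$ iff, for every $j$, the literal of $\rho_\mathfrak a$ on variable $x_j$ is among the atoms added to $\D_i^1$; unpacking the definition, this holds iff $\mathfrak a$ falsifies every literal of $\psi_i$, which is excluded since $\mathfrak a\models \varphi$.

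For part $(ii)$, suppose $\q\in\Qpnd$ separates $E$. I would first pin down the shape of $\q$. Since $\max\D_0=1$, the temporal depth $\dep(\q)$ is at most $1$. If $\dep(\q)=0$, then $\q=\rho_0$ is a conjunction of atoms, but $\D_0$ has no atoms at time $0$, forcing $\rho_0=\top$ and contradicting that $\q$ fails in the negative examples. Hence $\q=\rho_0\wedge \op_1\rho_1$ in normal form with $\rho_1\neq \top$, and again $\rho_0=\top$. If $\op_1=\nxt$, then $\D_0',0\models \q$ would require $\rho_1\subseteq \D_{0,1}'=\emptyset$, a contradiction. So $\q=\Diamond \rho$ for some nonempty conjunction $\rho$ of literals from $\{A_i,\bar A_i\mid 1\leq i\leq n\}$ (the latter because $\rho$ must fit inside $\D_{0,1}$).

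Next I would determine $\rho$ variable by variable. From $\D_i^2,0\not\models \Diamond \rho$ and the fact that $\D_i^2$ has atoms only at time $1$ with $\D_{i,1}^2=\{A_j,\bar A_j\mid j\neq i\}$, I get that $\rho$ contains at least one of $A_i,\bar A_i$. From $\D_i,0\models \Diamond\rho$, using that $\D_i$ has atoms only at times $1,2$ with $\D_{i,1}$ missing $\bar A_i$ and $\D_{i,2}$ missing $A_i$, I get that $\rho$ contains at most one of $A_i,\bar A_i$. Defining $\mathfrak a$ by $\mathfrak a(x_i)=1$ iff $A_i\in\rho$ yields $\rho=\rho_\mathfrak a$. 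Finally, $\D_i^1,0\not\models \Diamond\rho_\mathfrak a$ combined with the equivalence established in $(i)$ (that $\rho_\mathfrak a\subseteq \D_{i,1}^1$ iff $\mathfrak a\not\models\psi_i$) forces $\mathfrak a\models\psi_i$ for every $i$, hence $\mathfrak a\models\varphi$.

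The only mildly delicate step is the shape analysis in $(ii)$, where one must rule out $\op_1=\nxt$ and argue that no longer path query can separate; this rests entirely on the short timestamps in $\D_0$ and $\D_0'$ and the normal-form convention, so no deeper machinery is required.
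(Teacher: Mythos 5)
Your proposal is correct and follows essentially the same route as the paper's own proof: part (i) by direct verification on each example, and part (ii) by first pinning the separator down to the shape $\Diamond\rho$ using $\D_0$ and $\D_0'$, then forcing exactly one of $A_i,\bar A_i$ into $\rho$ via $\D_i$ and $\D_i^2$, and finally reading off satisfaction of each clause from the negative examples $\D_i^1$. The only difference is that you spell out the case analysis (depth $0$, $\op_1=\nxt$, and the $\D_i^1$ membership computation) in more detail than the paper, which simply declares these steps straightforward.
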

	\begin{proof}	
		Point~(i) is straightforward.
		
		For Point~(ii), take any separator \q for $E$. Since the positive
		example $\D_0$ satisfies $\max \D_0 = 1$, we can assume that \q has
		temporal depth at most 1. It cannot be of temporal depth 0 as then it would not separate.
		Thus, \q has temporal depth $1$ and is of shape
		$\q = \op \rho$ for some $\rho\subseteq \{A_1,\bar
		A_1,\ldots,A_n,\bar A_n\}$ and $\op\in\{\Diamond,\nxt\}$.
		Since~$\D_0'$ is a positive example, $\op$ cannot be $\nxt$ and $\q
		= \Diamond\rho$.
		By
		construction of the positive examples $\D_i$, $1\leq i\leq n$, for every $i$ at most one of
		$A_i,\bar A_i$ is contained in $\rho$. Since all $\D^2_i$ are
		negative examples, also at least one of $A_i,\bar A_i$ is
		contained in $\rho$. Thus, $\rho=\rho_\mathfrak a$ for some variable
		assignment $\mathfrak a$. It remains to show that
		$\mathfrak a$ is a model for
		$\varphi$. Let $\psi_i$ be a clause of $\varphi$ and assume that
		$\mathfrak a\not\models\psi_i$. Then $\mathfrak a(x_j)=1$ (and hence $A_j\in \rho$ if
		$x_j$ occurs negatively in $\psi_i$ and $\mathfrak a(x_j)=0$ (and hence
		$\bar A_j\in \rho$) if $x_j$ occurs positively. It is now
		readily checked that $\Dmc_i^1,0\models \Diamond\rho$, a contradiction.
	\end{proof}
	Observe that the $\Diamond\rho_{\mathfrak a}$ are pairwise
	incomparable w.r.t.\ containment. Hence, it follows from
	Lemma~\ref{lem:basic-reduction0} that $\Diamond\rho_{\mathfrak
	  a_{0}}$ is not the unique most specific/general separator in
	  $\Qpnd$ (equivalenty, $\Qpd$ or $\Qint$) iff there is an
	  assignment $\mathfrak a$ distinct from $\mathfrak a_{0}$ satisfying $\varphi$.
	
	For $\Qd$ the same argument applies, but only for unique most general separators.
\end{proof}

\medskip\noindent
{\bf Theorem~\ref{lem:consnew}.}
{\em Let $\mathcal{Q}\subseteq \Qpnd$ be any class  containing all $\Diamond\rho$ with $\rho$ a conjunction of atoms. Then
\begin{itemize}
	\item counting most specific/general $\mathcal{Q}$-separators is $\sharp P$-hard.
	\item the existence of unique most specific/general separator is US-hard.
\end{itemize}
even for $E=(E^{+},E^{-})$ and $\sigma=\sig(E)$ with
\begin{enumerate}
	\item $|E^{-}|\leq 1$; or
	
	\item $|E^{-}| \leq 1$, $|\sigma| = 2$ and
	$\mathcal{Q}=\Qint$ or $\mathcal{Q}\supseteq \Qpd$; or
	
	\item $|E^{+}| \leq 4$, $|\sigma| = 3$, and $\mathcal{Q}\supseteq \Qpd$.
\end{enumerate}
This result also holds for most general separator case for $\Qd$ and $|E^{-}|=1$ and $|\sigma|=2$.
}

\medskip\noindent
\begin{proof}
We start with the proof of Point~1. For the sake of readability, we
	provide the proof first for an unbounded number of negative examples and
	argue afterwards that a single negative examples suffices.
	
	We reduce from SAT in essentially the same way as in the lower bound proof for Theorem~\ref{thm:uniqueveri} above. Let $\varphi=\psi_1\wedge\ldots\wedge \psi_k$ a
	propositional formula with $k$ clauses over $n$ variables
	$x_1,\ldots,x_n$. We assume without loss of generality that no
	variable occurs both positively and negatively in a clause. We use $2n$ atoms $A_1,\bar
	A_1,\ldots,A_n,\bar A_n$. Define the examples
	$E=(E^{+},E^{-})$ in exactly the same way as in the proof of Theorem~\ref{thm:uniqueveri}. Then 
	Lemma~\ref{lem:basic-reduction0} holds. Point~1 for an unbounded number of negative examples follows directly.
	
	To finish the proof for $|E^{-}|=1$, it remains to join the negative examples into a single one.  Recall
	that, due to the structure of the positive examples, any separator for $E$ has temporal depth at
	most~$1$. Based on this observation, it can be verified that
	Lemma~\ref{lem:basic-reduction0} continues to hold when $E^-$ is
	replaced by $\{\D^-\}$ for the data
	instance
	\[\D^-=\emptyset\D^1_1\emptyset^2\cdots\emptyset^2 \D^1_k\emptyset^2
	\D^2_1\emptyset^2\cdots \emptyset^2\D^2_n.\]

	\bigskip
	We next provide the proof of Point~2. 
	We start by considering $\Qint$. Again, we
	provide the proof first for an unbounded number of negative examples and
	argue afterwards that a single negative examples suffices.
	
	\paragraph{\Qint} We reduce from SAT. Let $\varphi=\psi_1\wedge\ldots\wedge \psi_k$ a
	propositional formula with $k$ clauses over $n$ variables
	$x_1,\ldots,x_{n}$. We use three atoms $M,A,\bar
	A$. Define the positive examples
	$E^+=\{\D_0,\D_1,\ldots,\D_{n}\}$ by including in $\D_i$, for $1\leq
	i\leq n$ the
	assertions:
	\begin{itemize}
		
		\item $M(1)$ and $M(n+1)$, and
		
		\item $A(i)$ and $\bar A(n+i)$, and
		
		\item $A(j),\bar A(n+j)$ for all $j\neq i$.
		
	\end{itemize}
	In $\D_0$, we include
	\begin{itemize}
		
		\item $M(1)$, and
		
		\item $A(i),\bar A(i)$, for all $1\leq i\leq n$.
		
	\end{itemize}
	We next define negative examples
	$E^-=\{\D^0,\D_1^1,\ldots,\D_k^1,\D^2,\D^3_1,\ldots,\D^3_n\}$.
	In $\D^0$, we include
	\begin{itemize}
		
		\item $M(i),A(i),\bar A(i)$, for all $1\leq i \leq n-1$.
		
	\end{itemize}
	In $\D_i^1$, $1\leq i\leq
	k$, we include
	\begin{itemize}
		
		\item $M(1)$, and
		
		\item $\bar A(j)$, if $x_j$ does not occur negatively in $\psi_i$, and
		
		\item $A(j)$, if $x_j$ does not occur positively in $\psi_i$.
		
	\end{itemize}
	In $\D^2$, we include
	\begin{itemize}
		
	  \item $A(j),\bar A(j)$, for all $1\leq j\leq n$.
		
	\end{itemize}
	In $\D^3_i$, $1\leq i\leq n$, we include
	\begin{itemize}
		
		\item $M(1)$,
		
		\item $A(j),\bar A(j)$, for all $i\neq j$.
		
	\end{itemize}
	Set $E=(E^+,E^-)$. For a variable assignment $\mathfrak a$ for $x_1,\ldots,x_n$,
	we denote with $\q_\mathfrak a$ the query
	\[\q_\mathfrak a = \Diamond (\rho^1_\mathfrak a\wedge (\nxt
	  \rho^2_\mathfrak a\wedge (\ldots \wedge
	\nxt\rho_\mathfrak a^n)\ldots ))\]
	where $\rho_\mathfrak a^1=\{M,A\}$ if $\mathfrak a(x_1)=1$ and
	$\rho_\mathfrak a^1=\{M,\bar A\}$ otherwise, and
	for all $i>1$, $\rho_\mathfrak a^i=\{A\}$ if $\mathfrak
	a(x_i)=1$ and $\rho_\mathfrak a^i=\{\bar
	A\}$, otherwise.
	\begin{lemma} \label{lem:basic-reduction2}
		\begin{enumerate}[label=(\roman*)]
			
			\item For every model $\mathfrak a$ of
			  $\varphi$, $\q_\mathfrak a$
			separates $E$.
			
			\item If $\q\in\Qint$ separates $E$, then
			  $\q\equiv\q_\mathfrak a$ for
			some model $\mathfrak a$ of $\varphi$.
			
			\item Every $\Qint$-separator for $E$
			is both a most general and most specific
			separator.
			
		\end{enumerate}
		%
		%   Hence, every
		%     \item $\Diamond \rho_v$ is a weakly most general separator for
		%       $E$,
		%
		%     \item $\Diamond \rho_v$ is a weakly most specific separator for
		%       $E$.
		%
		%
		%
	\end{lemma}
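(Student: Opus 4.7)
\end{lemma}

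\begin{proof}[Proof plan]
The approach is to handle the three parts sequentially. Part~(i) is a direct case check; part~(ii) requires a careful case analysis, structured so that each family of negative examples rules out one coordinate pattern; and part~(iii) follows quickly from the containment criterion (Lemma~\ref{lem:cont}).

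For (i), I would check the positive examples first: $\D_{0}$ carries $M$ at position $1$ and both $A$ and $\bar A$ at every position up to $n$, so the interval of $\q_{\mathfrak a}$ trivially matches starting at $1$, regardless of $\mathfrak a$; each $\D_{i}$ with $i\ge 1$ carries $M$ at both anchors $1$ and $n+1$, and by construction exactly one of the two candidate intervals matches, namely the one at $1$ if $\mathfrak a(x_{i})=1$ and the one at $n+1$ otherwise. The negative examples fail for four distinct reasons: $\D^{0}$ is too short to accommodate a depth-$n$ query starting after $0$; $\D^{2}$ has no $M$ anywhere, ruling out any anchor for $\rho^{1}_{\mathfrak a}$; $\D^{3}_{i}$ has its only $M$ at position $1$ but an empty slot at position $i$; and $\D^{1}_{i}$ encodes exactly the literals \emph{absent} from $\psi_{i}$, so its interval starting at $1$ matches $\q_{\mathfrak a}$ iff $\mathfrak a$ falsifies $\psi_{i}$, which is excluded by $\mathfrak a\models\varphi$.

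For (ii), which is the main technical step, let $\q = \Diamond(\rho^{1}\wedge\nxt(\rho^{2}\wedge\dots\wedge\nxt\rho^{d}))\in\Qint$ separate $E$. I would first use $\D_{0}$ (maximum timestamp $n$) together with the negative example $\D^{0}$ (maximum timestamp $n-1$, every slot full) to pin down $d=n$ and force the interval in $\D_{0}$ to start at position $1$. Then $\D^{2}$ forces $M\in\rho^{1}$, and since $\D_{0}$ carries $M$ only at position $1$, $M$ cannot occur in any $\rho^{i}$ with $i>1$. Each $\D^{3}_{i}$ then forces $\rho^{i}\neq\top$ for $i>1$ and $\rho^{1}\not\subseteq\{M\}$. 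Finally, each $\D_{i}$ with $i\ge 1$ has $A$ but not $\bar A$ at position $i$ and $\bar A$ but not $A$ at $n+i$, with every other slot full; matching via either anchor forces $\rho^{i}$ to be a singleton $\{A\}$ or $\{\bar A\}$ for $i>1$, and $\rho^{1}\in\{\{M,A\},\{M,\bar A\}\}$. Reading off the chosen literal at each coordinate yields an assignment $\mathfrak a$ with $\q\equiv\q_{\mathfrak a}$, and the $\D^{1}_{i}$-analysis from (i) now forces $\mathfrak a\models\psi_{i}$ for every $i$.

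For (iii), any two queries $\q_{\mathfrak a},\q_{\mathfrak b}$ with $\mathfrak a\neq\mathfrak b$ share the same depth $n$ and the same $\nxt$-pattern, so by Lemma~\ref{lem:cont} any containment witness between them must be the identity; hence $\q_{\mathfrak a}\models\q_{\mathfrak b}$ would require $\rho^{i}_{\mathfrak b}\subseteq\rho^{i}_{\mathfrak a}$ for every $i$, but these slots are singletons (plus $M$ at the first coordinate) that differ somewhere, forcing $\mathfrak a=\mathfrak b$, a contradiction. Consequently, every separator is simultaneously $\models$-minimal and $\models$-maximal. The main obstacle is the coordinate-by-coordinate bookkeeping in (ii): the two $M$-anchors in each $\D_{i}$ provide exactly the flexibility needed for every satisfying $\q_{\mathfrak a}$ to separate, yet together with the negative examples $\D^{2}$ and $\D^{3}_{i}$ they pin down the singleton shape of each $\rho^{i}$ -- a delicate balance that must be verified in every coordinate.
\end{proof}
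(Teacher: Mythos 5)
Your proof is correct and follows essentially the same route as the paper's: pin down the depth $d=n$ via $\D_0$ and $\D^0$, force $M$ into $\rho^1$ (and only $\rho^1$) via $\D^2$ and $\D_0$, extract exactly one literal per coordinate via the positive $\D_i$ and the negative $\D^3_i$, and read off a satisfying assignment via the $\D^1_i$. Part (iii) is the same pairwise-incomparability observation as in the paper, which you justify in slightly more detail (via the containment-witness criterion) than the paper's one-line assertion.
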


	\begin{proof}
		Point~(iii) is immediate from Points~(i)
		and~(ii) and the fact that all $\q_\mathfrak a$ are
		pairwise incomparable w.r.t.\ containment. We hence concentrate on Points~(i)
		and~(ii).
		
		Point~(i) is straightforward.
		
		For Point~(ii), take any separator
		\[\q = \Diamond (\rho^1\wedge (\nxt \rho^2\wedge (\ldots \wedge
		\nxt\rho^m)\ldots ))\]
		for $E$, for some $m>0$ and sets $\rho^i$. Since $\D^0$ is a
		negative example for \q, we have $m\geq n$.
		Since $\D_0$ is a positive example for $\q$, we have
		that $m\leq n$,
		hence $m=n$. Then observe that some $\rho^i$
		has to contain $M$, since otherwise $\Dmc^2,0\models \q$. By $\D_0$
		again, it can only be $\rho^1$.
		
		By construction of the positive examples, for every $i$ at most one of
		$A,\bar A$ is contained in $\rho^i$. Since all $\D^3_i$ are
		negative examples, each $\rho_i$ contains also at least one of
		$A,\bar A$. Thus, $\q=\q_\mathfrak a$ for some variable
		assignment $\mathfrak a$.
		
		It remains to show that $\mathfrak a$ is a model for $\varphi$. Let $\psi_i$
		be a clause of $\varphi$ and assume that $v\not\models\psi_i$ for
		some $i$. Then $\mathfrak a(x_j)=1$ (and hence $A\in \rho_j)$ if $x_j$ occurs negatively in
		$\psi_i$ and $\mathfrak a(x_j)=0$ (and hence $\bar A\in \rho_j$) if $x_j$
		occurs positively. It is now readily checked that $\Dmc_i^1,0\models
		\q_\mathfrak a$, a contradiction.
	\end{proof}
	
	It remains to join the negative examples into a single one. Recall
	that, because of $\D_0\in E^+$, any separator for $E$ has temporal depth at
	most $n$. Based on this observation, it can be verified that
	Lemma~\ref{lem:basic-reduction2} continues to hold when $E^-$ is
	replaced by $\{\D^-\}$ for the data
	instance
	\[\D^-=\emptyset\D^0\emptyset^{n} \D^1_1\emptyset^n\cdots\emptyset^n\D^1_k\emptyset^n
	\D^2\emptyset^n\D^3_1\emptyset^n\cdots \emptyset^n\D^3_n.\]

	\bigskip
	We next assume that $\Qpd \subseteq \mathcal{Q}\subseteq \Qpnd$.
	
	We employ a reduction of 3SAT to a longest common subsequence problem, which
	is defined as follows: Given a set $S$ of sequences and a number $m$,
	decide whether the sequences in $S$ have a common subsequence of
	length $m$. We use the following result~\cite[Proposition 6]{DBLP:conf/cpm/BlinBJTV12}.
	Given a formula $\varphi=\psi_{1}\wedge \cdots \wedge \psi_{k}$ with
	$k$ clauses over $n$ variables $x_{1},\ldots,x_{n}$, one can construct
	in polynomial time a set $S$ of sequences over the alphabet $\{H,T\}$ such that
	\begin{itemize}
		\item $S$ contains $k+2$ sequences;
		\item there is a poly-time computable bijection $f$
		  from the set $L$ of common subsequences of $S$ of
		  length exactly $11n$ onto the set $V$ of variable
		  assignments $\mathfrak a$ satisfying $\varphi$;
		\item no common subsequence of the sequences in $S$ of length $>11n$ exists.
	\end{itemize}
	Thus, given $\varphi$ as above, we use $S$ to construct $E=(E^{+},E^{-})$. For any sequence $s_{1}\cdots s_{\ell}\in S$ we include in $E^{+}$ the data instance $\emptyset\{s_{1}\}\cdots\{s_{\ell}\}$. We also include in $E^{+}$ the data instances $\D_{1}^{+}=\emptyset\{H,T\}^{11n}$ and $\D_{2}^{+}=\emptyset(\emptyset^{11n}\{H,T\})^{11n}$
	and include in $E^{-}$ the data instances $\D_{1}^{-}=\emptyset\{H,T\}^{11n-1}$ and
	$\D_{2}^{-}=\emptyset(\emptyset^{11n}\{H,T\})^{11n-1}$.
	
	\medskip\noindent\textit{Claim~1.} Any query $\q\in
	\mathcal{Q}$ separating $E$ is a $\Diamond$-path
	sequence query and the sequence defined by $\q$ is a common
	subsequence of the sequences in $S$.
	
	\medskip\noindent\textit{Proof of Claim~1.} To prove Claim 1 suppose $\q\in \mathcal{Q}$ of the form
	\begin{equation}
		\q = \rho_0 \land \op_{1} (\rho_1 \land \op_{2} (\rho_2 \land \dots \land \op_{\ell}\rho_{\ell}) ),
	\end{equation}
	separates $E$. By definition of separation it is clear that $\rho_{0}=\top$ and $|\rho_{i}|\leq 1$. From $\D_{1}^{+}\in E^{+}$ it follows that $\ell\leq 11n$
	and from $\D_{1}^{-}\in E^{-}$ it follows that $\ell\geq 11n$. Hence $\ell=11n$.
	
% 	Recall that we may assume that $\q$ does not contain
% 	$\nxt\Diamond$ nor $\Diamond\nxt$ (as they can be replaced by
% 	$\Diamond\Diamond$).\nj{Is this really needed in the argument
% 	in the following sentence?} 
	It follows from $\D_{2}^{+}\in E^{+}$ and $\ell= 11n$ that
	$\op_{i}=\Diamond$ for all $i\leq \ell$. We obtain from
	$\D_{2}\in E^{-}$ that $|\rho_{i}|\geq 1$ for all $i\geq 1$.
	Thus, $\q$ is a $\Diamond$-path sequence query of length
	$11n$.  Since we included one positive example for every
	sequence in $S$, the sequence corresponding to $\q$ is a common
	subsequence of the sequences in $S$. This finishes the proof
	of Claim~1.
	
	\medskip
	
	The following claim follows directly from the definition of $E$.
	
	\medskip\noindent\textit{Claim~2.} For every $s_{1}\ldots s_{\ell}\in S$, the corresponding $\Diamond$-path query $\Diamond(s_{1} \land \Diamond (s_{2} \cdots \Diamond s_{\ell}))$ separates $E$.
	
	\medskip
	
	It follows from Claims~1 and~2 and the fact that all sequence
	$\Diamond$-path queries of the same temporal depth are pairwise
	incomparable w.r.t.\ containment, that there is a 
	poly-time computable bijection $f$ from the set $L'$ of queries in
	$\mathcal{Q}$ separating $E$ onto the set $V$ of
	satisfying assignments for $\varphi$. 
% 	Moreover, the queries in
% 	$L'$ are all $\Diamond$-path sequence queries of the same
% 	length and therefore incomparable w.r.t.~$\models$.

\bigskip
We now come to the proof of Point~3. 
We use an existing reduction showing \NP-hardness of the
\emph{consistent subsequence problem}, defined as follows. Given two
sets $P,N$ of sequences, decide whether there is a sequence $s$
\emph{consistent with $P,N$}, that is, $s$ is a
	subsequence of each sequence in $P$, but of no sequence in $N$.
	
	We use the following
	result~\cite[Theorem~2.1]{DBLP:journals/tcs/Fraser96}. Given a 3CNF
	formula $\varphi=\psi_1\wedge\ldots\wedge \psi_k$ with $k$ clauses
	over $n$ variables, one can construct in polynomial time two sets
	$P,N$ of sequences over an alphabet $\sigma=\{\#,0,1\}$ such that
	\begin{itemize}
		
		\item $P$ contains two sequences;
		
		\item there is a poly-time computable bijection $f$ from the set
		$L$ of sequences that are consistent with $P,N$ of length $6k$ onto the
		set of models of $\varphi$;
		
		\item any sequence that is consistent with $P,N$ has length exactly $6k$.
		
	\end{itemize}
	Thus, given $\varphi$ as above, we use $P,N$ to construct
	$E=(E^{+},E^{-})$. For any sequence $s_{1}\cdots s_{\ell}\in P$ we
	include in $E^{+}$ the data instance
	$\emptyset\{s_{1}\}\cdots\{s_{\ell}\}$. Similarly, for any sequence
	$s_1\cdots s_\ell\in N$, we include in $E^-$ the data instance
	$\emptyset\{s_1\}\cdots\{s_\ell\}$. We also include in $E^{+}$
	the data instances
	$\D_{1}^{+}=\emptyset\{0,1,\#\}^{6k}$ and
	$\D_{2}^{+}=\emptyset(\emptyset^{6k}\{0,1,\#\})^{6k}$,
	and include in $E^{-}$ the data instances
	$\D_{1}^{-}=\emptyset(\{0,1,\#\})^{6k-1}$ and
	$\D_{2}^{-}=\emptyset(\emptyset^{6k}\{0,1,\#\})^{6k-1}$.
	
	Assume now that a set $\mathcal{Q}$ with
	$\mathcal Q_p[\Diamond]\subseteq \mathcal Q\subseteq \mathcal
	Q_p[\nxt\Diamond]$ is given.
	
	\medskip\noindent\textit{Claim~1.} Any query $\q\in \mathcal{Q}$
	separating $E$ is a $\Diamond$-path sequence query and
	the sequence defined by $\q$ is consistent with $P,N$.
	
	\medskip\noindent\textit{Proof of Claim~1.} Suppose $\q\in
	\mathcal{Q}$ of the form
	\[\q = \rho_0 \land \op_{1} (\rho_1 \land \op_{2} (\rho_2 \land
	\dots \land \op_{\ell}\rho_{\ell}) )\]
	separates $E$. By definition of separation it is clear that
	$\rho_{0}=\top$ and $|\rho_{i}|\leq 1$. From $\D_{1}^{+}\in E^{+}$
	it follows that $\ell\leq 6k$ and from $\D_{1}^{-}\in E^{-}$ it
	follows that $\ell\geq 6k$. Hence $\ell=6k$. From $\D_2^-\in E^-$ it
	also follows that $\rho_i\neq\emptyset$ for all $i>1$, hence $\q$ is
	a $\nxt\Diamond$-path sequence query.
	
	It follows from $\D_{2}^{+}\in E^{+}$ and $\ell=6k$ that
	$\op_{i}=\Diamond$ for all $i\leq \ell$. We have thus shown that
	$\q$ is a $\Diamond$-path sequence query of length $6k$. Finally, it
	is easy to see that the corresponding sequence is consistent with
	$P,N$. This finishes the proof of Claim~1.

	\medskip The following claim follows directly from the definition of
	$E$.
	
	\medskip\noindent\textit{Claim~2.} For every sequence consistent
	with $P,N$, the corresponding $\Diamond$-path query separates $E$.
	
	\medskip It follows from Claims~1 and 2 to that we have a poly-time
	computable bijection $f$ from the set $L'$ of queries in
	$\mathcal{Q}$ separating $E$ onto the set of models of $\varphi$. Moreover, the queries in $L'$
	are all $\Diamond$-path sequence queries of the same length and
	therefore incomparable w.r.t.~$\models$.
	%
	%   \ {\color{red} We have to
		%   modify a bit since $\Diamond\Diamond$ is not allowed in separating
		%   sequences. Could be dealt with by intoducing gaps of length 3c
		%   between the entries in $\beta_{1},\beta_{2}$.
		%
		% 	If $\nxt$ is allowed we have to exclude its use in separating queries. Works by introducing an additional positive example $(\{\sharp\} \{0,1\})^{3c}$ which forces that examples have length at most $2 \times 3c$ and then another positive example
		% 		in which we leave $2c$ long gaps between any entries $\{\sharp\}$ and $\{0,1\}$.}

\bigskip

We finally come to $\Qd$. Observe that our claim follow directly from Point~2 for $\Qpd$:
as we have only one negative example, any separating query in $\Qd$ contains a conjunct in $\Qpd$ that separates. Hence the most general separating queries
in $\Qd$ and $\Qpd$ coincide (up to logical equivalence). So the $\sharp \PTime$-lower bound established for $\Qpd$ for counting most general separating queries and the US-lower bound for the existence of a unique most general separator also hold for $\Qd$.
\end{proof}

\medskip\noindent
{\bf Theorem~\ref{lem:consnew+}.}
	{\em Let $\Q=\Qd$ or $\Q \subseteq \Qpnd$ be any class containing  all $\Diamond\rho$ with a conjunction of atoms $\rho$. Then counting most general $\mathcal{Q}$-sepa\-ra\-tors is $\sharp \PTime$-hard even for example sets $E = (E^+,E^-)$ with $|E^{+}|=2$ and $|E^{-}|=1$.}
	
\medskip\noindent
\begin{proof}
	By reduction from counting the number of satisfying assignments
	for a monotone propositional formula (that is, a formula in conjunctive normal form without negation).
	We re-use data instances introduced before.
	
	Take a monotone Boolean formula $\varphi=\psi_1\wedge \dots \wedge \psi_k$ with clauses $\psi_i$ over variables $x_1, \dots, x_n$. 
	We construct $E=(E^{+},E^{-})$ such that there is a bijection between the satisfying assignments for $\varphi$ and the most general separators for $E$ in $\Qpnd$ (even queries of the form $\Diamond\rho$). Define the positive examples $E^+ = \{\D_0, \D_0'\}$ with $2n$ atoms $A_1,\bar A_1, \dots, A_n,\bar A_n$ by taking 
	\begin{align*}
		& \D_0 = \{A_i(1),\bar A_i(1)\mid 1\leq i\leq n\},\\
		& \D_0'=\{A_i(2), \bar A_i(2)\mid 1\leq i\leq n\}
	\end{align*}
	Let $E^{-}=\{\D^{-}\}$ with 
	
\[\D^-=\emptyset\D^1_1\emptyset^2\cdots\emptyset^2 \D^1_k\emptyset^2
\D^2_1\emptyset^2\cdots \emptyset^2\D^2_n.\]
	where $\D_i^1$, $1\leq i\leq k$, comprises all $\bar A_j(1)$ and $A_j(1)$ if $x_j$ does not occur positively in $\psi_i$, and 
	$\D_i^2 = \{A_j(1),\bar A_j(1) \mid j\neq i\}$.
	
	For a variable assignment $\mathfrak a$ for $x_1,\ldots,x_n$,
	we denote with $\rho_\mathfrak a$ the set that contains, for all $i$, $A_i$, if
	$\mathfrak a(x_i)=1$ and $\bar A_i$, otherwise. Now one can easily show:
	\begin{enumerate}[label=(\roman*)]
		
		\item For every model $\mathfrak a$ of $\varphi$,
		  $\Diamond \rho_\mathfrak a$
		separates $E$.
		
		\item If $\q\in\Qpnd$ separates $E$, then
		  $\q\equiv\Diamond(\rho_\mathfrak a \cup \rho)$ for
		some model $\mathfrak a$ of $\varphi$ and some set $\rho$ of atoms of the form $\bar A_{i}$.		
	\end{enumerate}
	The claim for $\Q\subseteq \Qpnd$ follows. The claim for $\Qd$ also follows since we have only one negative example and so any most general separator is in $\Qd$.
\end{proof}
We show the properties of $\Qint$ that are stated in Section~\ref{sec:complexity}.
\begin{theorem}
	(1) For $\Qint$, counting most specific separators is in $\PTime$ if $|E^{+}|$ is bounded.
	
	(2) For example data instances $E=(E^{+},E^{-})$ that contain sequences, counting most specific sequence queries in $\Qint$ that separate $E$ is in \PTime.
\end{theorem}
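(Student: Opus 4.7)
My plan is to handle (1) by a polynomial enumeration of canonical candidate queries, and (2) by a reduction to common contiguous subword counting together with a key closure observation.

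For (1), I would reuse the family of queries $\q_{k,\vec{k}}$ built in the proof of Theorem~\ref{thm:qinappendix}(2). With $E^+=\{\D_1,\dots,\D_n\}$ and $m_i=\max\D_i$, the query $\q_{k,\vec{k}}=\Diamond(\rho^0_{k,\vec{k}}\wedge\nxt(\cdots\wedge\nxt\rho^k_{k,\vec{k}}))$ with $\rho^j_{k,\vec{k}}=\bigcap_i\{A\mid A(k_i+j)\in\D_i\}$ is determined by a length $k$ and starting positions $\vec{k}$; for bounded $n=|E^+|$ there are polynomially many such candidates. The central claim I would prove is that every $\q\in\sepms(E,\Qint)$ is equivalent to some $\q_{k,\vec{k}}$: given $\q=\Diamond(\rho_1\wedge\nxt(\cdots\wedge\nxt\rho_{k+1}))$ and a satisfying starting position $k_i$ for each $\D_i$, the corresponding $\q':=\q_{k,\vec{k}}$ satisfies $\rho_{j+1}\subseteq\rho^j_{k,\vec{k}}$ for all $j$, hence $\q'\models\q$ by Lemma~\ref{lem:containment}; $\q'$ still separates ($E^+$ by construction, $E^-$ because $\q'\models\q$), so $\models$-minimality of $\q$ forces $\q\equiv\q'$. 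The polynomial-time algorithm then enumerates the $\q_{k,\vec{k}}$, keeps the separators, filters to $\models$-minimals using Lemma~\ref{lem:containment}, and counts equivalence classes.

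For (2), let $L^+=\bigcap_{\D\in E^+}\mathrm{Sub}(\D)$ and $L^-=\bigcup_{\D\in E^-}\mathrm{Sub}(\D)$, where $\mathrm{Sub}(\D)$ denotes the set of contiguous subwords of the sequence $\D$; a sequence $\Qint$-query corresponds to a nonempty word $w$ that separates $E$ iff $w\in L^+\setminus L^-$. The key closure observation is that if $w\in L^+\setminus L^-$ then no word $w'$ containing $w$ as a contiguous subword can lie in $L^-$, for otherwise $w'$ would be a subword of some negative $t$, forcing $w$ to be a subword of $t$ and hence in $L^-$, a contradiction. Moreover, $L^+$ is itself downward closed under contiguous subwords, so any multi-character extension of $w$ in $L^+$ yields a one-character extension in $L^+$. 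Thus $w$ is a most specific sequence $\Qint$-separator iff $w\in L^+\setminus L^-$ and no one-character extension $aw$ or $wa$ lies in $L^+$. Since $|L^+|=O((\min_i\max\D_i)^2)$, the algorithm enumerates $L^+$ in polynomial time (e.g., by iterating over the contiguous subwords of any fixed positive sequence and testing membership in the others), then checks the two remaining conditions for each survivor and counts.

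The main obstacle in (1) is proving completeness of the candidate list---that every $\models$-minimal separator is equivalent to some $\q_{k,\vec{k}}$; this hinges on the fact that a $\Qint$-separator can always be tightened along a fixed alignment to the elementwise intersection without losing rejection of $E^-$. For (2), the critical step is the closure lemma for $L^-$, which collapses maximality testing to single-character extensions; without it one would need to reason about extensions of unbounded length, and the algorithmic analysis would be considerably more delicate.
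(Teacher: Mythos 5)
Your proof is correct and follows essentially the same route as the paper: part (1) spells out the paper's (terse) claim that the most specific separators coincide, up to equivalence, with the $\models$-minimal separating queries $\q_{k,\vec{k}}$ from the polynomial candidate set $\Xi$ of Theorem~\ref{thm:qinappendix}, via exactly the intended tightening-along-a-fixed-alignment argument, and part (2) enumerates the same $O(m^2)$ candidate subwords of a fixed positive sequence. The only deviation is that in (2) you test $\models$-maximality of a separating word locally via one-character extensions (justified by your closure observation for $L^-$) rather than by pairwise containment checks among the polynomially many candidates; this is a harmless implementation detail that does not change the complexity or the substance of the argument.
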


\begin{proof} We extend the proof of Theorem~\ref{thm:qinappendix}.
	 
	(1) Assume $E^{+}=\{\D_{1},\ldots,\D_{n}\}$. Let $\Xi$ be the set of queries defined in the proof of Theorem~\ref{thm:qinappendix}.
	Then the most specific queries in $\Xi$ that separate $E$ coincide with the most specific queries in $\Qint$ that separate $E$, and so can be counted in polytime.
		
	(2) $E^{+}=\{\D_{1},\ldots,\D_{n}\}$ be a set of sequences. Assume 
	$$
	\D_{1}= \{A_{0}\}\{A_{1}\}\cdots \{A_{m}\}
	$$
	Then any possible separating sequence query in $\Qint$ takes the form
	$$
	\Diamond(A_{k_{1}} \wedge \nxt(\cdots \nxt A_{k_{2}}))
	$$
	with $0<k_{1}\leq k_{2}\leq m$. Clearly the most specific queries of this form that separate $E$ can be counted in $\PTime$.
\end{proof}

% !TEX root = _IJCAI-24.tex

\section{Proofs for Section~\ref{algs}}

Let $t_\D(n) = \{ A \mid A(n) \in \D\}$. It will be convenient to assume that $t_{\mathcal E_j}(\infty) = \sigma$, for $E_j \in E^-$, where $\sigma$ is the set of symbols occurring in $E^+$.

\medskip
\noindent
\textbf{Lemma~\ref{lem:sep-bounded-diamond-prod}.}
\emph{$\sep(E,\Q) \ne \emptyset$ iff $\mathfrak P \otimes \mathfrak N$ contains a separating path $\pi$, with $\q_\pi$ separating $E$.}
\begin{proof}
$(\Rightarrow)$ Suppose $\q$ of the form $\rho_0 \land \Diamond (\rho_1 \land ( \dots \Diamond \rho_\ell) \dots )$ separates $E$. If $\ell = 0$, then $\rho_0 \subseteq \avec{l}(\bar{0})$ and $\rho_0 \not \subseteq t_{\mathcal E_j}(0)$ for all $1 \leq j \leq c_-$. It follows that $\avec{l}(\bar 0) \neq \emptyset$ and $(\bar{0}, \bar \infty)$ is in $\mathfrak P \times \mathfrak N$, which completes the proof. Suppose $\ell > 0$. Further, we assume that $\rho_0 \land \Diamond (\rho_1 \land ( \dots \Diamond \rho_{\ell-1}) \dots )$ does not separate $E$, which implies $\rho_\ell \neq \emptyset$. We define $\avec{n}^0 = \bar{0}$. Suppose $\avec{n}^i = (n_1, \dots, n_{c_+})$ has been defined, we define $\avec{n}^{i+1} = (n_1', \dots, n_{c_+}')$ where $n_j' = \min \{ n \mid n > n_j, \rho_{i+1} \subseteq t_{\mathcal D_j}(n) \}$, for $0 \leq i < \ell$. It follows that $\avec{l}(\avec{n}^\ell) \neq \emptyset$. Then we take $\avec{m}^0$ to be the root of $\mathfrak N$ and set $\avec{m}^{i+1}$ to be $\avec{m}'$ such that $(\avec{n}^i, \avec{m}^i) \to (\avec{n}^{i+1}, \avec{m}')$. We claim that the path $(\avec{n}^0, \avec{m}^0), \dots, (\avec{n}^\ell, \avec{m}^\ell)$ is such that $\avec{m}^\ell = \bar \infty$. To prove the claim, we introduce some notation. For a data instance $\mathcal E$ and $t \geq 0$, denote by $\mathcal E^{\leq t}$ the data instance obtained from $\mathcal E$ by removing all assertions $A(n)$ with $n>t$. Set $m_j^i = \min \{ t \mid t \geq i \text{ and }\mathcal E_j^{\leq t} \models \rho_0 \land \Diamond (\rho_1 \land ( \dots \Diamond \rho_i) \dots )\}$ or $m_j^i = \infty$ if $t$ as above does not exist, for $0 \leq i \leq c_-$ and $1 \leq j \leq l$. Let $\avec{m}^i_- = (m_1^i, \dots, m_{c_-}^i)$. It follows that $\avec{m}^\ell_- = \bar \infty$. It is easily shown by induction on $i$ that $\avec{m}^i_- = \avec{m}^i$, which completes the proof.

$(\Leftarrow)$ Let $(\avec{n}^0, \avec{m}^0), \dots, (\avec{n}^\ell, \avec{m}^\ell)$ be a path with $\avec{n}^0 = \bar{0}$, $\avec{l}(\avec{n}^\ell) \neq \emptyset$ and $\avec{m}^\ell = \bar \infty$. We will show that $\q = \rho_0 \land \Diamond (\rho_1 \land ( \dots \Diamond \rho_\ell) \dots )$ with $\rho_i = \avec{l}(\avec{n}^i)$ separates $E$. Indeed, clearly $\D \models \q$ for each $\D \in E^+$. By induction on $i$, we show that $\avec{m}^i_- = \avec{m}^i$. It follows that $\mathcal E \not \models \q$ for all $\mathcal E \in E^-$ which completes the proof.
\end{proof}

\noindent
\textbf{Lemma~\ref{th:bounded-subs}.}
\emph{If $\q$ of the form~\eqref{dnpath} separates $E$, then there is a separating path of the form~\eqref{eq:prod-path} with $\rho_i \subseteq \avec{l}(\avec{n}_i)$, for all $i \le n$.}
\begin{proof}
Straightforwardly follows from the construction in $(\Rightarrow)$ above.
\end{proof}

\noindent
\textbf{Lemma~\ref{th:strongest-crit}.}
\emph{A $\Qpd$-query $\q$ is a unique most specific $\Qpd$-separator for $E$ iff there is a separating path $\pi$ such that $\q = \q_\pi$ and $\q_\pi \models \q_\nu$, for every separating path $\nu$.}
\begin{proof}
$(\Rightarrow)$ Let $\q = \rho_0 \land \Diamond (\rho_1 \land ( \dots \Diamond \rho_\ell) \dots )$ be a unique most specific separator. Consider the path $\pi = (\avec{n}^0, \avec{m}^0), \dots, (\avec{n}^\ell, \avec{m}^\ell)$ constructed in the proof of Lemma~\ref{lem:sep-bounded-diamond-prod} $(\Rightarrow)$. Clearly, $\pi$ is a separating path. Because $\q$ is a most specific separator and $\rho_i \subseteq \avec{l}(\avec{n}^i)$, it follows that $\avec{l}(\avec{n}^i) = \rho_i$, and so $\q = \q_\pi$. Let $\nu$ be an arbitrary separating path in $\mathfrak P \times \mathfrak N$. From the proof of Lemma~\ref{lem:sep-bounded-diamond-prod} $(\Leftarrow)$ we obtain that $\q_\nu$ is a separator. It follows that $\q_\pi \models \q_\nu$.

$(\Leftarrow)$ Let $\pi$ be a separating path such that $\q_\pi \models \q_\nu$ for every separating path $\nu$ in $\mathfrak P \otimes \mathfrak N$. Let $\q = \rho_0' \land \Diamond (\rho_1' \land ( \dots \Diamond \rho'_{\ell'}) \dots )$ be a separator (for $E$), we only need to show that $\q_\pi \models \q'$. Consider the separating path $\nu' = (\avec{n}^0, \avec{m}^0), \dots, (\avec{n}^{\ell'}, \avec{m}^{\ell'})$ constructed in the proof of Lemma~\ref{lem:sep-bounded-diamond-prod} $(\Rightarrow)$ for $\q'$. Clearly, $\rho_i' \subseteq \avec{l}(\avec{n}^i)$. Since we have $\q_\pi \models \q_{\nu'}$, we also obtain $\q_\pi \models \q'$.
\end{proof}

We now explain how checking the existence of a separating path $\pi$ such that $\q_\pi \models \q_\nu$ for every separating path $\nu$ in $\mathfrak P \otimes \mathfrak N$ can be checked in \PTime{}. We compute the required separator $\q_\pi$ by analysing the queries associated with the paths leading to each node $(\avec{n}, \avec{m})$ of $\mathfrak P \otimes \mathfrak N$ from its root $(\bar 0, \avec{m}_0)$. We mark inductively each $(\avec{n}, \avec{m})$ either by a query $\q_\pi$, for some path $\pi$, or by $\varepsilon$.
The root is marked by $\q_\pi$ for $\pi = (\bar 0, \avec{m}_0)$. Now, assume that all immediate predecessors of $(\avec{n}, \avec{m})$ have already been marked. Let $\q_{\pi_i}$, $i = 1,\dots,k$, be all of their marks different from $\varepsilon$. Consider the extended paths $\pi'_i = \pi_i, (\avec{n}, \avec{m})$ and the set $\Pi$ of all paths to $(\avec{n}, \avec{m})$. We check whether there is a containment witness for $\q_{\pi'_1}$ and every $\q_\pi$, $\pi \in \Pi$, in which case we mark $(\avec{n}, \avec{m})$ by $\q_{\pi'_1}$. Otherwise, we do the same for $\q_{\pi'_2}$, and so on. If $(\avec{n}, \avec{m})$ has no mark  after $k$ iterations, we mark it by $\varepsilon$. Although $\Pi$ can be of exponential size, we show that marking each node can be done in \PTime{}. Indeed, consider the following problem: given $\q_{\pi'_1}$ and $(\avec{n}, \avec{m})$, decide if there exists $\pi \in \Pi$ such that there is no containment witness for $\q_{\pi'_1}$ and $\q_\pi$. This problem is in \NL{} because such $\pi$, if exists, can be computed by a non-deterministic algorithm guessing $\pi$ of the form~\eqref{eq:prod-path} node-by-node. At each step $i \in \{0, \dots, n\}$ we only need to remember the minimal $h(i)$ (in the set $\{0, \dots, n'-1\}$ for the length $n'$ of $\pi_1'$) where $h$ is a containment witness for $\q_{\pi'_1}$ and $\q_{(\avec{n}_0, \avec{m}_0),\dots,(\avec{n}_i, \avec{m}_i)}$. If at some step it is impossible to extend $h(i)$ to $h(i+1)$ and $(\avec{n}, \avec{m})$ is reachable from $(\avec{n}_{i+1}, \avec{m}_{i+1})$, we have obtained $\pi \in \Pi$ such that there is no containment witness for $\q_{\pi'_1}$ and $\q_\pi$. Because $\NL{}= \textsc{co}\NL{}$ and the latter problem is in $\NL{}$, we conclude that checking whether there is a containment witness for $\q_{\pi'_1}$ and $\q_\pi$ for every $\pi \in \Pi$ is in $\NL$.

Then we consider all nodes $(\avec{n}_i, \bar \infty)$, $i = 1,\dots,l$, with $\avec{l}(\avec{n}_i) \ne \emptyset$ that are marked by some $\q_i \ne \varepsilon$. If there are no such, there is no unique most specific separator. Otherwise, take the set $\Pi$ of all paths leading to the $(\avec{n}_i, \bar \infty)$. We check whether there is a containment witness for $\q_1$ and every $\q_\pi$, $\pi \in \Pi$, in which case $\q_1$ is returned as a unique most specific separator. Otherwise, we do the same for $\q_{2}$, and so on.

\medskip
\noindent
\textbf{Theorem~\ref{thm:algorithms}.}
{\em Let $E = (E^+,E^-)$, $\sigma = \sig(E)$, $\tp$/$\tm$ is the maximum timestamp in $E^+$/$E^-$, $\cp = |E^+|$, $\cm = |E^-|$, and $\Q \in\{\Qpd, \Qpnd\}$.
The following can be done in time $O(\tpcp \tmcm)$\textup{:}}
\begin{itemize}\itemsep=0pt
\item[$(a)$] \emph{deciding whether $\sep(E,\Q)\ne \emptyset$\textup{;}}

\item[$(b)$] \emph{computing a longest/shortest separator in $\sep(E,\Q)$\textup{;}}

\item[$(c)$] \emph{deciding the existence of a unique most specific $\Q$-separator and a unique most general $\Qpd$-separator, and constructing such a separator.}
\end{itemize}
\emph{For bounded $\cp$ and $\cm$, problem $(a)$ is in $\NL$.}
\begin{proof}
We first prove $(c)$ for unique most general $\Qpd$-separators.
Computing unique most general separators requires a different type of graph. One might think that inverting $\q_\pi \models \q_\nu$ in Lemma~\ref{th:strongest-crit} would give a criterion for $\q_\pi$ being such a separator. To show that this is not so, take $E^+ = \{ \{A(1), B(1)\} \}$ and $E^- = \{ \{C(0)\} \}$. In this case, $\mathfrak P \otimes \mathfrak N$ looks as follows:\\
\centerline{
\begin{tikzpicture}[->,thick,node distance=1.8cm, transform shape,scale=0.9]\footnotesize
\node[rectangle, label=left:{$\emptyset$}] (s00) {\scriptsize$((0), (0))$};
\node[rectangle, right of = s00, label=right:{\scriptsize$\{A,B\}$}] (s11) {\scriptsize$((1),\vec{\infty})$} edge[above, <-] (s00);

\end{tikzpicture}
}\\[-2pt]
It contains only one separating path. However, there is no unique most general separator. The issue is that the labels of the  paths $\pi$ leading to $(\avec{n}, \vec{\infty})$ represent some of the separators, including most specific ones, but not all separators. In our example, both $\Diamond A$ and $\Diamond B$ are separators.

To this end, we consider a graph
$\mathfrak O$ using the nodes $-\bar{1}$ and $\avec{p}$, where $\avec{p} = (n_1, \dots, n_{c_+}, m_1, \dots, m_{c_-})$, $n_i \in [0, \max \D_i]$, and $m_i \in [0, \max \D_i] \cup \{\infty\}$. We define $\avec{l}(\avec{p}) = \{A \in \sigma \mid A(n_i) \in \D_i \text{ for all }1 \leq i \leq c_+ \text{ and }A(m_i) \in \D_i \text{ for all }1 \leq i \leq c_- \text{ such that }m_i \neq \infty\}$.
The edges of $\mathfrak O$ are labelled with $\rho \subseteq \sigma$
We treat both $\avec{l}(\avec{p})$ and $\rho$ as propositional (atemporal) queries and use the entailment relation for them, e.g., $\avec{l}(\avec{p}) \models \rho$.
We will slightly abuse the notation and assume $\avec{p} = (\avec{n}, \avec{m})$ if $\avec{p} = (n_1, \dots, n_{c_+}, m_1, \dots, m_{c_-})$, $\avec{n} = (n_1, \dots, n_{c_+})$ and $\avec{m} = (m_1, \dots, m_{c_-})$.
Define $\nabla_{\avec{p}, \avec{p}'}$ to be the set of $\avec{p}''$ such that $\avec{p}_i < \avec{p}''_i \leq \avec{p}_i'$ for all $i$ and $\avec{p}''_i < \avec{p}_i'$ for some $i$ such that $\avec{p}_i' \neq \infty$. To define the edges, we first let $-\bar{1} \to_\rho (\avec{n}, \avec{m})$ if $\avec{n}_i = 0$, $\avec{m}_i \in \{ 0, \infty\}$ for all $i$, $\rho \subseteq \avec{l}(\avec{p})$, and $\rho \not \subseteq \avec{l}(\avec{p}')$ for all $\avec{p}' \in \nabla_{\avec{0}, \avec{p}}$. Then, similarly, we let $\avec{p} \to_\rho \avec{p}'$ if $\rho \subseteq \avec{l}(\avec{p}')$, and $\rho \not \subseteq \avec{l}(\avec{p}'')$ for all $\avec{p}'' \in \nabla_{\avec{p}, \avec{p}'}$.
\begin{example}
For $E = (\emptyset, E^-)$ with $E^- = \{\D_1, \D_2 \}$, $\D_1 = \{ A(1), A(2), B(2), C(2) \}$, $\D_2 = \{ A(1), B(1), C(1) \}$, and $\sigma = \{A, B, C\}$, we have the following $\mathfrak O$:
\centerline{
\begin{tikzpicture}[->,thick,node distance=1.8cm, transform shape,scale=0.9]\footnotesize

\node[rectangle] (s-1) {\scriptsize$-\bar{1}$};

\node[rectangle, right of = s-1] (s00) {\scriptsize$(0,0)$} edge[<-] node[above] {\scriptsize$\emptyset$} (s-1);

\node[rectangle] (s11) at (4,.5) {\scriptsize$(1,1)$} edge[<-] node[above] {\scriptsize$\emptyset, \{A\}$} (s00);

\node[rectangle] (s21) at (4,-.5) {\scriptsize$(2,1)$} edge[<-] node[above] {\scriptsize$L_1$} (s00);

\node[rectangle, right of = s11] (s2i)  {\scriptsize$(2, \infty)$}
edge[<-] node[below] {\scriptsize$\ L_2$} (s11)
;

\node[rectangle, right of = s21] (sii)  {\scriptsize$(\infty,\infty)$}
edge[<-] node[above] {\scriptsize$L_2$} (s21)
edge[<-] node[left] {\scriptsize$L_2$} (s2i)
edge[<-, out = 190, in = -20] node[above] {\scriptsize$L_0$} (s-1)
edge[<-, loop right] node[left] {\scriptsize$L_2$} (sii);
\end{tikzpicture}
}\\
with $L_0 \in 2^\sigma \setminus \{ \emptyset \}$, $L_1 \in \{ \{B\}, \{C\}, \{B,C\}, \{A, B\},$ $\{A,C\}, \Sigma\}$, and $L_2 = 2^\sigma$ (we omit nodes unreachable from $-\bar{1}$ such as $(\infty,0)$).
\end{example}

Given a path
\begin{equation}\label{eq:prod-path2}
\pi = -\bar{1} \to_{\rho_0} \avec{p}_0 \dots \to_{\rho_n}\avec{p}_n
\end{equation}
we call it a \emph{separating path} for $E$ if $\avec{p}_n = (\avec{n}, \bar \infty)$ and $\rho_n \ne \emptyset$. We denote by $\q_\pi$ the formula $\rho_0 \land \Diamond (\rho_1 \land \Diamond (\rho_2 \land \dots \land \Diamond \rho_n))$ which is not necessarily a query (since we assume the queries are in the normal form).
Now, it directly follows from the construction that
\begin{itemize}
\item a query $\q$ is a separator iff there exists a separating path $\pi$ in $\mathfrak O$ such that $\q = \q_\pi$
\end{itemize}
and the following holds (cf. Lemma~\ref{th:strongest-crit}):

\begin{lemma}\label{th:ugeneral-crit}
A $\Qpd$-query $\q$ is a unique most general $\Qpd$-separator for $E$ iff there is a separating path $\pi$ such that $\q = \q_\pi$ and $\q_\nu \models \q_\pi$, for every separating path $\nu$ with $\rho$.
\end{lemma}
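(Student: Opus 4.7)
The plan is to leverage the observation stated immediately before the lemma, namely that the set of $\Qpd$-separators of $E$ coincides, modulo $\equiv$, with $\{\q_\pi \mid \pi \text{ a separating path in } \mathfrak{O}\}$. Once this correspondence is in hand, the equivalence in the lemma becomes an essentially syntactic rephrasing of the defining property of a unique most general separator, i.e.\ a separator entailed by every other separator. So I would first check that the pre-lemma observation indeed holds for $\mathfrak{O}$ as constructed (this is where the edge condition ``$\rho \not\subseteq \avec{l}(\avec{p}'')$ for $\avec{p}'' \in \nabla_{\avec{p},\avec{p}'}$'' is needed, analogous to the role of the extra edge constraint in $\mathfrak{P} \otimes \mathfrak{N}$).

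For the ($\Rightarrow$) direction, assume $\q$ is the unique most general $\Qpd$-separator of $E$. Since $\q$ itself separates $E$, the observation supplies a separating path $\pi$ with $\q \equiv \q_\pi$. For any separating path $\nu$, the query $\q_\nu$ is again a separator, and $\models$-maximality of $\q$ among separators yields $\q_\nu \models \q \equiv \q_\pi$, as required.

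For the ($\Leftarrow$) direction, suppose $\pi$ is a separating path with $\q_\nu \models \q_\pi$ for every separating path $\nu$. Then $\q_\pi$ is a separator by the observation, and every separator $\q'$ is of the form $\q_\nu$ modulo $\equiv$ for some separating path $\nu$, so $\q' \models \q_\pi$. Hence $\q_\pi$ is $\models$-maximal in $\sep(E,\Qpd)$. For uniqueness, any most general separator $\q''$ is also some $\q_\nu$, giving $\q'' \models \q_\pi$; combining this with $\models$-maximality of $\q''$ applied to the separator $\q_\pi$ gives $\q_\pi \models \q''$, so $\q'' \equiv \q_\pi$.

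The only delicate point I foresee is that $\q_\pi$ need not literally be in the normal form the paper assumes for $\Qpd$-queries, as explicitly flagged in the paragraph preceding the lemma; I would handle this by reading every equality ``$\q = \q_\pi$'' up to $\equiv$, consistent with the paper's global convention of identifying equivalent queries. Beyond this bookkeeping, the real work sits in the design of $\mathfrak{O}$ that makes the pre-lemma observation true; the lemma itself is then a short corollary, so I do not expect a serious obstacle here.
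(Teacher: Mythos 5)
Your proposal is correct and takes essentially the same route as the paper: the paper states the correspondence ``$\q$ is a separator iff $\q=\q_\pi$ for some separating path $\pi$ in $\mathfrak O$'' as following directly from the construction of $\mathfrak O$ (which is precisely why $\mathfrak O$ is introduced in place of $\mathfrak P\otimes\mathfrak N$, whose paths capture only some separators), and then treats the lemma as an immediate corollary, exactly as you do. Your handling of the two directions, of uniqueness, and of equality up to $\equiv$ matches the intended argument.
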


We could then use a marking algorithm analogous to that for finding a strongest separator. The problem is that, unless $\sigma$ is fixed, $\mathfrak O$ has exponentially many edges w.r.t.\ $|\sigma|$. To provide an algorithm with the complexity $O(|\sigma| t_+^{c_+} t_-^{c_-})$, we define $\mathfrak O'$ with the same vertices as $\mathfrak O$ but with two sorts of labelled edges $\hookrightarrow_\rho$ and $\leadsto_\rho$ from $-\bar{1}$ to $\avec{p}$ and from $\avec{p}$ to $\avec{p}'$.
We set $\rho_{-\bar 1, \avec{p}} = \bigcap \{\rho \mid -\bar{1} \to_{\rho} \avec{p}\}$ and $\rho_{\avec{p}, \avec{p}'} = \bigcap \{\rho \mid \avec{p} \to_{\rho} \avec{p}'\}$.  Clearly for any $\rho' \subseteq \sigma$,
\begin{align}
 \label{eq:intersection1}\rho_{-\bar 1, \avec{p}} \models \rho' &\text{ iff }\rho \models \rho' \text{ for each }-\bar{1} \to_{\rho} \avec{p}\\
 \label{eq:intersection2}\rho_{\avec{p}, \avec{p}'} \models \rho' &\text{ iff }\rho \models \rho' \text{ for each }\avec{p} \to_{\rho} \avec{p}'
\end{align}

We let $-\bar{1} \hookrightarrow_{\rho} \avec{p}$ (respectively, $-\bar{1} \leadsto_\rho \avec{p}$) if $-\bar{1} \to_{\rho_{-\bar 1, \avec{p}}} \avec{p}$ (respectively, if $-\bar{1} \to \avec{p}$ and $-\bar{1} \not \to_{\rho_{-\bar 1, \avec{p}}} \avec{p}$) and $\avec{p} \hookrightarrow_{\rho} \avec{p}'$ (respectively, $\avec{p} \leadsto_\rho \avec{p}'$) if $\avec{p} \to_{\rho_{\avec{p}, \avec{p}'}} \avec{p}$ (respectively, if $\avec{p} \to \avec{p}'$ and $\avec{p} \not \to_{\rho_{\avec{p}, \avec{p}'}} \avec{p}'$). To illustrate, let $E = (E^+, E^-)$ with $E^+ = \{\D_1\}$ and $E^- = \{ \mathcal E_1, \mathcal E_2\}$ be as below

\centerline{
\begin{tikzpicture}[nd/.style={draw,thick,circle,inner sep=0pt,minimum size=1.5mm,fill=white},xscale=0.6]
\draw[thick,gray,-] (0,0) -- (1,0);
%\node[nd,label=below:{\scriptsize $0$}, label=above:{\scriptsize $B\ \textcolor{gray}{B'}$}] at (0,0) {};

\node at (-1,0) {$\mathcal D_0$};

\slin{0,0}{}{\scriptsize$0$};
\slin{1,0}{\scriptsize$\quad \quad A,B,C, D$}{\scriptsize$1$};
\end{tikzpicture}
\begin{tikzpicture}[nd/.style={draw,thick,circle,inner sep=0pt,minimum size=1.5mm,fill=white},xscale=0.6]
\draw[thick,gray,-] (0,0) -- (2,0);

\node at (-1,0) {$\mathcal E_1$};
%\node[nd,label=below:{\scriptsize $0$}, label=above:{\scriptsize $B\ \textcolor{gray}{B'}$}] at (0,0) {};
\slin{0,0}{}{\scriptsize$0$};
\slin{1,0}{\scriptsize$A$}{\scriptsize$1$};
\slin{2,0}{\quad\scriptsize$A, B, C$}{\scriptsize$2$};
\end{tikzpicture}
\
\begin{tikzpicture}[nd/.style={draw,thick,circle,inner sep=0pt,minimum size=1.5mm,fill=white},xscale=0.6]
\draw[thick,gray,-] (0,0) -- (1,0);
%\node[nd,label=below:{\scriptsize $0$}, label=above:{\scriptsize $B\ \textcolor{gray}{B'}$}] at (0,0) {};

\node at (-1,0) {$\mathcal E_2$};

\slin{0,0}{}{\scriptsize$0$};
\slin{1,0}{\scriptsize$A,B,C$}{\scriptsize$1$};
\end{tikzpicture}
}
Then, we have $(0_0, 0_1, 0_2) \hookrightarrow_\emptyset (1_0, 1_1, 1_2)$ and $(0_0, 0_1, 0_2) \leadsto_{\emptyset} (1_0, 2_1, 1_2)$. We also observe that if $C$ was removed from $\mathcal E_1$ and $\mathcal E_2$, then it would be the case $(0_0, 0_1, 0_2) \hookrightarrow_{\{ B \}} (1_0, 2_1, 1_2)$.
A \emph{path} in $\mathfrak O'$ is a sequence the form~\eqref{eq:prod-path2} with $\Rightarrow_{\rho_i} \in \{ \hookrightarrow_{\rho_i}, \leadsto_{\rho_i}\}$ in place of $\to_{\rho_i}$. A $\mathfrak O'$-path $\pi$ is separating if $\avec{p}_n= (\avec{n}, \bar \infty)$ and $\avec{l}(\avec{p}_n) \ne \emptyset$.
% and $\Rightarrow_{\rho_n} = \hookrightarrow_{\emptyset}$ implies $|\avec{l}(\avec{p}_n)| =1$.
For a path in $\mathfrak O'$ we define the formula $\q_\pi$ (which is not necessarily a query) as above, except if $\Rightarrow_{\rho_n} = \hookrightarrow_{\emptyset}$ and $|\avec{l}(\avec{p}_n)| =1$, we set $\rho_n = \{A \mid A \in \avec{l}(\avec{p}_n)\}$.
We now show:

\begin{lemma}\label{th:most-gen-poly}
$\q$ is a unique most general $\Qpd$-separator for $E$ iff $\q = \q_{\pi'}$ for a $\hookrightarrow$-separating path $\pi'$ in $\mathfrak O'$ such that $\Rightarrow_{\rho_n} = \hookrightarrow_{\emptyset}$ implies $|\avec{l}(\avec{p}_n)| =1$ and $\q_{\nu'} \models \q_{\pi'}$ for each separating path $\nu'$ in $\mathfrak O'$.
\end{lemma}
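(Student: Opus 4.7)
The plan is to reduce Lemma~\ref{th:most-gen-poly} to Lemma~\ref{th:ugeneral-crit} by exhibiting a correspondence between separating paths in $\mathfrak O$ and separating paths in $\mathfrak O'$ that preserves the induced formula $\q_{\pi}$ and the entailments between such formulas. The key preparatory observation is a \emph{monotonicity in the edge labels}: along any fixed sequence of vertices $\avec p_0, \dots, \avec p_n$, replacing an edge label $\rho_i$ by any $\rho_i' \subseteq \rho_i$ that still yields an edge in $\mathfrak O$ produces a new path $\nu$ with $\q_{\pi} \models \q_{\nu}$. Consequently, among all separating $\mathfrak O$-paths through a given vertex sequence, the $\models$-weakest formula is obtained by taking the intersection label at each edge whenever that intersection itself produces an edge in $\mathfrak O$; by \eqref{eq:intersection1}--\eqref{eq:intersection2} this is precisely the $\hookrightarrow$ edge. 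Edges $\leadsto_\rho$ of $\mathfrak O'$ correspond by definition to actual $\to_\rho$ edges of $\mathfrak O$, so every $\mathfrak O'$-path lifts to an $\mathfrak O$-path with the same $\q_\pi$ (modulo the extraction of a single atom at the final edge when $\hookrightarrow_\emptyset$ is used).

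For direction $(\Rightarrow)$, assume $\q$ is a unique most general $\Qpd$-separator. Lemma~\ref{th:ugeneral-crit} gives a separating $\mathfrak O$-path $\pi$ with $\q = \q_\pi$ and $\q_\nu \models \q_\pi$ for every separating $\mathfrak O$-path $\nu$. I claim that each non-final edge of $\pi$ must use the intersection label $\rho^*_i$: if the label $\rho_i$ at some edge strictly exceeds $\rho^*_i$, then the path $\nu$ obtained from $\pi$ by swapping $\rho_i$ for $\rho^*_i$ is again a separating $\mathfrak O$-path (the intersection gives an edge because $\pi$ uses a label witnessing it via the other labels summed together), and by monotonicity $\q_\pi \models \q_\nu$ strictly, contradicting uniqueness. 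For the final edge, either $\rho^*_n \neq \emptyset$ and the same argument forces $\rho_n = \rho^*_n$, yielding a $\hookrightarrow_{\rho^*_n}$ edge; or $\rho^*_n = \emptyset$, in which case $\q_\pi$ requires a nonempty last conjunct, so $\rho_n$ must be a singleton from $\avec l(\avec p_n)$, and uniqueness of $\q$ forces $|\avec l(\avec p_n)| = 1$ (two choices would give incomparable separators). Either way $\pi$ corresponds to a $\hookrightarrow$-separating path $\pi'$ in $\mathfrak O'$ with $\q_{\pi'} = \q$. The condition $\q_{\nu'} \models \q_{\pi'}$ for every separating $\mathfrak O'$-path $\nu'$ then transfers via the lifting to $\mathfrak O$.

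For direction $(\Leftarrow)$, given a $\hookrightarrow$-separating path $\pi'$ in $\mathfrak O'$ satisfying the stated side condition, lifting yields a separating $\mathfrak O$-path $\pi$ with $\q_\pi = \q_{\pi'}$, so $\q_{\pi'}$ is a separator. To show it is the unique most general separator, take any separator $\q'$ of $E$; by the characterisation preceding Lemma~\ref{th:ugeneral-crit}, $\q' = \q_\nu$ for some separating $\mathfrak O$-path $\nu$. Projecting $\nu$ to an $\mathfrak O'$-path $\nu'$ by labelling each edge with $\hookrightarrow$ if it carries the intersection label and with $\leadsto$ otherwise gives a separating $\mathfrak O'$-path with $\q_{\nu'} = \q_\nu$, whence $\q' = \q_{\nu'} \models \q_{\pi'}$ by assumption. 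Applying Lemma~\ref{th:ugeneral-crit} finishes the proof.

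The main obstacle is the boundary case at the final edge: when the intersection $\rho^*_n$ is empty, $\q_\pi$ would end in $\Diamond \top$, which is not a query in normal form, so the proof must argue that the uniqueness of the most general separator compels $|\avec l(\avec p_n)| = 1$ and that this single atom is the canonical choice encoded in the special case of $\q_{\pi'}$. The rest of the argument is bookkeeping on the label-monotonicity lemma, which itself follows directly from the shape $\q_\pi = \rho_0 \land \Diamond(\rho_1 \land \cdots)$ and $\rho \models \rho'$ iff $\rho \supseteq \rho'$.
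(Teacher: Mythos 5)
The $(\Leftarrow)$ direction of your argument is essentially sound, though you misstate one step: projecting an $\mathfrak O$-path $\nu$ to an $\mathfrak O'$-path $\nu'$ does \emph{not} give $\q_{\nu'}=\q_\nu$, because every $\mathfrak O'$-edge (whether $\hookrightarrow$ or $\leadsto$) carries the intersection label $\rho_{\avec{p},\avec{p}'}$, not the original label of $\nu$. What you actually get is $\q_\nu\models\q_{\nu'}$, which happens to point in the right direction, so the chain $\q'=\q_\nu\models\q_{\nu'}\models\q_{\pi'}$ repairs it. This matches the paper's $(\Leftarrow)$ argument.

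The genuine gap is in the $(\Rightarrow)$ direction, at the sentence ``the condition $\q_{\nu'}\models\q_{\pi'}$ for every separating $\mathfrak O'$-path $\nu'$ then transfers via the lifting to $\mathfrak O$.'' It does not transfer: for a separating $\mathfrak O'$-path $\nu'$, the formula $\q_{\nu'}$ is built from intersection labels, and when an edge is of type $\leadsto$ the intersection is \emph{not} a legal edge label in $\mathfrak O$, so $\q_{\nu'}$ need not equal $\q_\nu$ for any $\mathfrak O$-path $\nu$ and need not itself be a separator. Every lift $\nu$ of $\nu'$ to $\mathfrak O$ has labels that are supersets of the intersection labels, hence $\q_\nu\models\q_{\nu'}$; so knowing $\q_\nu\models\q_{\pi'}$ for each lift (which is what Lemma~\ref{th:ugeneral-crit} supplies) tells you nothing directly about the \emph{weaker} formula $\q_{\nu'}$. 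Each individual lift $\nu$ has its own containment witness into $\q_{\pi'}$, and a priori different lifts use different witness positions, in which case the intersection label at any fixed position could fail to entail the corresponding $\rho_j$ of $\pi'$. The paper closes exactly this hole with a dedicated inner lemma: there is a \emph{single} sequence of positions $0=i_0<i_1<\dots<i_n\leq l$ that serves as a containment witness uniformly for \emph{all} $\mathfrak O$-paths $\nu$ through the vertex sequence of $\nu'$, proved by a diagonalization that assembles a counterexample path from bad labels if no uniform position exists; only then do \eqref{eq:intersection1}--\eqref{eq:intersection2} yield $\q_{\nu'}\models\q_{\pi'}$. This uniformity argument is the technical core of the proof and is missing from your proposal. (A smaller soft spot: your claim that swapping a label for the intersection ``gives an edge because $\pi$ uses a label witnessing it'' is not justified as stated --- a smaller label can violate the $\nabla$-minimality condition defining edges of $\mathfrak O$; the paper instead derives $\rho_i=\rho_{\avec{p}_{i-1},\avec{p}_i}$ from $\rho\models\rho_i$ for \emph{every} legal label $\rho$, which forces the intersection to coincide with the label already present and hence already be an edge.)
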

\begin{proof}
$(\Rightarrow)$ By Lemma~\ref{th:ugeneral-crit}, there exists a separating path $\pi$ of the form~\eqref{eq:prod-path2} in $\mathfrak O$ such that $\q = \q_\pi$ and $\q_\nu \models \q_\pi$ for each separating path $\q_\nu$ in $\mathfrak O$. Let the sequence of vertices of $\pi$ be $-\bar{1}, \avec{p}_0, \dots, \avec{p}_n$. We claim that
$$\pi' = -\bar{1} \hookrightarrow_{\rho_{-\bar{1}, \avec{p}_0}} \avec{p}_0 \hookrightarrow_{\rho_{\avec{p}_0, \avec{p}_1}} \avec{p}_1 \dots \hookrightarrow_{\rho_{\avec{p}_{n-1}, \avec{p}_{n}}} \avec{p}_{n}$$
is a separating path in $\mathfrak O'$. Indeed, since $-\bar{1} \to_{\rho_0} \avec{p}_0$  in $\mathfrak O$, it follows that $-\bar{1} \Rightarrow_{\rho_{-\bar{1}, \avec{p}_0}} \avec{p}_0$ for $\Rightarrow \in \{\hookrightarrow, \leadsto\}$. To see that $-\bar{1} \hookrightarrow_{\rho_{-\bar{1}, \avec{p}_0}} \avec{p}_0$, consider all the paths $\nu = -\bar{1} \to_{\rho} \avec{p}_0 \to_{\rho_1} \avec{p}_1 \dots \to_{\rho_n}\avec{p}_n$, where $\rho_i$ are as in $\pi$. Because they are all separating and $\q_\nu \models \q_\pi$, it follows that $\rho \models \rho_0$ for each $-\bar{1} \to_{\rho} \avec{p}_0$. It follows by~\eqref{eq:intersection1} that $\rho_{-\bar{1}, \avec{p}_0} \models \rho_0$. Obviously, $\rho_0 \models \rho_{-\bar{1}, \avec{p}_0}$, so $\rho_0 =\rho_{-\bar{1}, \avec{p}_0}$ and $-\bar{1} \hookrightarrow_{\rho_{-\bar{1}, \avec{p}_0}} \avec{p}_0$. Similarly, using the fact that $\avec{p}_{i-1} \to_{\rho_i} \avec{p}_i$ is in $\mathfrak O$ and~\eqref{eq:intersection2}, we obtain $\rho_i =\rho_{\avec{p}_{i-1}, \avec{p}_i}$ and $\avec{p}_{i-1} \hookrightarrow_{\rho_{\avec{p}_{i-1}, \avec{p}_i}} \avec{p}_i$ for $i < n$. To show the same for $i = n$, consider first the case $\avec{p}_{i-1} \not \to_{\emptyset} \avec{p}_i$. Then all the paths $\nu = -\bar{1} \to_{\rho_0} \avec{p}_0 \to_{\rho_1} \avec{p}_1 \dots \to_{\rho_{n-1}} \avec{p}_{n-1} \to_{\rho}\avec{p}_n$, where $\rho_i$ are as in $\pi$, are separating, so $\q_\nu \models \q_\pi$ and $\rho \models \rho_n$. Using the argument above we obtain $\rho_n =\rho_{\avec{p}_{n-1}, \avec{p}_n}$ and $\avec{p}_{n-1} \hookrightarrow_{\rho_{\avec{p}_{n-1}, \avec{p}_n}} \avec{p}_n$. Alternatively, if $\avec{p}_{i-1} \to_{\emptyset} \avec{p}_i$, we readily obtain $\avec{p}_{n-1} \hookrightarrow_{\emptyset} \avec{p}_n$ from the definition of $\hookrightarrow$.
It remains show that $\rho_{\avec{p}_{n-1}, \avec{p}_{n}} = \emptyset$ implies $|\avec{l}(\avec{p}_n)| =1$. Suppose $\rho_{\avec{p}_{n-1}, \avec{p}_{n}} = \emptyset$, then $\avec{p}_{n-1} \to_\emptyset \avec{p}_{n}$. It follows by the construction of $\mathfrak O$ that $\nabla_{\avec{p}_{n-1}, \avec{p}_n} = \emptyset$. Suppose, for the sake of contradiction, that $|\avec{l}(\avec{p}_n)| \geq 2$ and distinct $A, B \in \avec{l}(\avec{p}_n)$. It follows that $\avec{p}_{n-1} \to_{\{A\}} \avec{p}_{n}$ and $\avec{p}_{n-1} \to_{\{B\}} \avec{p}_{n}$. We observe that both $\pi_1 = -\bar{1} \to_{\rho_0} \avec{p}_0 \dots \to_{\rho_{n-1}}\avec{p}_{n-1} \to_{\{A\}} \avec{p}_{n}$ and $\pi_2 = -\bar{1} \to_{\rho_0} \avec{p}_0 \dots \to_{\rho_{n-1}}\avec{p}_{n-1} \to_{\{B\}} \avec{p}_{n}$ are separating path in $\mathfrak O$. But because $\rho_n \neq \emptyset$ in $\pi$, it follows that either $\q_{\pi_1} \not \models \q_\pi$ or $\q_{\pi_2} \not \models \q_\pi$, which is a contradiction.

Now we will show that any separating path
$$\nu' = -\bar{1} \Rightarrow_{\rho_{-\bar{1}, \avec{p}_0'}} \avec{p}_0' \Rightarrow_{\rho_{\avec{p}_0', \avec{p}_1'}} \avec{p}_1' \dots \Rightarrow_{\rho_{\avec{p}_{l-1}', \avec{p}_{l}'}} \avec{p}_{l}'$$
in $\mathfrak O'$ is such that $\q_{\nu'} \models \q_{\pi'}$. Recall that the path $\pi$ with $\rho_0 = \rho_{-\bar{1}, \avec{p}_0}$, $\rho_i = \rho_{\avec{p}_{i-1}, \avec{p}_i}$ for $1 \leq i \leq n-1$, and $\rho_n = \rho_{\avec{p}_{n-1}, \avec{p}_n}$ if $\rho_{\avec{p}_{n-1}, \avec{p}_n} \neq \emptyset$ and $\rho_n = \avec{l}(\avec{p}_n)$ otherwise, is such that $\q_\nu \models \q_\pi$ for each separating path
$$\nu = -\bar{1} \to_{\rho_0^\nu} \avec{p}_0' \dots \to_{\rho_{l-1}^\nu} \avec{p}_{l-1}' \to_{\rho_{l}^\nu} \avec{p}_{l}'$$
in $\mathfrak O$. We need to following result:
\begin{lemma} There exists a sequence $0 = i_0 < i_1 < \dots < i_n \leq l$ such that $\rho_{i_0}^\nu \models \rho_0$, \dots, $\rho_{i_n}^\nu \models \rho_n$ for all $\nu$ in $\mathfrak O$ as above.
\end{lemma}
\begin{proof}
For $i_0 = 0$, the result is straightforward given that $\q_\nu \models \q_\pi$ for every $\nu$.
It will be convenient, for a path $\mu$ of the form $\avec{s}_0 \to_{\rho'_1} \dots \to_{\rho'_k} \avec{s}_k$ in $\mathfrak O$ (with $\avec{s}_0 \neq -\bar 1$), to define the formula $\q_\mu = \Diamond (\rho_1' \land \Diamond (\rho_2' \land \dots \land \Diamond \rho_k'))$ and for a path $\pi$ of the form~\eqref{eq:prod-path2} and $0 < i \leq n$, to define $\pi^i$ as $\avec{p}_{i-1} \to_{\rho_{i}} \dots \to_{\rho_n} \avec{p}_n$. We first establish that:
\begin{description}
  \item[$(i_1)$] there exists $i_1$ such that $\rho_{i_1}^\nu \models \rho_1$ for every $\nu$ and $\q_{\nu^{i_1}} \models \q_{\pi^1}$ for every $\nu$
\end{description}
Indeed, suppose for the sake of contradiction that for all $0 < i \leq l$ either $\rho_i^\nu \not \models \rho_1$ for some $\nu$ or $\q_{\nu^i} \not \models \q_{\pi^1}$ for some $\nu$. Take $i = 1$; if $\q_{\nu^1} \not \models \q_{\pi^1}$ for some $\nu$ then we immediately obtain a contradiction to $\q_{\nu} \models \q_{\pi}$ for all $\nu$. It means that $\rho_1^\nu \not \models \rho_1$ for some $\nu$. Denote $\rho_1^\nu$ by  $\rho_1^*$. Take $i = 2$; if $\q_{\nu^2} \not \models \q_{\pi^1}$ for some $\nu$ then we have a path $\avec{p}_0' \to_{\rho_1^*} \nu^2$ in $\mathfrak O$ such that $\q_{\avec{p}_0' \to_{\rho_1^*} \nu^2} \not \models \q_{\pi^1}$. This is a contradiction. It implies that $\rho_2^\nu \not \models \rho_1$ for some $\nu$. Denote $\rho_2^\nu$ by $\rho_2^*$. We proceed in this manner and suppose we have not obtained a contradiction until $i = l-1$. This implies that we have $\rho_1^*, \dots, \rho_{l-1}^*$ such that $\rho_1^* \not \models \rho_1, \dots, \rho_{l-1}^* \not \models \rho_1$. Take $i = l$; if $\q_{\nu^l} \not \models \q_{\pi^1}$ for some $\nu$ then we have a path $\avec{p}_0' \to_{\rho_1^*} \dots \to_{\rho_{l-1}^*} \nu^l$ in $\mathfrak O$ such that $\q_{\avec{p}_0' \to_{\rho_1^*} \dots \to_{\rho_{l-1}^*} \nu^l} \not \models \q_{\pi^1}$. This is a contradiction. It implies that $\rho_l^\nu \not \models \rho_1$ for some $\nu$ and we denote $\rho_l^\nu$ by $\rho_l^*$. Now we have a path $\avec{p}_0' \to_{\rho_1^*} \dots \to_{\rho_l^*} \avec{p}_l'$ in $\mathfrak O$ such that $\q_{\avec{p}_0' \to_{\rho_1^*} \dots \to_{\rho_l^*}  \avec{p}_l'} \not \models \q_{\pi^1}$, which is a contradiction. We established $(i_1)$. Now we can show:
\begin{description}
  \item[$(i_2)$] there exists $i_2 > i_1$ such that $\rho_{i_2}^\nu \models \rho_2$ for every $\nu$ and $\q_{\nu^{i_2}} \models \q_{\pi^2}$ for every $\nu$
\end{description}
Indeed, suppose for the sake of contradiction that for all $i_1 < i \leq l$ either $\rho_i^\nu \not \models \rho_2$ for some $\nu$ or $\q_{\nu^i} \not \models \q_{\pi^2}$ for some $\nu$. Take $i = i_1+1$; if $\q_{\nu^{i}} \not \models \q_{\pi^2}$ for some $\nu$ then we immediately obtain a contradiction to $\q_{\nu^{i_1}} \models \q_{\pi^1}$ for all $\nu$ established by $(i_1)$. It means that  $\rho_{i_1+1}^\nu \not \models \rho_2$ for some $\nu$. Denote $\rho_{i_1+1}^\nu$ by $\rho_{i_1+1}^*$. Similarly to the proof above, we continue exploring possible values of $i \leq l$. We will obtain a contradiction at latest for $i = l$ and the case $\rho_i^\nu \not \models \rho_2$ for some $\nu$. In that case, we will have $\q_{\avec{p}_{i_1}' \to_{\rho_{i_1+1}^*} \dots \to_{\rho_l^*}  \avec{p}_l'} \not \models \q_{\pi^2}$ which contradicts to $\q_{\nu^{i_1}} \models \q_{\pi^1}$ for all $\nu$.

Relying on $(i_2)$, we can prove $(i_3)$ with the same argument as above. After we established $(i_n)$, the proof of the lemma is complete.
\end{proof}

We now use the sequence $0 = i_0 < i_1 < \dots < i_n \leq l$ from the lemma above. It follows by~\eqref{eq:intersection1} and~\eqref{eq:intersection2} that $\rho_{-\bar{1}, \avec{p}_{i_0}} \models \rho_{0}$, $\rho_{\avec{p}_{i_1-1}', \avec{p}_{i_1}'} \models \rho_1$, \dots, $\rho_{\avec{p}_{i_{n-1}-1}', \avec{p}_{i_{n-1}}'} \models \rho_{n-1}$. If $i_n < l$ or $\rho_{\avec{p}_{l-1}', \avec{p}_{l}'} \neq \emptyset$, then it also follows $\rho_{\avec{p}_{i_{n}-1}', \avec{p}_{i_{n}}'} \models \rho_{n}$. Suppose $i_n = l$ and $\rho_{\avec{p}_{l-1}', \avec{p}_{l}'} = \emptyset$. It follows that, $\avec{l}(\avec{p}_{l}') = \{A\}$, for some $A \in \sigma$ and $A \models \rho_n$ (otherwise, $\avec{p}_{l-1}'\to_{\{A\}} \avec{p}_{l}'$ and $\avec{p}_{l-1}'\to_{\{B\}} \avec{p}_{l}'$ for distinct $A, B \in \sigma$ and then $i_n < l$). In either case, we can conclude that $\q_{\nu'} \models \q_{\pi}'$ and the proof of $(\Rightarrow)$ is done.

$(\Leftarrow)$ Let $\pi'$ above be such that $\q_{\nu'} \models \q_{\pi'}$ for each separating path $\nu'$ in $\mathfrak O'$. Consider the path $\pi$ in $\mathfrak O$ of the form~\eqref{eq:prod-path2} with $\rho_0 =\rho_{-\bar{1}, \avec{p}_0}$ and $\rho_i =\rho_{\avec{p}_{i-1}, \avec{p}_i}$ for $1 \leq i \leq n-1$, and $\rho_n = \rho_{\avec{p}_{n-1}, \avec{p}_n}$ if $\rho_{\avec{p}_{n-1}, \avec{p}_n} \neq \emptyset$ and $\rho_n = \avec{l}(\avec{p}_n)$ otherwise. Clearly, by the definition of $\hookrightarrow$, this path is separating. Consider a separating path $\nu$ in $\mathfrak O$ in the form above; we will show that $\q_\nu \models \q_\pi$. Consider the path $\nu'$ in $\mathfrak O'$ as above; clearly, it is separating (in $\mathfrak O'$) and so it follows that $\q_{\nu'} \models \q_{\pi'}$. Thus, there exists a sequence $0 = i_0 < i_1 < \dots < i_n \leq l$ such that $\rho_{\avec{p}_{i_0-1}', \avec{p}_{i_0}'} \models \rho_0$, \dots, $\rho_{\avec{p}_{i_{n-1}-1}', \avec{p}_{i_{n-1}}'} \models \rho_{n-1}$, and either $\rho_{\avec{p}_{i_n-1}', \avec{p}_{i_n}'} \models \rho_n$ or $i_n = l$, $\rho_{\avec{p}_{l-1}', \avec{p}_{l}'} = \emptyset$, $\avec{l}(\avec{p}_{l}') = \{A\}$, and $A \models \rho_n$. In either case, from~\eqref{eq:intersection1} and~\eqref{eq:intersection2} we obtain that $\rho_{i_0}^\nu \models \rho_0$, \dots, $\rho_{i_n}^\nu \models \rho_n$, which completes the proof.
\end{proof}

Now we explain how, for a given $\avec{p}, \avec{p}'$ in $\mathfrak O$ (or $\mathfrak O$), to compute $\rho_{-\bar 1, \avec{p}}$ and $\rho_{\avec{p}, \avec{p}'}$ in time polynomial in $|\sigma| |\mathfrak O|$. Observe that $\avec{p} \to_\rho \avec{p}'$ iff $\rho$ is a (propositional) separator for $(\{ \avec{l}(\avec{p}')\}, \{ \avec{l}(\avec{p}'') \mid \avec{p}'' \in \nabla_{\avec{p}, \avec{p}'}\})$ (and analogously for $-\bar{1} \to_\rho \avec{p}$). Then, it can be readily checked that:
\begin{equation*}
A \in \rho_{\avec{p}, \avec{p}'} \text{ iff there is }\avec{p}'' \in \nabla_{\avec{p}, \avec{p}'} \text{ s.t. }
\{A\} = \sepmg(( \{\avec{l}(\avec{p}')\}, \{\avec{l}(\avec{p}'')\}),\mathcal{Q})
\end{equation*}
for each $A \in \sigma$, where $\mathcal{Q}$ is the class of the propositional queries. Thus, $\rho_{\avec{p}, \avec{p}'}$ can be computed in \PTime{}. Moreover, we have $\avec{p} \hookrightarrow_{\rho_{\avec{p}, \avec{p}'}} \avec{p}'$ if $\rho_{\avec{p}, \avec{p}'}$ separates $(\{ \avec{l}(\avec{p}')\}, \{ \avec{l}(\avec{p}'') \mid \avec{p}'' \in \nabla_{\avec{p}, \avec{p}'}\})$ and otherwise we have $\avec{p} \leadsto_{\rho_{\avec{p}, \avec{p}'}} \avec{p}'$.

Checking the condition of Lemma~\ref{th:most-gen-poly} on $\mathfrak O'$ is done similarly to the algorithm used in Lemma~\ref{th:strongest-crit}, marking each $\avec{p}$ in $\mathfrak O'$ by either a query $\q_\pi$ for some $\hookrightarrow$-path $\pi$ (from $-\bar{1}$) or with $\varepsilon$. Intuitively, $\q_\pi$ marks $\avec{p}$ if $\q_\nu \models \q_\pi$. The details, which are very similar to the proof of in Lemma~\ref{th:strongest-crit}, are left to the reader.

\medskip

We next prove Theorem~\ref{thm:algorithms} $(a)$ for $\Qpnd$-separators.
Let $E^+ = \{\D_1, \dots, \D_{c_+} \}$. As in the previous part of the proof, we denote by $\avec{n}$ any vector $(n_1, \dots, n_{c^+})$ with $0 \leq n_j \leq \max \D_j$, $1 \leq j \leq c_+$. The vertices of $\mathfrak P$ are tuples $\avec{n}^{t}$, where $0 \leq t \leq \min \{ \max \D_j - n_j \mid 1 \leq j \leq c_+\}$. Define $\avec{n}+t = (n_1+t, \dots, n_c+t)$ and $\avec{n} < \avec{n}'$ if $n_j < n_j'$ for all $1 \leq j \leq c$. $\mathfrak P$ has edges $\avec{n}^t \to \avec{n}_1^s$ for all vertices $\avec{n}^t, \avec{n}_1^s$ in $\mathfrak P$ such that $\avec{n}+t < \avec{n}_1$. %Let $\tp(\avec{n}) = \bigcap_{0 \leq j \leq k} t_{\D_j}(n_j)$.
We set the label $\avec{l}(\avec{n}^t)$ of each $\avec{n}^t$ in $\mathfrak P$ to be a $t+1$-component vector $(\avec{l}(\avec{n}), \avec{l}(\avec{n}+1), \dots, \avec{l}(\avec{n}+t))$ (recall that $\avec{l}(\avec{n}) = \{A \in \sigma \mid A(n_j) \in \D_j \text{ for all $j$}\}$).
Similarly, given $E^- = \{\mathcal D_1, \dots, \mathcal D_{c_-} \}$, we denote by $\avec{m}$ any vector $(m_1, \dots, m_{c_-})$ with $m_j \in [0, \max \mathcal D_j] \cup \{ \infty \}$, $1 \leq j \leq c_-$. Let $m_j+t = \infty$ if $m_j + t > \max \mathcal D_j$ and $\infty + t = \infty$ for $t \geq 0$. Recall that we denote $(\infty, \dots, \infty)$ by $\bar{\infty}$. The vertices of $\mathfrak N$ are tuples of the form $\avec{m}^{t}$, where $0 \leq t \leq \min \{ \max \D \mid \D \in E^+\}$.
$\mathfrak N$ has edges $\avec{m}^t \to \avec{m}'^s$ for all vertices such that $\avec{m} + t < \avec{m}'$.
%$\avec{m}^t$ with $m_j + t < \max \mathcal E_j$ for some $1 \leq j \leq l$ and $\avec{m}'^s$ in $\mathfrak N$ such that either $m_j+t < m_j'$ or $m_j + t = \infty$ and $m_j' = \infty$ for all $1 \leq j \leq l$.
We define the label $\avec{l}(\avec{m}^t)$ of each $\avec{m}^t$ in $\mathfrak N$ to be a $t+1$-component vector $(\avec{l}(\avec{m}), \avec{l}(\avec{m}+1), \dots, \avec{l}(\avec{m}+t))$; recall that $\avec{l}(\avec{m}) = \{A \in \sigma \mid A(m_j) \in \D_j \text{ for all $j$ such that } m_j \neq \infty\}$.

We take a product-like structure $\mathfrak P \otimes \mathfrak N$ containing the vertices $(\bar{0}^t, \avec{m}^t)$ such that $m_j = 0$ if $\{A(0) \mid A \in \avec{l}(\bar 0)\} \cup \dots \cup \{A(t) \mid A \in \avec{l}(\bar t)\} \subseteq \D_j$ and $m_j = \infty$ otherwise, for all $1 \leq j \leq c_-$. Also, it contains vertices $(\avec{n}^t, \avec{m}^t)$ with $\avec{n}^t \neq \bar{0}^t$ from $\mathfrak P$ and with $\avec{m}^t$ from $\mathfrak N$ such that $\avec{l}(\avec{n}^t) \subseteq \avec{l}(\avec{m}^t)$.
The edges of $\mathfrak P \times \mathfrak N$ are $(\avec{n}^t, \avec{m}^t) \to (\avec{n}'^s, \avec{m}'^s)$ such that $\avec{n}^t \to \avec{n}'^s$, $\avec{m}^t \to \avec{m}'^s$ and there exists no $\mathfrak P \otimes \mathfrak N$-vertex $(\avec{n}'^s, \avec{m}''^s)$ satisfying $\avec{m}^t \to \avec{m}''^s$, $\avec{m}_i'' < \avec{m}_i'$ and $\avec{m}_i'' \neq \infty$ for some $1 \leq i \leq l$. The latter \emph{minimality} condition on $\avec{m}'$ ensures that for given $(\avec{n}^t, \avec{m}^t)$ and $\avec{n}'^s \leftarrow \avec{n}^t$ there is a unique $\avec{m}'^s$ such that $(\avec{n}^t, \avec{m}^t) \to (\avec{n}'^s, \avec{m}'^s)$.

Given a path in $\mathfrak P \otimes \mathfrak N$
\begin{equation}\label{eq:prod-path-nxt}
\pi = (\avec{n}_0^{t_0}, \avec{m}_0^{t_0}), \dots, (\avec{n}_\ell^{t_\ell}, \avec{m}_\ell^{t_\ell})
\end{equation}
with $\avec{n}_0 = \bar{0}$, let $\q_\pi$ be the $\Qpnd$-query of the form~\eqref{dnpath} with $n = t_0 + \dots + t_\ell + \ell$ such that:
\begin{itemize}
  \item $(\rho_0, \dots, \rho_{t_0}) = \avec{l}(\avec{n}_0^{t_0})$ and $\op_i = \nxt$ for $i \in (0, t_0]$
  \item $(\rho_{t_0+1}, \dots, \rho_{t_0+t_1+1}) = \avec{l}(\avec{n}_1^{t_1})$, $\op_{t_0+1} = \Diamond$ and $\op_i = \nxt$ for $i \in (t_{0}+1, t_{0}+t_1+1]$
  \item $\dots$
  \item $(\rho_{t_0+ \dots +t_{\ell-1}+\ell}, \dots, \rho_{t_0+ \dots +t_{\ell}+\ell}) = \avec{l}(\avec{n}_\ell^{t_\ell})$, $\op_{t_0+ \dots +t_{\ell-1}+\ell} = \Diamond$ and $\op_i = \nxt$ for $i \in (t_0+ \dots +t_{\ell-1}+\ell, t_0+ \dots +t_{\ell}+\ell]$.
\end{itemize}
(Note that $\q_\pi$ is not necessarily in normal form.) If $\avec{l}(\avec{n}_\ell+t_\ell) \neq \emptyset$ and $\avec{m}_\ell+t_\ell = \bar{\infty}$, we call $\pi$ a \emph{separating} path.

\begin{lemma}\label{th:reach-with-nxt}
$E$ is $\Qpnd$-separable iff there exists a separating path in $\mathfrak P \otimes \mathfrak N$.
\end{lemma}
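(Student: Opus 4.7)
The plan is to adapt the proof strategy of Lemma~\ref{lem:sep-bounded-diamond-prod} for $\Qpd$, taking care of the $\nxt$-operators by grouping consecutive $\nxt$-connected positions of a query (or a normalised version thereof) into the ``blocks'' represented by a single vertex $\avec{n}^t$. Recall that by the normal form of~\eqref{dnpath}, a query $\q \in \Qpnd$ can be uniquely written as $\q = \sigma_0 \land \Diamond\sigma_1 \land \dots \land \Diamond\sigma_\ell$, where each $\sigma_i$ is a pure $\nxt$-chain of length $t_i$, i.e.~of the shape $\rho \land \nxt(\rho' \land \nxt(\ldots))$ with $t_i+1$ conjuncts; the intention is that the $i$-th block $\sigma_i$ corresponds to a vertex $(\avec{n}_i^{t_i}, \avec{m}_i^{t_i})$ of $\mathfrak P \otimes \mathfrak N$.

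For ($\Leftarrow$), given a separating path $\pi$ of the form~\eqref{eq:prod-path-nxt}, I first verify that $\q_\pi$ is satisfied in every $\D_j \in E^+$ at $0$: the map $i \mapsto (\avec{n}_i)_j + s$, for $s \in [0,t_i]$, is a satisfying function, because $A \in \avec{l}(\avec{n}_i + s)$ implies $A((\avec{n}_i)_j + s) \in \D_j$ by the definition of labels in $\mathfrak P$, and the gaps between consecutive blocks are covered by $\Diamond$ while each block is interpreted by $\nxt$-successors (using $(\avec{n}_i)_j + t_i < (\avec{n}_{i+1})_j$). For negative examples, I would establish by induction on $i$ that $\avec{m}_i^{t_i}$ records, for each $\mathcal E \in E^-$, the earliest time $\geq 0$ at which the prefix of $\q_\pi$ up to block $i$ is satisfied, with $\infty$ meaning ``cannot be satisfied''; the minimality condition on the edges of $\mathfrak P \otimes \mathfrak N$ is precisely what makes this invariant propagate along the path. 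Since $\avec{m}_\ell + t_\ell = \bar\infty$ and $\avec{l}(\avec{n}_\ell + t_\ell) \ne \emptyset$, no $\mathcal E \in E^-$ can satisfy $\q_\pi$.

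For ($\Rightarrow$), given a separator $\q = \sigma_0 \land \Diamond\sigma_1 \land \dots \land \Diamond\sigma_\ell$ of $E$ in block form, I choose for each $\D_j \in E^+$ a satisfying function $f_j$ for $\q$ into $\D_j$; because $\nxt$ forces contiguous time points, the restriction of $f_j$ to the $i$-th block is of the form $n_{i,j}, n_{i,j}+1, \dots, n_{i,j}+t_i$ for some $n_{i,j}$. Set $\avec{n}_i = (n_{i,1}, \dots, n_{i,c_+})$; by construction, the atoms occurring in $\sigma_i$ at relative position $s$ are contained in $\avec{l}(\avec{n}_i + s)$ (each atom must hold in every $\D_j$ at time $n_{i,j}+s$). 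One then defines $\avec{m}_i^{t_i}$ inductively, taking the unique $\avec{m}'$ such that $(\avec{n}_{i-1}^{t_{i-1}},\avec{m}_{i-1}^{t_{i-1}}) \to (\avec{n}_i^{t_i},\avec{m}')$ in $\mathfrak P \otimes \mathfrak N$, and proves by induction that $\avec{m}_i^{t_i}$ equals the vector of earliest witnessing times (as in the analogous step of Lemma~\ref{lem:sep-bounded-diamond-prod}). Since $\q$ separates $E$, no $\mathcal E \in E^-$ satisfies $\q$, hence $\avec{m}_\ell + t_\ell = \bar\infty$; and $\avec{l}(\avec{n}_\ell + t_\ell) \ne \emptyset$ because it contains the non-empty $\rho_n$ from the normal form. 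Strengthening the $\sigma_i$ to exactly $\avec{l}(\avec{n}_i^{t_i})$ (which can only shrink the set of positive examples that satisfy the query, but they all still do by the labelling) then yields $\q_\pi$ along a separating path.

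The main obstacle is handling the negative-example component cleanly: the minimality condition defining edges of $\mathfrak P \otimes \mathfrak N$ is slightly subtle under the mixed $\nxt$/$\Diamond$ semantics because a block of length $t_i$ forces $t_i$ consecutive steps in each $\mathcal E \in E^-$, so the ``earliest'' witnessing time for the prefix up to block $i$ must be computed with respect to a contiguous window of length $t_i+1$ rather than a single point. I expect the formal verification that this earliest-time invariant coincides with the unique $\avec{m}'$ produced by the edge relation to be the most delicate part, essentially mirroring and extending the argument sketched for Lemma~\ref{lem:sep-bounded-diamond-prod}.
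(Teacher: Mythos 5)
Your proposal follows essentially the same route as the paper's proof: both directions reduce to the block decomposition of a normal-form $\Qpnd$-query into $\nxt$-chains matched to vertices $\avec{n}^{t}$, with the negative-example component handled by the induction showing that $\avec{m}_i+t_i$ coincides with the vector of earliest (window-based) witnessing times, forced to $\bar\infty$ at the end by separation. The only cosmetic difference is that the paper fixes the greedy leftmost positions $\avec{n}_{i+1}$ explicitly rather than starting from arbitrary satisfying functions, which does not affect the argument.
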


\begin{proof}
$(\Rightarrow)$ Suppose $\q = \bar \rho_0 \land \Diamond (\bar \rho_1 \land ( \dots \Diamond \bar \rho_\ell) \dots )$ (in the normal form) separates $E$, where each $\bar \rho_i = \rho_i^0 \land \nxt \rho_i^1 \land \dots \land \nxt^{p_i} \rho_i^{p_i}$. If $\ell = 0$, we can further assume $\rho_0^{p_0} \neq \emptyset$. It follows
$\bar \rho_0 \subseteq \avec{l}(\bar{0}^{p_0})$ and $\bar \rho_0 \not \subseteq (t_{\mathcal E_j}(0), \dots, t_{\mathcal E_j}(p_0))$ for all $1 \leq j \leq l$.  Thus, $\avec{l}(\bar{0}+p_0) \neq \emptyset$, and $(\bar{0}^{p_0}, \bar{\infty}^{p_0})$ (so, $\infty + p_0 = \bar{\infty}$) is in $\mathfrak P \otimes \mathfrak N$, which completes the proof.

Suppose $\ell > 0$, $\bar \rho_0 \land \Diamond (\bar \rho_1 \land ( \dots \Diamond \bar \rho_{\ell-1}) \dots )$ does not separate $E$, and $\rho_\ell^{p_\ell} \neq \emptyset$.
We define $\avec{n}_0 = \bar{0}$. Assuming that $\avec{n}_i = (n_1, \dots, n_k)$ has been defined, define $\avec{n}_{i+1} = (n_1', \dots, n_k')$ where $n_j' = \min \{ n \mid n > n_j+p_i, \bar \rho_{i+1} \subseteq (t_{\mathcal D_j}(n), t_{\mathcal D_j}(n+1), \dots, t_{\mathcal D_j}(n+p_{i+1})) \}$, for $0 \leq i < \ell$. Clearly, we have $\avec{n_0}^{p_0} \to \dots \to \avec{n_\ell}^{p_\ell}$ and $\avec{l}(\avec{n_\ell}+p_\ell) \neq \emptyset$.
Then, we take $\avec{m}_0$ equal to $\avec{m}$ satisfying $m_j = 0$ if $(t_{\D_j}(0), \dots, t_{\D_j}(p_0)) \subseteq (t_{\mathcal E_j}(0), \dots, t_{\mathcal E_j}(p_0))$, and $m_j = \infty$ otherwise, for all $1 \leq j \leq l$. We set $\avec{m}_{i+1}$ to be $\avec{m}$ such that $(\avec{n}_i^{p_i}, \avec{m}_i^{p_i}) \to (\avec{n}_{i+1}^{p_{i+1}}, \avec{m}^{p_{i+1}})$. We claim that the path $(\avec{n}^{p_0}_0, \avec{m}^{p_0}_0), \dots, (\avec{n}^{p_\ell}_\ell, \avec{m}^{p_\ell}_\ell)$ is such that $\avec{m}_\ell + {p_\ell} = \bar{\infty}$. To prove the claim, set $d_{i,j} = \min \{ t \mid t \geq i + p_0 + \dots + p_{i} \text{ and } \mathcal E_j^{\leq t} \models \bar \rho_0 \land \Diamond (\bar \rho_1 \land ( \dots \Diamond \bar \rho_i) \dots )\}$ or $d_{i,j} = \infty$ if $t$ as above does not exist, for $0 \leq i \leq \ell$ and $1 \leq j \leq l$. Let $\avec{d}_i = (d_{i,1}, \dots, d_{i,l})$. It follows that $\avec{d}_\ell= \bar{\infty}$. It is easily shown by induction on $i$ that $\avec{d}_i = \avec{m}_i+p_i$ which completes the proof.

$(\Leftarrow)$ Let \eqref{eq:prod-path-nxt} be a separating path.
We will show that $\q_\pi$ separates $E$. Indeed, clearly $\D \models \q_\pi$ for each $\D \in E^+$. By induction on $i$, we show that $\avec{d}_i = \avec{m}_i+t_i$. It follows that $\mathcal E \not \models \q_\pi$ for all $\mathcal E \in E^-$ which completes the proof.
\end{proof} %

The existence of a separating path in $\mathfrak P \otimes \mathfrak N$ can be checked in time $O(\tpcp \tmcm)$. If $\cp$ and $\cm$ are bounded, given $(\avec{n}^t, \avec{m}^t)$ and $(\avec{n}'^s, \avec{m}'^s)$, we can check in logspace whether $(\avec{n}^t, \avec{m}^t)$ is the root of $\mathfrak P \times \mathfrak N$ and $(\avec{n}^t, \avec{m}^t) \to (\avec{n}'^s, \avec{m}'^s)$. So the existence of a separating path can be decided in \NL.

\medskip

We now prove Theorem~\ref{thm:algorithms} $(b)$ for $\Qpnd$-separators.
As it was already done in the proof of Lemma~\ref{th:reach-with-nxt}, we assume that the $\Qpnd$-queries are
\begin{equation}\label{eq:path-vect-form}
  \q = \bar \rho_0 \land \Diamond (\bar \rho_1 \land  \dots  \land \Diamond \bar \rho_\ell),
\end{equation}
where each $\bar \rho_i = \rho_i^0 \land \nxt \rho_i^1 \land \dots \land \nxt^{p_i} \rho_i^{p_i}$. Furthermore, we treat each query $\bar \rho_i$ as a vector $(\rho_i^0, \dots, \rho_i^{p_i})$ when convenient. The proof of the following statement is a straightforward modification of the proof of Lemma~\ref{th:bounded-subs} (in the same way as the proof of Lemma~\ref{th:reach-with-nxt} modifies the proof of Lemma~\ref{lem:sep-bounded-diamond-prod}):
\begin{lemma}\label{th:bounded-subs}
If $\q$ of the form~\eqref{eq:path-vect-form} separates $E$, then there exists a separating path~\eqref{eq:prod-path-nxt} such that $t_i = p_i$ and
$\bar \rho_i \subseteq \avec{l}(\avec{n}_i^{t_i})$, for all $i \le \ell$.
\end{lemma}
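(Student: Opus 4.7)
The plan is to re-use, almost verbatim, the construction from the $(\Rightarrow)$ direction of Lemma~\ref{th:reach-with-nxt}, and to observe that it already produces a path with the stronger properties $t_i = p_i$ and $\bar\rho_i \subseteq \avec{l}(\avec{n}_i^{t_i})$. So the work is essentially to unpack the invariants of that construction rather than to develop anything new.

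First I would define the positive-side coordinates by leftmost embedding. Set $\avec{n}_0 = \bar 0$ and inductively let $\avec{n}_{i+1} = (n_1',\dots,n_{c_+}')$ where, for each $1 \le j \le c_+$, $n_j'$ is the minimum value $> n_{i,j}+p_i$ such that $\rho_{i+1}^s \subseteq t_{\D_j}(n_j'+s)$ holds for every $s \in [0,p_{i+1}]$. Such an $n_j'$ exists because $\D_j,0 \models \q$ and we are looking at the suffix beyond the $i$th block. Choose $t_i := p_i$; then $\avec{n}_i^{t_i}$ is a legitimate node of $\mathfrak P$, and by construction the vector-valued label satisfies $\rho_i^s \subseteq \avec{l}(\avec{n}_i+s)$ for every $s\in[0,p_i]$, i.e.\ $\bar\rho_i \subseteq \avec{l}(\avec{n}_i^{t_i})$. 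The jumps $\avec{n}_i^{p_i} \to \avec{n}_{i+1}^{p_{i+1}}$ are $\mathfrak P$-edges since $\avec{n}_i + p_i < \avec{n}_{i+1}$.

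Next I would define $\avec{m}_0, \dots, \avec{m}_\ell$ exactly as in the $(\Rightarrow)$ proof of Lemma~\ref{th:reach-with-nxt}: take $\avec{m}_0$ encoding, for each negative example, whether the initial block $\bar\rho_0$ already embeds at $0$, and inductively let $\avec{m}_{i+1}^{p_{i+1}}$ be the unique successor such that $(\avec{n}_i^{p_i},\avec{m}_i^{p_i}) \to (\avec{n}_{i+1}^{p_{i+1}},\avec{m}_{i+1}^{p_{i+1}})$ in $\mathfrak P \otimes \mathfrak N$. Uniqueness of this successor is guaranteed by the minimality clause in the definition of $\mathfrak P \otimes \mathfrak N$'s edges, and its existence reduces to the labels condition $\avec{l}(\avec{n}_{i+1}^{p_{i+1}}) \subseteq \avec{l}(\avec{m}_{i+1}^{p_{i+1}})$, which holds because any atom witnessed in every positive example at the selected position is automatically present in every non-$\infty$ negative component by construction of $\avec{m}_{i+1}$.

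Finally I would establish the separation condition $\avec{l}(\avec{n}_\ell+p_\ell) \ne \emptyset$ and $\avec{m}_\ell+p_\ell = \bar\infty$. The first holds because we may assume $\rho_\ell^{p_\ell} \ne \emptyset$ (otherwise trim $\q$ to an equivalent query of smaller depth) and $\rho_\ell^{p_\ell} \subseteq \avec{l}(\avec{n}_\ell+p_\ell)$ by Step~1. The second follows from the same inductive invariant used in Lemma~\ref{th:reach-with-nxt}: letting $d_{i,j}$ denote the minimum $t \ge i+p_0+\dots+p_i$ such that $\mathcal E_j^{\le t} \models \bar\rho_0 \land \Diamond(\bar\rho_1 \land \dots \land \Diamond\bar\rho_i)$ (and $\infty$ if none exists), one shows by induction on $i$ that $\avec{m}_i+p_i = (d_{i,1},\dots,d_{i,c_-})$; since $\q$ separates $E$, no $\mathcal E_j \in E^-$ satisfies $\q$, hence $d_{\ell,j}=\infty$ for all $j$.

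The only delicate step is the third one---checking that the inductive equation $\avec{m}_i+p_i = \avec{d}_i$ is preserved when we choose the minimal successor in $\mathfrak N$; but this is exactly the invariant already verified in Lemma~\ref{th:reach-with-nxt}, so no new argument is required. The strengthened conditions $t_i=p_i$ and $\bar\rho_i \subseteq \avec{l}(\avec{n}_i^{t_i})$ are immediate byproducts of the leftmost-embedding choice for $\avec{n}_i$ and the decision to group the $\nxt$-block of $\bar\rho_i$ into a single $\mathfrak P \otimes \mathfrak N$-node rather than spread it across consecutive nodes.
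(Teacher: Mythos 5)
Your proposal is correct and follows exactly the route the paper intends: the paper's own proof consists of the one-line remark that the claim ``straightforwardly follows'' from the leftmost-embedding construction in the $(\Rightarrow)$ direction of Lemma~\ref{th:reach-with-nxt}, and what you have written is precisely that construction unpacked, with the observations that $t_i=p_i$ is built in and that $\rho_i^s\subseteq t_{\D_j}(n_{i,j}+s)$ for every positive example $\D_j$ yields $\bar\rho_i\subseteq\avec{l}(\avec{n}_i^{t_i})$. Your version is in fact more explicit than the paper's, and your appeal to normal form (rather than trimming) to get $\rho_\ell^{p_\ell}\ne\emptyset$ is consistent with the paper's standing assumption.
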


In essence, we can now use any existing algorithm for computing a shortest/longest path in the directed acyclic graph\footnote{see e.g. Sedgewick, Robert; Wayne, Kevin Daniel (2011), Algorithms (4th ed.), Addison-Wesley Professional.} $\mathfrak P \times \mathfrak N$, where the edges $(\avec{n}^t, \avec{m}^t) \to (\avec{n}'^s, \avec{m}'^s)$ have weights $s+1$ (we also need to add a new root and extra edges leading to the root $(\bar{0}^s, \avec{m}^s)$ of $\mathfrak P \otimes \mathfrak N$). Some straightforward algorithms are provided below.

For the shortest separator, we compute the closures $\mathcal C^\ell$ containing all the tuples $(\avec{n}'^t, \avec{m}'^t, w)$ such that $(\avec{n}'^t, \avec{m}'^t)$ is reachable in $\mathfrak P \times \mathfrak N$ from some $(\bar{0}^s, \avec{m}^s)$ on a path of length $\leq \ell$, while
\begin{equation}\label{eq:closure-weight}
w = \min \{ \ell' + t_1 + \dots + t_{\ell'} \mid (\bar{0}^{t_1}, \avec{m}^{t_1}), \dots, (\avec{n}'^{t_{\ell'}}, \avec{m}'^{t_{\ell'}})
\text{ is a path from some } (\bar{0}^s, \avec{m}^s) \text{ to }(\avec{n}'^t, \avec{m}'^t) \}.
\end{equation}
We compute all $\mathcal C^1, \dots, \mathcal C^\ell$, where $\ell$ is the length of the longest path in $\mathfrak P \otimes \mathfrak N$. Let $\mathcal C^i_\infty$ be the restriction of $\mathcal C^i$ to tuples $(\avec{n}'^t, \avec{m}'^t, w)$ with $\avec{m}'+t = \bar \infty$ and $\avec{l}(\avec{n}'+t) \neq \emptyset$.  We select $i^*$ and $w^*$ such that $(\avec{n}'^t, \avec{m}'^t, w^*) \in \mathcal C^{i^*}_\infty$ and $w^* = \min \{ w \mid (\avec{n}'^t, \avec{m}'^t, w) \in \mathcal C^i_\infty, \ 1 \leq i \leq \ell \}$. In essence, $w^*$ is the length of the shortest query separating $E$. To construct a $\q$ with length $w^*$, we use the sets $\mathcal C^1, \dots, \mathcal C^{i^*-1}, \mathcal C^{i^*}_\infty$. $\varkappa$ will be $\bar \rho_1 \land \Diamond (\bar \rho_2 \land ( \dots \Diamond \bar \rho_{i}) \dots )$ where each $\bar \rho_i = \rho_i^0 \land \nxt \rho_i^1 \land \dots \land \nxt^{p_i} \rho_i^{p_i}$. Take an $\avec{n}^{t_{i^*}}_{i^*}$ such that $(\avec{n}^{t_{i^*}}_{i^*}, \avec{m}'^{t_{i^*}}, w^*) \in \mathcal C^{i^*}_\infty$. We set $p_{i^*} = t_{i^*}$ and $(\rho_{i^*}^0, \dots, \rho_{i^*}^{p_{i^*}}) = \avec{l}(\avec{n}^{t_{i^*}}_{i^*})$. Now, take an $\avec{n}^{t_{i^*-1}}_{i^*-1}$ such that $(\avec{n}^{t_{i^*-1}}_{i^*-1}, \avec{m}^{t_{i^*-1}}, w_{i^*-1}) \in \mathcal C^{i^*-1}$, $(\avec{n}^{t_{i^*-1}}_{i^*-1}, \avec{m}^{t_{i^*-1}}) \to (\avec{n}^{t_{i^*}}_{i^*}, \avec{m}'^{t_{i^*}})$, and $w^* = w_{i^*-1}+t_{i^*}+1$. We set $p_{i^*-1} = t_{i^*-1}$ and $(\rho_{i^*-1}^0, \dots, \rho_{i^*-1}^{p_{i^*-1}}) = \avec{l}(\avec{n}^{t_{i^*-1}}_{i^*-1})$. We continue in this fashion to obtain $t_1$ such that $(\bar{0}^{t_1}, \avec{m}^{t_1}, t_1+1) \in \mathcal C^{1}$ and $w_2 = t_1+t_{2}+2$. We define $p_1 = t_1$ and $(\rho_{1}^0, \dots, \rho_{1}^{p_{1}}) = \avec{l}(\bar{0}^{t_{1}})$. This completes the definition of a shortest $\q$ separating $E$.

To compute a longest separator, we compute the closures $\mathcal C^i$ as in the case of the shortest separator except for using $\max$ instead of $\min$ in~\eqref{eq:closure-weight}. We proceed with computing the separator $\q$ as in the case above setting $w^*$ with $\max$ instead of $\min$.

\medskip

Finally, we prove Theorem~\ref{thm:algorithms} $(c)$ for unique most specific $\Qpnd$-separators.
First, we observe that the proof of Lemma~\ref{th:strongest-crit} is modified in a straightforward way to demonstrate that this lemma holds for $\Qpnd$-queries and separating paths~\eqref{eq:prod-path-nxt} in $\mathfrak P \otimes \mathfrak N$.
Secondly, we note that checking the existence of a separating path $\pi$ such that $\q_\pi \models \q_\nu$ for every separating path $\nu$ in $\mathfrak P \otimes \mathfrak N$ can be checked in \PTime{} by a very similar algorithm as for $\Qpd$-queries. To see that the marking of the nodes can still be done in \PTime{} for $\Qpnd$, we only need to note that checking $\q \models \q'$ for $\q, \q' \in \Qpnd$ remains in \PTime{} (see Theorem~\ref{thm:containment}). Moreover, the problem of checking whether there is a containment witness for a given $\q_{\pi'_1}$ and $\q_\pi$ for every $\pi$ in the set $\Pi$ of path~\eqref{eq:prod-path-nxt} in $\mathfrak P \otimes \mathfrak N$ ending at a given $(\avec{n}^t, \avec{m}^t)$, is in $\NL$.

At the final stage of the algorithm, we need to consider all the nodes of the form $(\avec{n}_i^{t_i}, \avec{m}_i^{t_i})$, $i = 1,\dots,l$, with $\avec{l}(\avec{n}_i+t_i) \neq \emptyset$, $\avec{m}_i+t_i = \bar{\infty}$ and marked with $\q_i \neq \varepsilon$. If there are no such, there is no unique most specific separator. Otherwise, take the set $\Pi$ of all paths leading to the $(\avec{n}_i^{t_i}, \avec{m}_i^{t_i})$. We check whether there is a containment witness for $\q_1$ and every $\q_\pi$, $\pi \in \Pi$, in which case $\q_1$ is returned as a unique most specific separator. Otherwise, we do the same for $\q_{2}$, and so on.

\end{proof}

\section{Proofs for Section~\ref{sec:conc}}
We show that for $\Q\in\{\Qpd,\Qpnd,\Qint\}$, verifying longest and shortest separators is \coNP-complete. The upper bound is trivial, and so we focus on the lower bound. We start with a general construction that shows how one can combine two example sets in a controlled way.

Let $\Q\in\{\Qpd,\Qpnd,\Qint\}$ be given and consider example set $E_{1}=(E_{1}^{+},E_{1}^{-})$ and $E_{2}=(E_{2}^{+},E_{2}^{-})$ with disjoint signatures such that all separating queries in $\sep(E_{1},\Q)$ and $\sep(E_{2},\Q)$ start with $\Diamond$.
Define $E^{+}$ by taking all
	$$
	\D_{1}\D_{2}, \quad \D_{2}\D_{1}
	$$
with $\D_{1}\in E_{1}^{+}$ and $\D_{2}\in E_{2}^{+}$
and let $E^{-} = E_{1}^{-} \cup E_{2}^{-}$.
\begin{lemma}
	$\sep(E,\Q)= \sep(E_{1},\Q) \cup \sep(E_{2},\Q)$.
\end{lemma}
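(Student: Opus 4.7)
The plan is to prove the two inclusions separately, exploiting the disjoint signatures and the ``starts with $\Diamond$'' assumption, while using the satisfying-function characterisation of query evaluation given in the appendix.

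For $(\supseteq)$, fix $\q \in \sep(E_k, \Q)$, say $k=1$. Then $\sig(\q) \subseteq \sig(E_1)$ and $\q$ starts with $\Diamond$. On any positive example $\D_1 \D_2 \in E^+$, a satisfying function for $\q$ into $\D_1$ is still a satisfying function into $\D_1 \D_2$, since appending $\D_2$ adds no atoms from $\sig(\q)$. On $\D_2 \D_1 \in E^+$, use that $\q$ starts with $\Diamond$: shift a satisfying function for $\q$ into $\D_1$ by $\max \D_2 + 1$ so that all positions land in the $\D_1$-part; the initial $\Diamond$ absorbs the shift. On negatives, $\D \in E_1^-$ gives $\D,0 \not\models \q$ by assumption, while for $\D \in E_2^-$ we have $\sig(\q) \cap \sig(\D) = \emptyset$ and, since $\q$ is a non-trivial separator of $E_1$, its normal-form last conjunct $\rho_n \neq \top$ contains an $E_1$-atom that cannot be matched in $\D$. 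The case $k=2$ is symmetric.

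For $(\subseteq)$, take $\q \in \sep(E,\Q)$ in normal form. The first step is to show $\sig(\q) \subseteq \sig(E_1)$ or $\sig(\q) \subseteq \sig(E_2)$. Suppose otherwise, so $\q$ has some conjunct $\rho_i$ containing an $E_1$-atom and some conjunct $\rho_j$ containing an $E_2$-atom, with (WLOG) $i \leq j$. Pick any $\D_1 \in E_1^+$ and $\D_2 \in E_2^+$. Using Lemma~\ref{lem:containment}-style satisfying functions (Appendix definition), let $g$ satisfy $\q$ in $\D_1 \D_2$ and $f$ satisfy $\q$ in $\D_2 \D_1$. Since $E_1$- and $E_2$-atoms only occur in the respective half of each concatenation, we obtain $g(i) \leq \max \D_1 < g(j)$ from $\D_1\D_2$, but $f(j) \leq \max \D_2 < f(i)$ from $\D_2\D_1$, which contradicts the monotonicity of $f$ (since $i\leq j$). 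Hence $\q$ uses atoms from only one side.

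Assume WLOG $\sig(\q) \subseteq \sig(E_1)$; it remains to show $\q \in \sep(E_1,\Q)$. For any $\D_1 \in E_1^+$ and any $\D_2 \in E_2^+$, the satisfying function $g$ for $\q$ into $\D_1 \D_2$ has $g(n) \leq \max \D_1$ because $\rho_n \neq \top$ is an $E_1$-conjunction and so can only be matched in the $\D_1$-part; monotonicity then yields $g(i) \leq \max \D_1$ for all $i$, and the $\nxt$-constraints are preserved inside the $\D_1$-prefix, so $g$ witnesses $\D_1,0 \models \q$. Negative examples in $E_1^-$ are already in $E^-$, so $\q$ fails on them. Thus $\q \in \sep(E_1,\Q)$.

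The main obstacle is the two-sided witness argument in the single-signature step: one must be careful that the path-query satisfying function respects both monotonicity and the alternating placement of $E_1$- and $E_2$-atoms in $\D_1\D_2$ versus $\D_2\D_1$; the rest is bookkeeping. The argument works uniformly for $\Qpd$, $\Qpnd$, and $\Qint$, with the $\Qint$ case being a special (and easier) instance since all operators after the first $\Diamond$ are $\nxt$.
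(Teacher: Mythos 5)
Your proof is correct and takes essentially the same route as the paper's: the $\supseteq$ direction rests on the initial $\Diamond$ absorbing the prepended block (plus the disjoint signatures killing the negatives from the other side), and the $\subseteq$ direction rests on the observation that any separator of $E$ can use atoms from only one of the two signatures, which you derive from the two orderings $\D_1\D_2$ and $\D_2\D_1$. The paper merely asserts both of these facts; your write-up supplies the missing details, and they check out.
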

\begin{proof}
	$\sep(E,\Q)\supseteq \sep(E_{1},\Q) \cup \sep(E_{2},\Q)$ follows from the condition
	that queries in $\sep(E_{1},\Q)$ and $\sep(E_{2},\Q)$ start with $\Diamond$.
	
	To show that $\sep(E,\Q)\subseteq \sep(E_{1},\Q) \cup \sep(E_{2},\Q)$ observe that it follows from the definition of $E^{+}$ that every
	query in $\sep(E,\Q)$ either only uses symbols from $E_{1}$ or only symbols from $E_{2}$. Now the inclusion is trivial.
\end{proof}
\begin{theorem}
	 Let $\Q\in\{\Qpd,\Qpnd,\Qint\}$. Then verification of longest and shortest separating queries is \coNP-hard.
\end{theorem}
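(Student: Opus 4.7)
The plan is to reduce UNSAT to both longest and shortest separator verification using the combining construction stated just before the theorem. Given a CNF $\varphi$, invoke Theorem~\ref{lem:consnew} to obtain, in a signature $\sigma_1$, an example set $E_1$ whose separators in $\Q$ are in bijection with the satisfying assignments of $\varphi$. The separators produced by those reductions (queries of the form $\Diamond\rho_{\mathfrak a}$ for $\Qpd,\Qpnd$ and the interval queries $\q_{\mathfrak a}$ for $\Qint$) share a common temporal depth $\ell_1$ and all begin with $\Diamond$, as required by the combining lemma.

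Next, in a fresh signature $\sigma_2=\{B\}$ disjoint from $\sigma_1$, build a small example set $E_2$ admitting a distinguished separator $\q^\star$ of a chosen length $\ell_2$ (with $\ell_2<\ell_1$ for the longest case and $\ell_2>\ell_1$ for the shortest case). Take, e.g., $E_2^+=\{\{B(1),\dots,B(\ell_2)\}\}$, $E_2^-=\{\{B(1),\dots,B(\ell_2-1)\}\}$, and $\q^\star=\Diamond(B\wedge\Diamond(B\wedge\dots\wedge\Diamond B))$ of depth $\ell_2$. One verifies that every separator in $\sep(E_2,\Q)$ starts with $\Diamond$, has temporal depth exactly $\ell_2$ (bounded above by $\max\D$ for $\D\in E_2^+$ and below by the need to falsify the negative example), and that $\q^\star\in\sep(E_2,\Q)$.

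Applying the combining lemma to $E_1$ and $E_2$ yields an example set $E$ with $\sep(E,\Q)=\sep(E_1,\Q)\cup\sep(E_2,\Q)$, where all queries from $\sep(E_1,\Q)$ have length $\ell_1$ and all queries from $\sep(E_2,\Q)$ have length $\ell_2$. Consequently $\q^\star$ is a longest (resp.\ shortest) separator of $E$ if and only if $\sep(E_1,\Q)=\emptyset$, that is, iff $\varphi$ is unsatisfiable. The mapping $\varphi\mapsto(E,\q^\star)$ is polynomial-time computable, giving the desired coNP-hardness; the matching upper bound is straightforward since a witness for the complement is any separator of $E$ strictly longer (resp.\ shorter) than the input query, which can be verified in polynomial time by Corollary~\ref{cor:most}.

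The principal obstacle is checking that $E_2$ really behaves as advertised for each of the three query classes: one needs that \emph{every} separator of $E_2$ in $\Q$ has length exactly $\ell_2$, not merely that $\q^\star$ does, since otherwise a shorter/longer competitor from $\sep(E_2,\Q)$ could spoil the reduction. This is where the bounded signature $\{B\}$ and the extremal sizes of $E_2^+$ and $E_2^-$ are exploited: any shorter $\nxt\Diamond$-path query is also satisfied by $E_2^-$, and no longer query can hold on $E_2^+$. Disjointness of $\sigma_1$ and $\sigma_2$ then precludes any mixed separator via the combining lemma, and the uniform template covers $\Qpd$, $\Qpnd$, and $\Qint$ simultaneously.
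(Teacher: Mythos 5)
Your overall strategy is exactly the paper's: build $E_1$ from Theorem~\ref{lem:consnew} so that its separators correspond to satisfying assignments and all have a common depth $\ell_1$, build an auxiliary $E_2$ in a disjoint signature whose separators all have depth $\ell_2$ on the other side of $\ell_1$, and glue them with the combining lemma so that the given $E_2$-separator is longest/shortest iff $\varphi$ is unsatisfiable. However, there is a concrete error in the step you yourself flag as the principal obstacle. For $\Q=\Qpnd$ your claim that \emph{every} separator of $E_2^+=\{\emptyset\{B\}^{\ell_2}\}$, $E_2^-=\{\emptyset\{B\}^{\ell_2-1}\}$ starts with $\Diamond$ is false: the query $\nxt(B\wedge\nxt(B\wedge\dots\wedge\nxt B))$ of depth $\ell_2$ is in normal form, holds at $0$ in $E_2^+$ (via $f(i)=i$) and fails in $E_2^-$ (it needs timestamp $\ell_2$), so it separates $E_2$ and begins with $\nxt$. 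Hence the hypothesis of the combining lemma is violated for $\Qpnd$, and the equality $\sep(E,\Q)=\sep(E_1,\Q)\cup\sep(E_2,\Q)$ you rely on does not hold as stated (the $\nxt$-prefixed separators of $E_2$ do not survive concatenation with the $E_1$-examples).

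The damage is repairable, but a repair is needed. Either (i) observe that only the inclusions $\sep(E_1,\Q)\cup\{\q\in\sep(E_2,\Q)\mid \q \text{ starts with }\Diamond\}\subseteq\sep(E,\Q)\subseteq\sep(E_1,\Q)\cup\sep(E_2,\Q)$ are actually used, and these suffice because $\q^\star$ starts with $\Diamond$ and every member of $\sep(E_2,\Q)$ (whether or not it survives) has depth exactly $\ell_2$; or (ii) modify $E_2$, e.g.\ add the shifted positive example $\emptyset\emptyset\{B\}^{\ell_2}$, which kills all $\nxt$-prefixed separators while preserving the property that the remaining separators are exactly the depth-$\ell_2$ queries over $\{B\}$ beginning with $\Diamond$. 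The paper sidesteps this by simply \emph{postulating} an $E_2$ with the required properties rather than exhibiting one, so your attempt to be concrete is welcome, but as written the verification for $\Qpnd$ does not go through. (Minor further points: the coNP upper bound witness is checked by plain query evaluation, not Corollary~\ref{cor:most}, and one should note that the guessed competitor has depth bounded by $\min\{\max\D\mid\D\in E^+\}$ so it is of polynomial size.)
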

\begin{proof}
	By reduction from SAT for the complement. Assume a propositional formula $\varphi$ is given. Construct $E_{1}$ as in the proof of Point~2 of Theorem~\ref{lem:consnew} such that there is a separating query in $\Q$ if $\varphi$ is satisfiable. Assume all separating queries have length at least 2 and length at most $n$.
	
	First let $E_{2}$ be any example set in a signature disjoint from the signature of $E_{1}$ and such that all separators have the form $\Diamond \rho$ with $\rho$ a set of atoms. Assume some such $\q$ separates $E_{2}$. Then $\q$ is not a longest separator of $E$ (as defined above) iff $\varphi$ is satisfiable.
	
	Now let $E_{2}$ be any example set in a signature disjoint from the signature of $E_{1}$ and such that all separators start with $\Diamond$ and have length $n+1$. Assume some such $\q$ separates $E_{2}$. Then $\q$ is not a shortest separator of $E$ (as defined above) iff $\varphi$ is satisfiable.
\end{proof}

\end{appendix}	
		
\end{document}